\documentclass[12pt,technote, onecolumn, draft]{IEEEtran}

\usepackage{amsmath,amsfonts}
\usepackage{algorithmic}
\usepackage{array}
\usepackage[caption=false,font=normalsize,labelfont=sf,textfont=sf]{subfig}
\usepackage{textcomp}
\usepackage{stfloats}
\usepackage{url}
\usepackage{verbatim}
\usepackage{graphicx}
\usepackage{amsmath}    
\usepackage{amsthm} 
\usepackage{mathrsfs}
\usepackage{amssymb}
\usepackage{cite}
\usepackage{algorithm}        
\usepackage{algorithmic}
\usepackage{amsmath,amssymb}
\usepackage{tikz}
\usetikzlibrary{positioning, arrows.meta}
\usepackage{caption} 
\usepackage{hyperref}
\usepackage{cleveref}
 
\newtheorem{theorem}{Theorem}[section]
\newtheorem{corollary}[theorem]{Corollary}
\newtheorem{definition}{Definition}
\newtheorem{proposition}{Proposition}
\newtheorem{lemma}{Lemma}
\newtheorem{remark}{Remark}
\newtheorem{example}{Example}

\ifCLASSINFOpdf
\else
\fi

\begin{document}

\title{Secure Network Function Computation for General Target and Security  Functions$^{\dag}$}

\author{Qin Zhou, Fang-Wei Fu
\IEEEcompsocitemizethanks{\IEEEcompsocthanksitem Qin Zhou and Fang-Wei Fu are with Chern Institute of Mathematics and LPMC, Nankai University, Tianjin 300071, China, Emails: qinzhou@mail.nankai.edu.cn, fwfu@nankai.edu.cn.
}
\thanks{$^\dag$This research is supported by the National Key Research and Development Program of China (Grant No. 2022YFA1005000), the National Natural Science Foundation of China (Grant No.  62371259), the Fundamental Research Funds for the Central Universities of China (Nankai University), and the Nankai Zhide Foundation.}
}
\maketitle

\begin{abstract}
Secure network function computation is a critical research direction in network coding, which aims to ensure that the target function is correctly computed at the sink node while preventing the wiretapper from obtaining any information about the security function. In this paper, we focus on the general secure network function computation model, where the target function $f$ and the security function $\zeta$ are arbitrary, and the wiretapper can eavesdrop on any subset of edges with size at most a given security level. Using information-theoretic techniques, we establish a nontrivial upper bound on the secure computing capacity, which is applicable to arbitrary networks, arbitrary target and security functions, and arbitrary security levels. This upper bound is shown to degenerate to the existing bounds in the literature when the target and security functions are specific forms. Furthermore, we consider two specific models: one where the target function is vector-linear and the security function is the identity function, and another where both functions are vector-linear. For the former, we derive a simplified form of the upper bound on the secure computing capacity via order-theoretic methods and propose an efficient algorithm to compute this bound with linear time complexity in the number of network edges. For the latter, we characterize the equivalent conditions for the computability and security of linear secure network codes, develop two constructive schemes for such codes, and derive an upper bound on the minimal finite field size required for the constructions, thereby obtaining a nontrivial lower bound on the secure computing capacity. 
\end{abstract}

\begin{IEEEkeywords}
Secure network function computation, secure computing capacity, network coding, linear function-computing secure network coding, edge-subset lattice.
\end{IEEEkeywords}

\section{Introduction}
Network function computation arises as a natural extension of network coding, a groundbreaking paradigm introduced by Ahlswede et al. \cite{ANLY2000} that revolutionized data transmission by allowing intermediate nodes to linearly combine incoming data flows. Unlike traditional network coding, which focuses on the error-free recovery of source messages, network function computation aims to enable sink nodes to accurately compute predefined target functions of messages from multiple sources over general networks \cite{ARF2011,ARF2013,ARF2014,HTY2018,GYY2019,LX2022,ZF2024,ZGY2024,GZ2024}. This generalization has found widespread applications in important fields such as sensor network data fusion \cite{GK2005}, distributed storage systems, large-scale industrial data processing \cite{DP2016}, and machine learning \cite{VJ2016}.

As distributed computing and large-scale data transmission become increasingly pervasive, security has become an indispensable requirement for network function computation systems. To address this concern, information-theoretic security has been widely studied as a critical security measure, as it guarantees information-theoretic privacy independent of the eavesdropper’s computational power. Its theoretical foundations can be traced to several pioneering works, including the Shannon cipher system \cite{S1949}, secret sharing \cite{BG1979,S1979}, and Wiretap Channel II \cite{OW1985}. These classic frameworks have laid a solid theoretical basis for secure data transmission and computation in networked systems. Cai and Yeung \cite{CY2011} first studied the information-theoretic security problem in networks with eavesdroppers, which is known as secure network coding. Its core requirement is that the sink node can reliably recover source messages, while the wiretapper obtains no meaningful information about them. Building on this line of work, Guang et al. \cite{GBY2024} proposed the problem of secure network function computation, whose core requirement is to ensure that the sink node computes the target function with zero error, while preventing the wiretapper from obtaining any information about the protected security function even after eavesdropping on any edge set in the wiretap collection. This work has established a fundamental framework for subsequent research.

\subsection{Related Works}

When the wiretap collection is empty, secure network function computation reduces to classical network function computation, which has been extensively studied \cite{ARF2011,ARF2013,ARF2014,HTY2018,GYY2019,LX2022,ZF2024,ZGY2024,GZ2024}. In this model, a single sink computes a target function of source messages generated at multiple source nodes over a directed acyclic network. When the target function is the identity function, the problem further reduces to the classical network coding problem \cite{Y2008,YLC2005,FS2007,2FS2007,HL2008}. Appuswamy et al. \cite{ARF2011} investigated the computing capacity, defined as the maximum average number of times the target function can be computed error-free per network use, and derived a cut-set upper bound constrained by network topology and target function properties. Huang et al. \cite{HTY2018} proposed a new cut-set upper bound applicable to arbitrary target functions and network topologies, and showed that this bound is tight for computing arbitrary functions over multi-edge tree networks and for computing the identity or algebraic sum functions over arbitrary networks. Furthermore, Guang et al. \cite{GYY2019} established an improved upper bound on computing capacity using cut-set strong partitioning, which also applies to arbitrary functions and network topologies. Considerable attention has also been devoted to network function computation with vector linear target functions \cite{ARF2011,ARF2013,LX2022,ZF2024}. However, exactly characterizing the computing capacity remains extremely difficult for general functions and networks. Even for linear target functions, only a small class of special cases has been solved, while most others remain open \cite{ARF2014}. Another important direction considers link errors, leading to the robust network function computation problem. Wei et al. \cite{WXG2023} studied the robust computing capacity, i.e., the maximum average number of times the target function can be computed with zero error per network use when at most $\tau$ links are corrupted by errors, and derived corresponding cut-set upper bounds. They also defined the minimum distance of linear network codes for linear target functions to characterize error-correction capability, and established a Singleton-type bound for this minimum distance.

The study of secure network function computation is rooted in both information-theoretic security and the framework of network function computation. A general model is formulated as follows: in a directed acyclic single-sink network $\mathcal{N}$, the sink node is required to repeatedly compute a target function $f$ with zero error from messages generated at multiple sources. Meanwhile, an eavesdropper can wiretap any edge set in a given collection $\mathcal{W}$, but must not learn any information about a security function $\zeta$ of the source messages. When both $f$ and $\zeta$ are identity functions, the model reduces to classical secure network coding \cite{CY2011,ESS2012,GY2018,SK2011,GYF2020}. Guang et al. \cite{GBY2024} formalized secure network function computation and introduced the secure computing capacity, defined as the maximum average number of times the target function $f$ can be securely computed per network use under given wiretap sets and security functions. Their work focused on scalar linear target functions, wiretap sets consisting of all edge subsets of size at most $r$, and the identity security function, and derived nontrivial upper and lower bounds on the secure computing capacity. Subsequently, Bai et al. \cite{arxiv2025secure} extended the setup to scalar linear target functions and security functions identical to the target function, also deriving nontrivial bounds. In this paper, we study secure network function computation without restricting to linear functions.

\subsection{Contributions and Organization of the Paper}
This paper builds upon the secure network function computation model $(\mathcal{N},f,\mathcal{W},\zeta)$ proposed in \cite{GBY2024} and \cite{arxiv2025secure}, where the wiretap collection $\mathcal{W}$ consists of all edge subsets of size at most $r$, referred to as the security level.
From an information-theoretic viewpoint, our main goal is to characterize the secure computing capacity, defined as the maximum average number of times the target function $f$ can be computed with zero error for one use of the network, while ensuring that the wiretapper learns nothing about the security function from any single wiretap set.
Characterizing this capacity is highly challenging: even for classical network function computation without security constraints, the exact capacity remains open in general \cite{GYY2019}. Therefore, this paper focuses on establishing nontrivial upper and lower bounds for the secure computing capacity.

Our key contributions are summarized as follows:
First, we derive a nontrivial upper bound on the secure computing capacity that applies to arbitrary network topologies, arbitrary target and security functions, and arbitrary security levels. This bound generalizes those given in \cite{GBY2024} and \cite{arxiv2025secure}. Since the upper bound does not have a closed-form expression, we further develop closed-form upper and lower bounds for it. In particular, we identify several special cases where the secure computing capacity can be exactly determined, and obtain a nontrivial upper bound on the security level that allows positive-rate secure function computation.

Second, we consider vector-linear target functions and the identity security function. The upper bound on the secure computing capacity depends on network topology, security level, and the target function. For large-scale networks, high security levels, and target functions with dense coefficient matrices, its computational complexity becomes prohibitive. To overcome this issue, we use lattice-theoretic methods to compare the contributions of different edge subsets, yielding an equivalent form of the bound with significantly reduced complexity. Based on existing algorithms, we further develop an efficient algorithm to compute this bound with time complexity linear in the number of edges.

Finally, we study linear secure network coding when both the target and security functions are vector-linear. For given linear functions over $\mathbb{F}_{q}$, we propose two general constructions of linear secure network codes that compute the target function while protecting the security function. Using these constructions, we establish a lower bound on the secure computing capacity. Motivated by practical applications, we also analyze the required field size for the constructed codes and provide a sufficient condition for their existence. An illustrative example is included to demonstrate our code designs.

The remainder of the paper is organized as follows. Section~\ref{sec:preliminaries} introduces notation and definitions, formalizes the secure network function computation model and function-computing secure network codes, and reviews relevant lattice theory. Section~\ref{sec:upper_bound} establishes the general upper bound on the secure computing capacity. Section~\ref{sec:algorithm} presents the efficient algorithm for computing the upper bound. Section~\ref{sec:linear_codes} focuses on linear secure network codes, proposes two construction frameworks, and derives the lower bound on the secure computing capacity and an upper bound on the minimum required field size for our construction. Section~\ref{sec:conclusion} concludes the paper and discusses future research directions.

\section{Preliminaries}\label{sec:preliminaries}

\subsection{Secure Network Function Computation Model}

Let $\mathcal{G}=(\mathcal{V}, \mathcal{E})$ denote a directed acyclic graph, where $\mathcal{V}$ is a finite set of nodes and $\mathcal{E}$ is a finite set of edges. Multiple edges between two nodes are permitted. For each edge $e \in \mathcal{E}$, $\operatorname{tail}(e)$ and $\operatorname{head}(e)$ denote its tail node and head node, respectively. For any node $u \in \mathcal{V}$, the sets of input edges and output edges of $u$ are defined as $\operatorname{In}(u)=\{e \in \mathcal{E}: \operatorname{head}(e)=u\}$ and $\operatorname{Out}(u)=\{e \in \mathcal{E}: \operatorname{tail}(e)=u\}$, respectively.

A sequence of edges $(e_{1}, e_{2}, \cdots, e_{m})$ in $\mathcal{G}$ is referred to as a path from node $u$ (or edge $e_{1}$) to node $v$ (or edge $e_{m}$) if it satisfies $\operatorname{tail}(e_{1})=u$, $\operatorname{head}(e_{m})=v$, and $\operatorname{tail}(e_{i})=\operatorname{head}(e_{i-1})$ for $i=2,3, \cdots, m$. In particular, a single edge $e$ is a path from $\operatorname{tail}(e)$ to $\operatorname{head}(e)$. We write $d \to e$ (or $\operatorname{tail}(d) \to \operatorname{head}(e)$, $\operatorname{tail}(d) \to e$, $d \to \operatorname{head}(e)$) if there exists a path from edge $d$ to edge $e$.

For two disjoint subsets $U$ and $V$, where $U$ ($V$) can be either a node set or an edge set, an edge subset $C \subseteq \mathcal{E}$ is called a cut separating $V$ from $U$ if, for each pair $(u,v) \in U \times V$, no path exists from $u$ to $v$ after removing all edges in $C$. In particular, if $U=\{u\}$ and $V=\{v\}$ are two disjoint singleton subsets, a cut separating $V$ from $U$ is also called a cut separating $v$ from $u$. The capacity of a cut is defined as its size, i.e., the number of edges it contains. A cut $C$ separating $V$ from $U$ is called a minimum cut separating $V$ from $U$ if there does not exist a cut $C'$ separating $V$ from $U$ such that $|C'| < |C|$.
Its capacity is called the minimum cut capacity separating $V$ from $U$, denoted by $\operatorname{mincut}(U, V)$. The set of all minimum cuts separating $V$ from $U$ is denoted as $\operatorname{MINCUT}(U, V )$. For $U=\{u\}$ and $V=\{v\}$, this is simplified to $\operatorname{mincut}(u, v)$ and $\operatorname{MINCUT}(u, v)$. 

Let $S \subset \mathcal{V}$ be the set of source nodes $\sigma_{1}, \sigma_{2}, \cdots, \sigma_{s}$, and let $\rho \in \mathcal{V}\setminus S$ be the single sink node, where $\operatorname{In}(\sigma_{i})=\emptyset$ (i.e., source nodes have no input edges) and $\operatorname{Out}(\rho)=\emptyset$ (i.e., the sink node has no output edges) for all $i=1,2, \cdots, s$. We assume there exists a directed path from every node $u \in \mathcal{V}$ to $\rho$. 
The graph $\mathcal{G}$, together with the source set $S$ and the sink $\rho$, forms a network $\mathcal{N}$, denoted by $\mathcal{N}=(\mathcal{G}, S, \rho)$.

Let $\mathcal{A}_{i}$ ($1 \leq i \leq s$), $\mathcal{O}$, $\mathcal{B}$, and $\mathcal{Q}$ be finite alphabets. We assume that a symbol from $\mathcal{B}$ can be reliably transmitted on each edge per use, i.e., we take the capacity of each edge to be $1$ with respect to the alphabet $\mathcal{B}$. Let $M_{i}$ denote the random variable representing the information source at the $i$-th source node $\sigma_{i}$, which follows a uniform distribution over $\mathcal{A}_{i}$. All $M_{i}$ ($1 \leq i \leq s$) are mutually independent. Let $M_{S}=(M_{1}, M_{2}, \cdots, M_{s})$. For positive integers $\ell$ and $n$, let $M_{i,j}$ ($1 \leq j \leq \ell$) be independent and identically distributed (i.i.d.) random variables with the generic random variable $M_{i}$, which is sequentially generated by the $i$-th source node $\sigma_{i}$. Let $\mathbf{M}_{i}=(M_{i,1}, M_{i,2}, \cdots, M_{i,\ell})$ be the source message generated by $\sigma_{i}$. Further let $\mathbf{M}_{S}=(\mathbf{M}_{1}, \mathbf{M}_{2}, \cdots, \mathbf{M}_{s})$ be the source message vector generated by $S$.

Let $f: \prod_{i=1}^{s} \mathcal{A}_{i} \to \mathcal{O}$ be a nonconstant target function, which is required to be computed with zero error at the sink node $\rho$. Without loss of generality, we assume the $i$-th argument of $f$ is generated at the $i$-th source node $\sigma_{i}$. When computing $f$ $\ell$ times using the network $n$ times (i.e., transmitting at most $n$ symbols from $\mathcal{B}$ on each edge), the $\ell$ values of $f$
\[f\left(\mathbf{M}_{S}\right) \triangleq\left(f\left(M_{1,j}, M_{2,j}, \cdots, M_{s,j}\right): j=1,2, \cdots, \ell\right)\]
must be computed at $\rho$ with zero error.

Let $\mathcal{W}$ be a collection of edge subsets, where each $W \in \mathcal{W}$ is called a wiretap set. Let $\zeta: \prod_{i=1}^{s} \mathcal{A}_{i} \to \mathcal{Q}$ be a nonconstant security function that specifies the information to be protected from the eavesdropper. When $f(\mathbf{M}_{S})$ is computed at $\rho$, the $\ell$ values of $\zeta$
\[\zeta(\mathbf{M}_{S})\triangleq \left( \zeta (M_{1,j},M_{2,j},\cdots ,M_{s,j}): j=1,2,\cdots ,\ell \right)\]
must be protected from an eavesdropper who can access any one, but not more than one, wiretap set $W \in \mathcal{W}$. The collection of wiretap sets $\mathcal{W}$ and the security function $\zeta$ are known to all source and sink nodes, while the specific wiretap set accessed by the eavesdropper remains unknown. The formal model of secure network function computation is denoted by $(\mathcal{N}, f, \mathcal{W}, \zeta)$, where $\mathcal{N}$ is the network, $f$ is the target function, $\mathcal{W}$ is the collection of wiretap sets, and $\zeta$ is the security function. 

\subsection{Function-Computing Secure Network Code}

In this subsection, we present function-computing secure network codes for the secure network function computation model $(\mathcal{N}, f, \mathcal{W},\zeta)$. Randomizing source messages is a highly effective technique for ensuring information-theoretic security. To
this end, let $\mathbf{K}_{i}$ ($1 \leq i \leq s$) be a random variable (referred to as a key) available to the $i$-th source node $\sigma_{i}$, which follows a uniform distribution over a finite alphabet $\mathcal{K}_{i}$. Let $\mathbf{K}_{S}=(\mathbf{K}_{1}, \mathbf{K}_{2}, \cdots, \mathbf{K}_{s})$ denote the key vector. Assume that all keys $\mathbf{K}_{i}$ and source messages $\mathbf{M}_{i}$ are mutually independent.

An $(\ell, n)$ function-computing secure network code $\hat{C}$ for $(\mathcal{N}, f, \mathcal{W},\zeta)$ is defined as follows. Let $\mathbf{m}_{i} \in \mathcal{A}_{i}^{\ell}$ and $\mathbf{k}_{i}\in \mathcal{K}_{i}$ be arbitrary realizations of the source message $\mathbf{M}_{i}$ and the key $\mathbf{K}_{i}$, respectively, for $i=1,2,\cdots,s$. Let $\mathbf{m}_{S}=(\mathbf{m}_{1}, \mathbf{m}_{2}, \cdots, \mathbf{m}_{s})$ and $\mathbf{k}_{S}=(\mathbf{k}_{1}, \mathbf{k}_{2}, \cdots, \mathbf{k}_{s})$ be the corresponding realizations of $\mathbf{M}_{S}$ and $\mathbf{K}_{S}$. The code $\hat{C}$ consists of a local encoding function $\hat{\theta}_{e}$ for each edge $e\in \mathcal{E}$, where
\begin{align}\label{eq:local encoding function}
\hat{\theta}_{e}:\begin{cases} 
\mathcal{A}_{i}^{\ell} \times \mathcal{K}_{i} \to \mathcal{B}^{n}, & \text{if } \operatorname{tail}(e)=\sigma_{i} \text{ for some } i, \\
\prod_{d \in \operatorname{In}(\operatorname{tail}(e))} \mathcal{B}^{n} \to \mathcal{B}^{n}, & \text{otherwise};
\end{cases}
\end{align}
and a decoding function $\hat{\varphi}: \prod_{\operatorname{In}(\rho)} \mathcal{B}^{n} \to \mathcal{O}^{\ell}$ at the sink node $\rho$, which is used to compute $f$ with zero error. Let $\mathbf{y}_{e} \in \mathcal{B}^{n}$ be the message transmitted on edge $e$ using the code $\hat{C}$ under the source message $\mathbf{m}_{S}$ and the key $\mathbf{k}_{S}$. With the
encoding mechanism described in \eqref{eq:local encoding function}, $\mathbf{y}_{e}$ is a function of $\mathbf{m}_{S}$ and $\mathbf{k}_{S}$, denoted by $\hat{g}_{e}(\mathbf{m}_{S},\mathbf{k}_{S})$, which is obtained by recursively applying the local encoding functions $\hat{\theta}_{e}$. Formally,
\[\hat{g}_{e}\left(\mathbf{m}_{S}, \mathbf{k}_{S}\right)=
\begin{cases} 
\hat{\theta}_{e}\left(\mathbf{m}_{i}, \mathbf{k}_{i}\right), & \text{if } \operatorname{tail}(e)=\sigma_{i} \text{ for some } i, \\
\hat{\theta}_{e}\left(\hat{g}_{\operatorname{In}(u)}\left(\mathbf{m}_{S}, \mathbf{k}_{S}\right)\right), & \text{otherwise},
\end{cases}\]
where $u=\operatorname{tail}(e)$, and $\hat{g}_{E}(\mathbf{m}_{S}, \mathbf{k}_{S})=(\hat{g}_{e}(\mathbf{m}_{S}, \mathbf{k}_{S}): e \in E)$ for any edge subset $E \subseteq \mathcal{E}$. The function $\hat{g}_{e}$ is called the global encoding function of edge $e$ for the code $\hat{C}$.

An $(\ell, n)$ secure network code $\hat{C}=\{\hat{\theta}_{e}: e \in \mathcal{E}; \hat{\varphi}\}$ is admissible if it satisfies the following two conditions:
\begin{itemize}
\item \textbf{Computability Condition:} The sink node $\rho$ computes $f$ with zero error, i.e., for all $\mathbf{m}_{S} \in \prod_{i=1}^{s} \mathcal{A}_{i}^{\ell}$ and $\mathbf{k}_{S} \in \prod_{i=1}^{s} \mathcal{K}_{i}$,
\begin{align}\label{eq:computability}
\hat{\varphi}\left(\hat{g}_{\operatorname{In}(\rho)}\left(\mathbf{m}_{S}, \mathbf{k}_{S}\right)\right)=f\left(\mathbf{m}_{S}\right);
\end{align}
\item \textbf{Security Condition:} For any wiretap set $W \in \mathcal{W}$, the transmitted messages on $W$ are independent of $\zeta(\mathbf{M}_{S})$, i.e.,
\begin{align}\label{eq:security condition}
I(Y_{W}; \zeta(\mathbf{M}_{S}))=0,
\end{align}
where $Y_{W}=(Y_{e}: e \in W)$ and $Y_{e} \triangleq \hat{g}_{e}(\mathbf{M}_{S}, \mathbf{K}_{S})$ is the random vector transmitted on edge $e$.
\end{itemize}

The secure computing rate of an admissible $(\ell, n)$ code $\hat{C}$ is defined as
\[R(\hat{C}) \triangleq \frac{\ell}{n},\] 
which represents the average number of times $f$ can be computed with zero error for one use of the network $\mathcal{N}$, while no information about the security function $\zeta$ is leaked to the wiretapper who can access at most one wiretap set $W \in \mathcal{W}$. A nonnegative real number $R$ is achievable if, for any $\epsilon>0$, there exists an admissible $(\ell, n)$ code $\hat{C}$ such that 
\[R(\hat{C})>R-\epsilon.\]
The secure computing rate region for the model $(\mathcal{N},f,\mathcal{W},\zeta)$ is the set of all achievable rates, i.e.,
\[
\mathfrak{R}(\mathcal{N}, f, \mathcal{W},\zeta)=\{R: R \text{ is achievable for } (\mathcal{N},f,\mathcal{W},\zeta)\}.
\]
The secure computing capacity for the model $(\mathcal{N},f,\mathcal{W},\zeta)$ is the maximum value of the rate region, i.e.,
\[\hat{\mathcal{C}}(\mathcal{N}, f, \mathcal{W},\zeta) \triangleq \max \mathfrak{R}(\mathcal{N}, f, \mathcal{W},\zeta).\]

In this paper, we consider a special collection of wiretap sets
\[ \mathcal{W}_r \triangleq \{ W \subseteq \mathcal{E} : 0 \leq |W| \leq r \}, \]
which means that the wiretapper can eavesdrop on the information transmitted over any subset of edges in the network with size no more than $r$. Here, $r$ is called the security level. Notably, the empty set $\emptyset$, which corresponds to the scenario where no edges are eavesdropped, is included in $\mathcal{W}_r$ as a wiretap set of size $0$.

\subsection{Edge-Subset Lattice}

This subsection introduces some basic concepts and properties of lattices in order theory (cf. \cite{Davey1990}, \cite{Birkhoff1967}), which provides a useful tool for analyzing the structure of edge subsets and their interactions in networks.

\begin{definition}\label{def:poset}
Let $P$ be a nonempty set. A non-strict partial order ``$\leq$" on $P$ is a binary relation that satisfies the following three conditions for all $a, b, c \in P$:
\begin{itemize}
\item (Reflexivity) $a \leq a$;
\item (Antisymmetry) If $a \leq b$ and $b \leq a$, then $a=b$;
\item (Transitivity) If $a \leq b$ and $b \leq c$, then $a \leq c$.
\end{itemize}
\end{definition}

The set $P$ equipped with the partial order ``$\leq$" is called a partially ordered set, denoted by $(P, \leq)$. For a partially ordered set $(P, \leq)$ and a subset $Q \subseteq P$, an element $a \in P$ is called a lower bound of $Q$ if $a \leq x$ for all $x \in Q$. Similarly, an element $a \in P$ is called an upper bound of $Q$ if $x \leq a$ for all $x \in Q$. The sets of all lower and upper bounds of $Q$ are denoted by $Q^{l}$ and $Q^{u}$, respectively, i.e.,
\[
Q^{l}=\{a \in P: a \leq x, \forall x \in Q\},
\]
and
\[
Q^{u}=\{a \in P: x \leq a, \forall x \in Q\}.
\]
A lower bound $a^{*} \in Q^{l}$ is called the infimum of $Q$ if $a \leq a^{*}$ for all $a \in Q^{l}$. Dually, an upper bound $a^{*} \in Q^{u}$ is the supremum of $Q$ if $a^{*} \leq a$ for all $a \in Q^{u}$. Note that the infimum and supremum of $Q$ may not exist. From an algebraic perspective, the infimum of $Q$ is also called the meet, denoted by $\wedge Q$; the supremum is also called the join, denoted by $\vee Q$. For a two-element subset $\{a, b\} \subseteq P$, the meet and join are simplified to $a \wedge b$ and $a \vee b$, respectively.

\begin{definition}\label{def:semilattice-lattice}
A partially ordered set $(P, \leq)$ is a meet-semilattice if the meet $a \wedge b$ exists for any two elements $a, b \in P$. Similarly, it is a join-semilattice if the join $a \vee b$ exists for any two elements $a, b \in P$. A partially ordered set that is both a meet-semilattice and a join-semilattice is called a lattice.
\end{definition}

\begin{definition}\label{def:bottom-top}
For a partially ordered set $(P, \leq)$, an element $\perp \in P$ is called a bottom if $\perp \leq a$ for all $a \in P$. Similarly, an element $\top \in P$ is called a top if $a \leq \top$ for all $a \in P$.
\end{definition}

By definition, the bottom and top of a partially ordered set (if they exist) are unique. The following proposition characterizes the existence of the bottom and top in finite semilattices and lattices.

\begin{proposition}\label{prop:bottom-top-exist}
A finite meet-semilattice has a unique bottom $\perp$; dually, a finite join-semilattice has a unique top $\top$. Consequently, a finite lattice has both a unique bottom $\perp$ and a unique top $\top$.
\end{proposition}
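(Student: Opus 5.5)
The plan is to build the bottom explicitly as the meet of all elements of $P$; I treat the meet-semilattice case in full and obtain the join case by duality. Let $(P,\leq)$ be a finite meet-semilattice. Since $P$ is nonempty and finite (Definition~\ref{def:poset} requires $P\neq\emptyset$), write $P=\{a_1,a_2,\cdots,a_m\}$ with $m\geq 1$. Define $b_1=a_1$ and $b_k=b_{k-1}\wedge a_k$ for $k=2,\cdots,m$. Each $b_k$ exists because, by Definition~\ref{def:semilattice-lattice}, the meet of any two elements exists.

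The key step is to prove by induction on $k$ that $b_k\leq a_j$ for all $j\leq k$.
\begin{itemize}
\item \emph{Base case.} For $k=1$ this is reflexivity, $a_1\leq a_1$.
\item \emph{Inductive step.} Assume the claim for $k-1$. Since $b_k$ is a lower bound of $\{b_{k-1},a_k\}$, we have $b_k\leq a_k$ and $b_k\leq b_{k-1}$. Combining the latter with the induction hypothesis $b_{k-1}\leq a_j$ for $j\leq k-1$, transitivity gives $b_k\leq a_j$ for all $j\leq k$.
\end{itemize}
Taking $k=m$ shows that $\perp:=b_m$ satisfies $\perp\leq a$ for every $a\in P$. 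By Definition~\ref{def:bottom-top}, $\perp$ is therefore a bottom. This is the only place where finiteness is used: it makes the iterated meet terminate. For an infinite meet-semilattice the conclusion can fail, e.g.\ $(\mathbb{Z},\leq)$.

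Uniqueness follows from antisymmetry. If $\perp$ and $\perp'$ are both bottoms, then $\perp\leq\perp'$ and $\perp'\leq\perp$, so $\perp=\perp'$.

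For the join case, observe that the reversed relation $\geq$ is again a partial order on $P$, and joins in $(P,\leq)$ are exactly meets in $(P,\geq)$. Hence a finite join-semilattice $(P,\leq)$ is a finite meet-semilattice $(P,\geq)$. Its bottom with respect to $\geq$ is precisely a top of $(P,\leq)$, and it is unique by the same antisymmetry argument. Finally, a finite lattice is both a finite meet-semilattice and a finite join-semilattice, so it has a unique $\perp$ and a unique $\top$. I expect no genuine obstacle. The only point needing care is to check that the iterated meet is a lower bound of all of $P$, rather than just of the last pair, which is exactly what the induction above establishes.
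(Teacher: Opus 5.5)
Your proof is correct. The iterated meet $b_m=(\cdots((a_1\wedge a_2)\wedge a_3)\cdots)\wedge a_m$ is a lower bound of all of $P$ by your induction, uniqueness follows from antisymmetry, and the join and lattice cases follow by order duality; you also rightly use the nonemptiness built into Definition~\ref{def:poset}, without which the empty poset would be a counterexample. The paper gives no proof of this proposition, treating it as a standard fact from the cited order-theory references, and your argument is the standard one found there.
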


\section{Upper Bound on the Secure Computing Capacity}\label{sec:upper_bound}

In this section, we consider the secure network function computation model $(\mathcal{N},f,\mathcal{W}_{r},\zeta)$ and present an upper bound on the secure computing capacity $\hat{\mathcal{C}}(\mathcal{N}, f, \mathcal{W}_{r},\zeta)$. To establish this result, we first introduce several key definitions and properties related to common functions and decomposable pairs, which serve as fundamental tools for deriving the main result.

First, we introduce several graph-theoretic notations. Given an edge subset $C \subseteq \mathcal{E}$, we define three subsets of source nodes as follows:
\begin{align*}
 D_{C}&=\{\sigma \in S: \exists \, e \in C \text{ such that } \sigma \rightarrow e\}, \\
 I_{C}&=\{\sigma \in S: \sigma \nrightarrow \rho \text{ after deleting the edges in } C \text{ from } \mathcal{E}\}, \\
 J_{C}&=D_{C} \backslash I_{C},
\end{align*}
where $\sigma \nrightarrow \rho$ indicates that there is no directed path from $\sigma$ to $\rho$. Recall that we assume there exists a directed path from every node $u \in \mathcal{V} \backslash \{\rho\}$ to $\rho$, and in particular, from every source node $\sigma \in S$ to $\rho$. This implies that $I_{C} \subseteq D_{C}$. Intuitively, $J_{C}$ denotes the subset of source nodes $\sigma$ for which there exist both a path from $\sigma$ to $\rho$ passing through at least one edge in $C$ and a path from $\sigma$ to $\rho$ that does not pass through any edge in $C$. In the network $\mathcal{N}$, an edge set $C$ is called a cut set if $I_{C} \neq \emptyset$, and we let $\Lambda(\mathcal{N})$ be the family of all cut sets, i.e.,
\[\Lambda(\mathcal{N})=\left\{C \subseteq \mathcal{E}: I_{C} \neq \emptyset\right\}.\]
In particular, a cut set $C$ is referred to as a global cut set if $I_{C}=S$.

\begin{definition}
Let $g_{1}$ and $g_{2}$ be two random functions defined on a common probability space $\Omega$.
A function \(h\) defined on $\Omega$ is called a common function of \(g_{1}\) and \(g_{2}\) if it satisfies:
\begin{itemize}
    \item $H(h\mid g_{1})=0$, i.e., $h$ is a deterministic function of $g_{1}$;
    \item $H(h\mid g_{2})=0$, i.e., $h$ is a deterministic function of $g_{2}$.
\end{itemize}
The set of all common functions of \(g_{1}\) and \(g_{2}\) is denoted by $g_{1}\cap g_{2}$. The maximal common function of \(g_{1}\) and \(g_{2}\), denoted by $g_{1}\sqcap g_{2}$, is the function in $g_{1}\cap g_{2}$ that attains the maximum entropy.
\end{definition}

\begin{definition}
Let \(h\) be a function of the random vector \(M_S\), and let \(C\in\Lambda(\mathcal{N})\) be a cut set.
A pair \((C,h)\) is called a \emph{decomposable pair} if \(h\) can be decomposed as
\begin{align}\label{eq:strong}
h(M_S)=\hat{h}\bigl(h_C(M_{I_C}),M_{S\setminus I_C}\bigr),
\end{align}
where \(h_C\) is a function only of \(M_{I_C}\).

Furthermore, if for every fixed \(M_{S\setminus I_C}\), the outer function \(\hat{h}(\,\cdot\,,M_{S\setminus I_C})\) is injective in its first argument, then \((C,h)\) is called a \emph{strongly decomposable pair}. In this case, Equation \eqref{eq:strong} is referred to as a strong decomposition of $(C,h)$, and the corresponding function \(h_C\) is called the intermediate function of this strong decomposition.
\end{definition}

\begin{lemma}\label{lem:strong decomposable}
For a fixed strongly decomposable pair \((C,h)\),
let \(g\) be a function of $M_S$.
Although the intermediate function \(h_C\) of the strong decomposition may not be unique,
the set of common functions \(h_C\cap g\) is invariant among all valid choices of \(h_C\).
\end{lemma}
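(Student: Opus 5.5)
The plan is to show that any two intermediate functions of strong decompositions of $(C,h)$ induce the same partition of the alphabet $\prod_{\sigma\in I_C}\mathcal{A}_\sigma$ of $M_{I_C}$. Since a common function of $h_C$ and $g$ is determined, up to relabeling of values, by the partition it induces, the sets $h_C\cap g$ will then coincide. Concretely, suppose $h(M_S)=\hat{h}(h_C(M_{I_C}),M_{S\setminus I_C})=\hat{h}'(h'_C(M_{I_C}),M_{S\setminus I_C})$ are two strong decompositions. The key claim is that for all $x,x'$ in the alphabet of $M_{I_C}$,
\[
h_C(x)=h_C(x') \iff h'_C(x)=h'_C(x').
\]

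To prove the claim, assume first $h_C(x)=h_C(x')$. Then for every fixed $y$ (a value of $M_{S\setminus I_C}$) we get $h(x,y)=h(x',y)$. Hence $\hat{h}'(h'_C(x),y)=\hat{h}'(h'_C(x'),y)$, and injectivity of $\hat{h}'(\cdot,y)$ gives $h'_C(x)=h'_C(x')$. The converse follows by symmetry. Here we need at least one value $y$ to exist, which is automatic since the alphabets are nonempty. If $S\setminus I_C=\emptyset$, the injectivity condition simply says $\hat h$ is injective, and the same argument applies. It follows that there is a bijection $\pi$ from the range of $h_C$ onto the range of $h'_C$ with $h'_C=\pi\circ h_C$. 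Consequently $H(\phi\mid h_C)=0$ if and only if $H(\phi\mid h'_C)=0$ for every function $\phi$ on the probability space.

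The remaining step is to conclude $h_C\cap g=h'_C\cap g$. This follows because membership in $h_C\cap g$ requires only $H(\phi\mid h_C)=0$ and $H(\phi\mid g)=0$, and the first condition is equivalent for $h_C$ and $h'_C$. One subtlety is that conditional entropy zero means $\phi$ is a deterministic function of $h_C$ almost surely. The sources are uniform, so every value of $M_{I_C}$ has positive probability and "almost surely" coincides with "pointwise"; the bijection $\pi$ then transfers functional dependence exactly. As a byproduct, the maximal common function $h_C\sqcap g$ is also invariant up to relabeling.

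The main obstacle, mild as it is, is being careful that the definition of common function is phrased via entropy, hence up to null sets. I would handle this using the full support of the uniform, independent sources, and by noting that $h_C$ depends only on $M_{I_C}$, so the equivalence of partitions transfers to functions on $\Omega$.
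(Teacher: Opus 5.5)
Your proof is correct, but it takes a different route from the paper's.

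The paper works entirely in the entropy calculus. From the two decompositions and the injectivity of $\hat{h}(\cdot,M_{S\setminus I_C})$ it derives $H(h_C\mid h'_C,M_{S\setminus I_C})=0$. It then removes the conditioning on $M_{S\setminus I_C}$ by writing $H(h_C\mid h'_C)=I(h_C;M_{S\setminus I_C}\mid h'_C)$ and bounding this by $H(M_{S\setminus I_C})-H(M_{S\setminus I_C}\mid M_{I_C})=0$. This is the one place where independence of $M_{I_C}$ and $M_{S\setminus I_C}$ is used. Symmetry and transitivity of functional dependence then give $h_C\cap g=h'_C\cap g$.

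You instead argue pointwise on the alphabet. Fixing a single value $y$ and using injectivity of $\hat{h}'(\cdot,y)$, you show that $h_C$ and $h'_C$ induce the same partition of the alphabet of $M_{I_C}$. Independence enters only through the full product support of the uniform sources. That support is what makes the decomposition identities hold for every pair $(x,y)$ rather than merely almost surely, and you correctly flag this point.

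What your route buys:
\begin{itemize}
\item It is more elementary and avoids the step of dropping the conditioning.
\item It yields a slightly stronger statement: $h'_C=\pi\circ h_C$ everywhere for a bijection $\pi$, so the intermediate function is unique up to relabeling.
\item It makes visible that one value of $y$ already suffices.
\end{itemize}

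What the paper's route buys: it matches the almost-sure, entropy-based definition of common functions and the style of the later proof of Theorem~\ref{thm:the upper bound}, and it needs no separate discussion of supports.
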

\begin{proof}
Let \(h_C\) and \(h'_C\) be two intermediate functions satisfying the strong decomposition condition for \((C,h)\), i.e.,
\[
h(M_S) = \hat{h}\bigl(h_C(M_{I_C}), M_{S \setminus I_C}\bigr) = \hat{h}'\bigl(h'_C(M_{I_C}), M_{S \setminus I_C}\bigr).
\]
From this, we have
\begin{align}\label{eq:lemma1}
H(h(M_S) \mid h'_C(M_{I_C}), M_{S \setminus I_C}) = 0.
\end{align}
Since the outer function \(\hat{h}(\,\cdot\,, M_{S \setminus I_C})\) is injective in its first argument for a fixed \(M_{S \setminus I_C}\), it follows that
\begin{align}\label{eq:lemma2}
H(h_C(M_{I_C}) \mid h(M_S), M_{S \setminus I_C}) = 0.
\end{align}
Combining \eqref{eq:lemma1} and \eqref{eq:lemma2}, we obtain
\[
H(h_C(M_{I_C}) \mid h'_C(M_{I_C}), M_{S \setminus I_C}) = 0.
\]
Consequently, we have
\begin{align}
H(h_C(M_{I_C}) \mid h'_C(M_{I_C})) &= I(h_C(M_{I_C}) ; M_{S \setminus I_C} \mid h'_C(M_{I_C})) \notag\\
&= H(M_{S \setminus I_C} \mid h'_C(M_{I_C})) - H(M_{S \setminus I_C} \mid h'_C(M_{I_C}), h_C(M_{I_C})) \notag\\
&\leq H(M_{S \setminus I_C}) - H(M_{S \setminus I_C} \mid M_{I_C}) \notag\\
&= 0, \label{eq:lemma3}
\end{align}
where Equality \eqref{eq:lemma3} holds because \(M_{I_C}\) and \(M_{S \setminus I_C}\) are independent. Thus, \(H(h_C(M_{I_C}) \mid h'_C(M_{I_C})) = 0\). By symmetry, we also obtain \(H(h'_C(M_{I_C}) \mid h_C(M_{I_C})) = 0\).

We next prove that \(h_C \cap g \subseteq h'_C \cap g\). Consider any function \(d \in h_C \cap g\). By definition,
\[
H(d \mid h_C) = 0 \quad \text{and} \quad H(d \mid g) = 0.
\]
Since \(H(h_C \mid h'_C) = 0\), \(h_C\) is a deterministic function of \(h'_C\). Combining this with \(H(d \mid h_C) = 0\), it follows that \(d\) is also a deterministic function of \(h'_C\), implying \(H(d \mid h'_C) = 0\). Therefore, \(d \in h'_C \cap g\). By the arbitrariness of \(d\), we conclude \(h_C \cap g \subseteq h'_C \cap g\).

Similarly, we can derive \(h'_C \cap g \subseteq h_C \cap g\). Hence, \(h_C \cap g = h'_C \cap g\).
\end{proof}

From Lemma \ref{lem:strong decomposable}, the set of common functions \(h_C\cap g\)
does not depend on the choice of intermediate function \(h_C\), but only on the strongly decomposable pair \((C,h)\) and \(g\).
Furthermore, the entropy \(H(h_C\sqcap g)\) of the maximal common function
is also independent of the choice of \(h_C\),
and depends only on \((C,h)\) and \(g\).

\begin{definition}
For a cut set \(C\in\Lambda(\mathcal{N})\) and a wiretap set \(W\in\mathcal{W}_r\),
the pair \((C,W)\) is said to be \emph{valid} if the following hold:
\begin{itemize}
    \item $(C,f)$ is strongly decomposable;
    \item $W\subseteq C$ and $D_W\subseteq I_C$.
\end{itemize}
\end{definition}

\begin{theorem}\label{thm:the upper bound}
For the secure network function computation model \((\mathcal{N},f,\mathcal{W}_r,\zeta)\),
the secure computing capacity $\hat{\mathcal{C}}(\mathcal{N},f,\mathcal{W}_r,\zeta)$ satisfies
\[
\hat{\mathcal{C}}(\mathcal{N},f,\mathcal{W}_r,\zeta)
\leq
\min_{(C,W)\ \text{is valid}}
\frac{|C|-|W|}{H(f_C\sqcap\zeta)}\cdot\log|\mathcal{B}|,
\]
where \(f_C\) denotes any intermediate function of a strong decomposition of the strongly decomposable pair \((C,f)\).
\end{theorem}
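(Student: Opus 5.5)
Fix an admissible $(\ell,n)$ code $\hat C$ and a valid pair $(C,W)$. The goal is
\[
\ell\, H(f_C\sqcap\zeta)\le (|C|-|W|)\,n\log|\mathcal B|;
\]
dividing by $n\,H(f_C\sqcap\zeta)$ and letting the rate approach capacity then gives the bound. Write $\mathbf f_C=f_C(\mathbf M_{I_C})$ for the $\ell$-fold intermediate function and $\mathbf d=(f_C\sqcap\zeta)(\mathbf M_S)$ applied componentwise. Since $\mathbf d$ is an i.i.d. sequence, $H(\mathbf d)=\ell\,H(f_C\sqcap\zeta)$. Moreover $\mathbf d$ is a deterministic function both of $\mathbf f_C$ (hence of $\mathbf M_{I_C}$) and of $\zeta(\mathbf M_S)$. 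By Lemma~\ref{lem:strong decomposable}, the choice of $f_C$ is immaterial.

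\textbf{Step 1 (the cut determines $\mathbf f_C$ given the other sources).} We claim $H(\mathbf f_C\mid Y_C,\mathbf M_{S\setminus I_C},\mathbf K_{S\setminus I_C})=0$. Every source in $I_C$ reaches $\rho$ only through $C$. Hence $Y_{\operatorname{In}(\rho)}$ is a function of $Y_C$ together with $(\mathbf M_{S\setminus I_C},\mathbf K_{S\setminus I_C})$; this is the standard argument of computing edge values in topological order after removing $C$. By computability, $f(\mathbf M_S)$ is therefore such a function. By strong decomposability, $\hat f(\cdot,\mathbf M_{S\setminus I_C})$ is injective, so $\mathbf f_C$ is recovered from $f(\mathbf M_S)$ and $\mathbf M_{S\setminus I_C}$. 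Since $\mathbf d$ is a function of $\mathbf f_C$, we also get $H(\mathbf d\mid Y_C,\mathbf M_{S\setminus I_C},\mathbf K_{S\setminus I_C})=0$.

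\textbf{Step 2 (security on $W$).} Since $\mathbf d$ is a function of $\zeta(\mathbf M_S)$, condition \eqref{eq:security condition} yields $I(Y_W;\mathbf d)=0$. Because $D_W\subseteq I_C$, the message $Y_W$ is a function of $(\mathbf M_{I_C},\mathbf K_{I_C})$ only. Both $Y_W$ and $\mathbf d$ are then functions of $(\mathbf M_{I_C},\mathbf K_{I_C})$, which is independent of $Z\triangleq(\mathbf M_{S\setminus I_C},\mathbf K_{S\setminus I_C})$. Hence $I(Y_W;\mathbf d\mid Z)=0$ as well, and $\mathbf d$ is independent of $Z$.

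\textbf{Step 3 (chain).} Combining the two steps,
\[
\ell H(f_C\sqcap\zeta)=H(\mathbf d\mid Z)=H(\mathbf d\mid Z,Y_W)=I(\mathbf d;Y_{C\setminus W}\mid Z,Y_W)\le H(Y_{C\setminus W})\le (|C|-|W|)\,n\log|\mathcal B|.
\]
The third equality uses Step 1: $\mathbf d$ is determined by $(Z,Y_W,Y_{C\setminus W})$, so $H(\mathbf d\mid Z,Y_W,Y_{C\setminus W})=0$.

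\textbf{Main obstacle.} The delicate points are in Step 1: the reachability argument must correctly cover sources in $S\setminus I_C$ whose paths cross $C$, and $\mathbf f_C$ must be handled componentwise over $\ell$ uses. In Step 2, the independence $(\mathbf M_{I_C},\mathbf K_{I_C})\perp Z$ is exactly what turns the unconditional security statement into a conditional one. I would verify both carefully, using the convention that keys belong to their own sources.
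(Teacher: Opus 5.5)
Your proposal is correct and follows essentially the same argument as the paper. Security together with $D_W\subseteq I_C$ and the independence of $(\mathbf M_{I_C},\mathbf K_{I_C})$ from $Z$ gives $H(\mathbf d)=H(\mathbf d\mid Y_W,Z)$. The pair $(Y_C,Z)$ determines $Y_{\operatorname{In}(\rho)}$, hence $\mathbf f_C$ by injectivity of $\hat f$, hence $\mathbf d$. The same conditional mutual information chain then bounds $\ell H(f_C\sqcap\zeta)$ by $(|C|-|W|)\,n\log|\mathcal B|$.

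The only cosmetic difference is in your Step~1. The paper obtains that determinism by enlarging $C$ with $\bigcup_{\sigma\in S\setminus I_C}\operatorname{Out}(\sigma)$ to a global cut set. You argue directly by topological order, which is equivalent.
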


\begin{proof}
Let \(\hat{C}\) be an arbitrary admissible \((\ell,n)\) function-computing secure network code for the model \((\mathcal{N},f,\mathcal{W}_r,\zeta)\), with global encoding functions \(\{\hat{g}_e : e \in \mathcal{E}\}\). Consider an arbitrary valid pair \((C,W)\), where $C \in \Lambda(\mathcal{N})$ and $W \in \mathcal{W}_r$. Since \((C,f)\) is strongly decomposable, there exist functions \(\hat{f}\) and \(f_C\) such that
\[
f(M_S) = \hat{f}\bigl(f_C(M_{I_C}), M_{S \setminus I_C}\bigr),
\]
where \(f_C\) is a function of \(M_{I_C}\), and \(\hat{f}(\,\cdot\,, M_{S \setminus I_C})\) is injective in its first argument for a fixed \(M_{S \setminus I_C}\).

We now consider the entropy of the maximal common function \((f_C \sqcap \zeta)(\mathbf{M}_S)\), where
\[
(f_C \sqcap \zeta)(\mathbf{M}_S) = \bigl((f_C \sqcap \zeta)(M_{1,j}, M_{2,j}, \dots, M_{s,j}) : j = 1, 2, \dots, \ell\bigr).
\]
From Lemma \ref{lem:strong decomposable}, we know that \(f_C \sqcap \zeta\) is independent of the choice of \(f_C\) and is fully determined by \(C\), \(f\), and \(\zeta\).

By the security condition \(I(\zeta(\mathbf{M}_S); Y_W) = 0\), we have
\begin{align}
I((f_C \sqcap \zeta)(\mathbf{M}_S); Y_W)
&= H(Y_W) - H(Y_W \mid (f_C \sqcap \zeta)(\mathbf{M}_S)) \notag\\
&\leq H(Y_W) - H(Y_W \mid \zeta(\mathbf{M}_S), (f_C \sqcap \zeta)(\mathbf{M}_S)) \notag\\
&= H(Y_W) - H(Y_W \mid \zeta(\mathbf{M}_S)) \label{eq:zeta1}\\
&= 0, \notag
\end{align}
where Equality \eqref{eq:zeta1} holds because \(H(f_C \sqcap \zeta \mid \zeta) = 0\).
This implies
\begin{align}\label{eq:bound5}
H((f_C \sqcap \zeta)(\mathbf{M}_S)) = H((f_C \sqcap \zeta)(\mathbf{M}_S) \mid Y_W).
\end{align}
Since \(H(f_C \sqcap \zeta \mid f_C) = 0\), \(f_C \sqcap \zeta\) is a deterministic function of \(f_C\), which confirms that \(f_C \sqcap \zeta\) is a function of \(\mathbf{M}_{I_C}\). Combining with \(D_W \subseteq I_C\), we know that \(Y_W\) is fully determined by \(\mathbf{M}_{I_C}\) and \(\mathbf{K}_{I_C}\). Thus, we have
\[
H((f_C \sqcap \zeta)(\mathbf{M}_S), Y_W) = H((f_C \sqcap \zeta)(\mathbf{M}_S), Y_W \mid \mathbf{M}_{S \setminus I_C}, \mathbf{K}_{S \setminus I_C}),
\]
which expands to
\[
H(Y_W) + H((f_C \sqcap \zeta)(\mathbf{M}_S) \mid Y_W) = H(Y_W \mid \mathbf{M}_{S \setminus I_C}, \mathbf{K}_{S \setminus I_C}) + H((f_C \sqcap \zeta)(\mathbf{M}_S) \mid Y_W, \mathbf{M}_{S \setminus I_C}, \mathbf{K}_{S \setminus I_C}).
\]
Since $H(Y_W) = H(Y_W \mid \mathbf{M}_{S \setminus I_C}, \mathbf{K}_{S \setminus I_C})$, it follows that
\begin{align}\label{eq:bound6}
H((f_C \sqcap \zeta)(\mathbf{M}_S) \mid Y_W) = H((f_C \sqcap \zeta)(\mathbf{M}_S) \mid Y_W, \mathbf{M}_{S \setminus I_C}, \mathbf{K}_{S \setminus I_C}).
\end{align}

Next, define the edge subset $C' \triangleq \bigcup_{\sigma \in S \setminus I_C} \text{Out}(\sigma)$ and let $\widetilde{C} = C \cup C'$. Evidently, $\widetilde{C}$ is a global cut set, meaning it separates $\rho$ from all source nodes in $S$. Combined with the acyclicity of the graph $\mathcal{G}$, this implies that $Y_{\text{In}(\rho)}$ is a function of $Y_{\widetilde{C}}$. Given the admissibility of the secure network code $\hat{C}$, the target function $f$ must be computable with zero error at $\rho$, i.e.,
\[
H\bigl(f(\mathbf{M}_S) \mid Y_{\text{In}(\rho)}\bigr) = 0,
\]
which implies
\begin{align*}
H\bigl(f(\mathbf{M}_S) \mid Y_{\widetilde{C}}\bigr)
&= H\bigl(f(\mathbf{M}_S) \mid Y_{\widetilde{C}}, Y_{\text{In}(\rho)}\bigr) \\
&\leq H\bigl(f(\mathbf{M}_S) \mid Y_{\text{In}(\rho)}\bigr) = 0.
\end{align*}
Since $D_{C'} = I_{C'} = S \setminus I_C$, $Y_{C'}$ is a function of $(\mathbf{M}_{S \setminus I_C}, \mathbf{K}_{S \setminus I_C})$, and thus
\begin{align}\label{eq:bound1}
H\bigl(f(\mathbf{M}_S) \mid Y_C, \mathbf{M}_{S \setminus I_C}, \mathbf{K}_{S \setminus I_C}\bigr) = 0.
\end{align}
Since $\hat{f}(\,\cdot\,, M_{S \setminus I_C})$ is injective in its first argument for a fixed $M_{S \setminus I_C}$, we have
\begin{align}\label{eq:bound2}
H\bigl(f_C(\mathbf{M}_{I_C}) \mid f(\mathbf{M}_S), \mathbf{M}_{S \setminus I_C}\bigr) = 0.
\end{align}
By the definition of the maximal common function $f_C \sqcap \zeta$, we have
\begin{align}\label{eq:bound3}
H(f_C \sqcap \zeta \mid f_C) = 0.
\end{align}
Combining \eqref{eq:bound1}, \eqref{eq:bound2}, and \eqref{eq:bound3}, we obtain
\begin{align}\label{eq:bound4}
H\bigl((f_C \sqcap \zeta)(\mathbf{M}_S) \mid Y_C, \mathbf{M}_{S \setminus I_C}, \mathbf{K}_{S \setminus I_C}\bigr) = 0.
\end{align}
Combining \eqref{eq:bound5}, \eqref{eq:bound6}, and \eqref{eq:bound4}, we have
\begin{align}
\ell \cdot H((f_C \sqcap \zeta)(M_S)) = H((f_C \sqcap \zeta)(\mathbf{M}_S))
&= H((f_C \sqcap \zeta)(\mathbf{M}_S) \mid Y_W, \mathbf{M}_{S \setminus I_C}, \mathbf{K}_{S \setminus I_C}) \notag\\
&\quad - H((f_C \sqcap \zeta)(\mathbf{M}_S) \mid Y_C, \mathbf{M}_{S \setminus I_C}, \mathbf{K}_{S \setminus I_C}) \notag\\
&= I\bigl((f_C \sqcap \zeta)(\mathbf{M}_S); Y_{C \setminus W} \mid Y_W, \mathbf{M}_{S \setminus I_C}, \mathbf{K}_{S \setminus I_C}\bigr) \notag\\
&\leq H\bigl(Y_{C \setminus W} \mid Y_W, \mathbf{M}_{S \setminus I_C}, \mathbf{K}_{S \setminus I_C}\bigr) \notag\\
&\leq (|C| - |W|) \cdot n \cdot \log|\mathcal{B}|. \notag
\end{align}
Rearranging yields the upper bound:
\begin{align}\label{eq:bound}
\frac{\ell}{n} \leq \frac{|C| - |W|}{H(f_C \sqcap \zeta)} \cdot \log|\mathcal{B}|.
\end{align}
Since Inequality \eqref{eq:bound} holds for all valid pairs \((C,W)\), we have
\[
\frac{\ell}{n} \leq \min_{(C,W)\ \text{is valid}} \frac{|C| - |W|}{H(f_C \sqcap \zeta)} \cdot \log|\mathcal{B}|.
\]
Finally, since this upper bound is valid for all admissible secure network codes for the model $(\mathcal{N}, f, \mathcal{W}_r, \zeta)$, the theorem is proven.
\end{proof}

We next consider a case where the secure computing capacity can be exactly characterized by the upper bound in Theorem \ref{thm:the upper bound}.
\begin{corollary}
Consider the model $(\mathcal{N}, f, \mathcal{W}_r, \zeta)$. If the security level \(r\) satisfies
\[
r \geq \overline{C}_{\min,f} \triangleq \min \{|C|: C \in \Lambda(\mathcal{N}), D_{C}=I_{C}\ \text{and}\ (C,f) \ \text{is strongly decomposable} \}
\]
then
\[
\hat{\mathcal{C}}(\mathcal{N}, f, \mathcal{W}_r, \zeta) = 0.
\]
\end{corollary}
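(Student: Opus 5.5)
The plan is to apply Theorem~\ref{thm:the upper bound} to one well-chosen valid pair whose numerator $|C|-|W|$ is zero. Let $C^{*}$ attain the minimum in the definition of $\overline{C}_{\min,f}$. Then $C^{*}\in\Lambda(\mathcal{N})$, $D_{C^{*}}=I_{C^{*}}$, $(C^{*},f)$ is strongly decomposable, and $|C^{*}|=\overline{C}_{\min,f}\le r$. Take $W=C^{*}$. Since $|W|\le r$, we have $W\in\mathcal{W}_r$, and trivially $W\subseteq C^{*}$. Moreover $D_W=D_{C^{*}}=I_{C^{*}}$, so $D_W\subseteq I_{C^{*}}$. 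Hence $(C^{*},C^{*})$ is a valid pair, and it contributes the value
\[
\frac{|C^{*}|-|C^{*}|}{H(f_{C^{*}}\sqcap\zeta)}\cdot\log|\mathcal{B}| = 0
\]
to the minimum in Theorem~\ref{thm:the upper bound}. This gives $\hat{\mathcal{C}}\le 0$, and since rates are nonnegative, $\hat{\mathcal{C}}=0$.

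The delicate point, which I expect to be the main obstacle, is that this ratio is only meaningful when $H(f_{C^{*}}\sqcap\zeta)>0$. If it is $0$, the expression is $0/0$. To avoid relying on the ratio, I will rerun the chain of inequalities at the end of the proof of Theorem~\ref{thm:the upper bound} with $W=C^{*}$. It gives
\[
\ell\, H\bigl((f_{C^{*}}\sqcap\zeta)(M_S)\bigr)\le H\bigl(Y_{C^{*}\setminus W}\mid Y_W,\mathbf{M}_{S\setminus I_{C^{*}}},\mathbf{K}_{S\setminus I_{C^{*}}}\bigr)=0,
\]
since $C^{*}\setminus W=\emptyset$. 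So whenever $H(f_{C^{*}}\sqcap\zeta)>0$, no admissible $(\ell,n)$ code with $\ell\ge 1$ exists, and the capacity is $0$.

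I would then point out that this positivity hypothesis cannot simply be dropped; it must be read into the statement, for instance by restricting the minimum defining $\overline{C}_{\min,f}$ to cuts $C$ with $H(f_C\sqcap\zeta)>0$. A counterexample without it: take two sources, $f(M_1,M_2)=M_1$, $\zeta=M_2$, and $C=\operatorname{Out}(\sigma_1)$. Then $D_C=I_C=\{\sigma_1\}$ and $f_C=M_1$ is a strong decomposition. But $f_C\sqcap\zeta$ is constant, and sending $M_1$ uncoded never leaks $\zeta$, so the capacity is positive.

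When $I_{C^{*}}=S$ (a global cut), $f_C=f$ works as the intermediate function. In that case $f_C\sqcap\zeta=f\sqcap\zeta$, and positivity reduces to $f$ and $\zeta$ sharing a nonconstant common function. The proof will therefore state that this condition, nonconstancy of $f_{C^{*}}\sqcap\zeta$, is what is used, and the rest is the direct specialization described above.
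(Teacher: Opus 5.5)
Your core argument is the paper's: pick a cut $C$ attaining $\overline{C}_{\min,f}$, observe that $(C,C)$ is a valid pair since $|C|\le r$ and $D_C=I_C$, and read off the zero numerator in Theorem~\ref{thm:the upper bound}.

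Your caveat is also correct, and it points to a genuine gap in the paper's proof. The paper substitutes $(C,C)$ and treats $0/H(f_C\sqcap\zeta)$ as $0$ without checking the denominator. The inequality chain behind Theorem~\ref{thm:the upper bound} only gives $\ell\,H(f_C\sqcap\zeta)\le 0$, which says nothing when $f_C\sqcap\zeta$ is constant.

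Your counterexample is valid, provided you take $r\ge|\operatorname{Out}(\sigma_1)|$ so that the hypothesis holds. The failure is not limited to target functions that ignore a source. Take $f=\zeta=M_1+M_2$ over $\mathbb{F}_q$, one edge $e_1$ from $\sigma_1$ to $\rho$, and two parallel edges $e_2,e_3$ from $\sigma_2$ to $\rho$.
\begin{itemize}
\item The cut $\{e_1\}$ gives $\overline{C}_{\min,f}=1$.
\item Here $f_{\{e_1\}}\sqcap\zeta$ is constant, since $M_1$ and $M_1+M_2$ are independent.
\item For $r=1$, the code $Y_{e_1}=M_1$, $Y_{e_2}=M_2+K_2$, $Y_{e_3}=K_2$ is admissible with rate $1$, because each single edge is independent of $M_1+M_2$.
\end{itemize}

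So the corollary is false as stated. The version you propose, with the minimum restricted to cuts for which $f_C\sqcap\zeta$ is nonconstant, is the right statement, and your rerun of the chain with $C\setminus W=\emptyset$ proves it.
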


\begin{proof}
Let \(C\) be a cut set in \(\mathcal{N}\) satisfying $D_C = I_C$, $(C,f)$ being strongly decomposable, and $|C| = \overline{C}_{\min,f}$.
Since \(\overline{C}_{\min,f} \leq r\), we have $|C| \leq r$, which implies $C \in \mathcal{W}_r$.
Combined with $D_C = I_C$, we know that $(C,C)$ is a valid pair. Substituting this into the upper bound in Theorem \ref{thm:the upper bound} yields
\(
\hat{\mathcal{C}}(\mathcal{N}, f, \mathcal{W}_r, \zeta) \leq 0
\). Since the secure computing capacity is non-negative, we conclude
\(
\hat{\mathcal{C}}(\mathcal{N}, f, \mathcal{W}_r, \zeta) = 0
\).
\end{proof}

\begin{corollary}\label{cor:upper upper}
If the security function $\zeta$ is the \emph{identity function}, i.e., $\zeta(M_S) = M_S$, and for all $S'\subseteq S$, the target function $f$ can be decomposed as
\[
f(M_S) = \hat{f}\bigl(f'(M_{S'}), M_{S \setminus S'}\bigr),
\]
where $\hat{f}(\,\cdot\,, M_{S \setminus S'})$ is injective in its first argument for any fixed $M_{S \setminus S'}$, then the upper bound on the secure computing capacity $\hat{\mathcal{C}}(\mathcal{N}, f, \mathcal{W}_r, \zeta)$ obtained in Theorem \ref{thm:the upper bound} satisfies
\begin{align}\label{eq:the upper bound1}
\min_{(C,W)\ \text{is valid}} \frac{|C| - |W|}{H(f_C \sqcap \zeta)} \cdot \log|\mathcal{B}| \leq \min_{C \in \Lambda(\mathcal{N})} \frac{|C|}{H(f(M_{I_C}, \mathbf{0}))} \cdot \log|\mathcal{B}|,
\end{align}
and
\begin{align}\label{eq:the upper bound 2}
\min_{(C,W)\ \text{is valid}} \frac{|C| - |W|}{H(f_C \sqcap \zeta)} \cdot \log|\mathcal{B}| \geq \min_{C \in \Lambda(\mathcal{N})} \frac{|C| - r}{H(f(M_{I_C}, \mathbf{0}))} \cdot \log|\mathcal{B}|.
\end{align}
\end{corollary}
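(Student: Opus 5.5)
Under the hypothesis of the corollary, every pair $(C,f)$ with $C\in\Lambda(\mathcal{N})$ is strongly decomposable: apply the assumed decomposition with $S'=I_C$. Hence a pair $(C,W)$ is valid exactly when $W\subseteq C$, $|W|\le r$ and $D_W\subseteq I_C$. Since $\zeta$ is the identity, every function of $M_S$ is a common function of $f_C$ and $\zeta$. Therefore $f_C\sqcap\zeta=f_C$ up to relabeling, and $H(f_C\sqcap\zeta)=H(f_C(M_{I_C}))$.

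The first step identifies $H(f_C(M_{I_C}))$ with $H(f(M_{I_C},\mathbf{0}))$, where $\mathbf{0}$ is a fixed value of $M_{S\setminus I_C}$. Fix $M_{S\setminus I_C}=\mathbf{0}$. Then $f(M_{I_C},\mathbf{0})=\hat f(f_C(M_{I_C}),\mathbf{0})$, and $\hat f(\cdot,\mathbf{0})$ is injective. So $f(M_{I_C},\mathbf{0})$ and $f_C(M_{I_C})$ are functions of each other and have equal entropy. By Lemma~\ref{lem:strong decomposable} this does not depend on the chosen intermediate function. Consequently, for every valid pair,
\[
\frac{|C|-|W|}{H(f_C\sqcap\zeta)}=\frac{|C|-|W|}{H(f(M_{I_C},\mathbf{0}))}.
\]

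For \eqref{eq:the upper bound1}, note that $W=\emptyset\in\mathcal{W}_r$ and $D_\emptyset=\emptyset\subseteq I_C$. Hence $(C,\emptyset)$ is valid for every $C\in\Lambda(\mathcal{N})$, and the left-hand minimum is at most $|C|/H(f(M_{I_C},\mathbf{0}))\cdot\log|\mathcal{B}|$ for each such $C$. Taking the minimum over $C$ gives the bound.

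For \eqref{eq:the upper bound 2}, take the valid pair $(C,W)$ attaining the left-hand minimum. Since $|W|\le r$, we have $|C|-|W|\ge |C|-r$. This yields a term of the right-hand minimum, evaluated at $C\in\Lambda(\mathcal{N})$, and so it is at least that minimum. Here $H(f(M_{I_C},\mathbf{0}))>0$ is needed to divide, and I would check this carefully.

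The main obstacle is exactly this nondegeneracy. If $f(M_{I_C},\mathbf{0})$ were constant, the ratios would be infinite, or undefined when the numerator is $0$. I would argue that injectivity of $\hat f(\cdot,\mathbf{0})$ together with $f$ depending nontrivially on the sources gives positive entropy. Failing that, I would adopt the convention that such terms equal $+\infty$, so that both inequalities remain consistent since the terms on the two sides correspond. Apart from this, the argument is direct substitution into Theorem~\ref{thm:the upper bound}.
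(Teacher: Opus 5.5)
Your proposal is correct and follows the paper's proof step for step: $f_C\sqcap\zeta=f_C$; injectivity of $\hat f(\cdot,\mathbf{0})$ gives $H(f_C(M_{I_C}))=H(f(M_{I_C},\mathbf{0}))$; the valid pair $(C,\emptyset)$ gives the first inequality and $|W|\le r$ the second. One justification needs fixing: not every function of $M_S$ is a common function (that would make $M_S$ the maximal one); rather, $f_C\sqcap\zeta=f_C$ because $f_C$ is itself a common function (a function of $M_S=\zeta$) and every common function is a function of $f_C$. Your degeneracy worry is legitimate and the paper ignores it: $H(f(M_{I_C},\mathbf{0}))=0$ occurs whenever $f$ does not depend on $M_{I_C}$ (e.g.\ $f(m_1,m_2)=m_2$ with $I_C=\{\sigma_1\}$), so your first positivity argument cannot work in general, but your $+\infty$ convention on the matching terms (every cut set has $|C|\ge 1$) keeps both inequalities valid.
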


\begin{proof}
We first prove the upper bound in \eqref{eq:the upper bound1}. Since $\zeta$ is the identity function, we have $H(f_C \mid \zeta) = 0$. This implies $f_C \in f_C \cap \zeta$. For any $d \in f_C \cap \zeta$, we have $H(d \mid f_C) = 0$, which further leads to $H(d) \leq H(f_C)$. Therefore, $f_C \sqcap \zeta = f_C$.

By the conditions satisfied by the target function $f$, for all $C \in \Lambda(\mathcal{N})$, the pair $(C,f)$ is strongly decomposable. Therefore, $(C,W)$ is valid if and only if $W \subseteq C$ and $D_W \subseteq I_C$. Hence,
\begin{align}
\min_{(C,W)\ \text{is valid}} \frac{|C| - |W|}{H(f_C \sqcap \zeta)} \cdot \log|\mathcal{B}|
&= \min_{W\in \mathcal{W}_{r}}\min_{\substack{C\in\Lambda(\mathcal{N})\\ W\subseteq C\ \text{and}\ D_{W}\subseteq I_{C}}} \frac{|C| - |W|}{H(f_C)} \cdot \log|\mathcal{B}|\notag\\
&\leq \min_{C \in \Lambda(\mathcal{N})} \frac{|C|}{H(f_C)} \cdot \log|\mathcal{B}|, \label{eq:empty}
\end{align}
where inequality \eqref{eq:empty} holds because the empty set $\emptyset \in \mathcal{W}_r$ is a valid wiretap set, satisfying $\emptyset \subseteq C$ and $D_{\emptyset} \subseteq I_C$ for any cut set $C \in \Lambda(\mathcal{N})$.

Next, we analyze $H(f_C(M_{I_C}))$ using the strong decomposition property. For any $m = M_{S \setminus I_C}$, the strong decomposition gives
\[
f(M_{I_C}, m) = \hat{f}\bigl(f_C(M_{I_C}), m\bigr).
\]
Since $f(M_{I_C}, m)$ is a deterministic function of $f_C(M_{I_C})$, we have
\[
H\bigl(f(M_{I_C}, m) \mid f_C(M_{I_C})\bigr) = 0,
\]
which implies $H\bigl(f(M_{I_C}, m)\bigr) \leq H\bigl(f_C(M_{I_C})\bigr)$. On the other hand, since $\hat{f}(\,\cdot\,, m)$ is injective, it follows that
\[
H\bigl(f_C(M_{I_C}) \mid f(M_{I_C}, m)\bigr) = 0,
\]
which implies $H\bigl(f_C(M_{I_C})\bigr) \leq H\bigl(f(M_{I_C}, m)\bigr)$. Therefore, for any $m = M_{S \setminus I_C}$, we have
\[
H\bigl(f_C(M_{I_C})\bigr) = H\bigl(f(M_{I_C}, m)\bigr).
\]
In particular, by setting $m = \mathbf{0}$, we obtain
\begin{align}\label{eq:0}
H\bigl(f_C(M_{I_C})\bigr) = H\bigl(f(M_{I_C}, \mathbf{0})\bigr).
\end{align}

Combining \eqref{eq:empty} and \eqref{eq:0}, we conclude
\[
\min_{(C,W)\ \text{is valid}} \frac{|C| - |W|}{H(f_C \sqcap \zeta)} \cdot \log|\mathcal{B}|
\leq \min_{C \in \Lambda(\mathcal{N})} \frac{|C|}{H(f(M_{I_C}, \mathbf{0}))} \cdot \log|\mathcal{B}|.
\]

Next, we prove the lower bound in \eqref{eq:the upper bound 2}. Since $W \in \mathcal{W}_r$, we have $|W| \leq r$. Thus,
\begin{align}
\min_{(C,W)\ \text{is valid}}  \frac{|C| - |W|}{H(f_C \sqcap \zeta)} \cdot \log|\mathcal{B}|
&= \min_{C\in\Lambda(\mathcal{N})}\min_{\substack{W\in \mathcal{W}_{r}\\ W\subseteq C\ \text{and}\ D_{W}\subseteq I_{C}}} \frac{|C| - |W|}{H(f_C)} \cdot \log|\mathcal{B}|\notag\\
&\geq \min_{C \in \Lambda(\mathcal{N})} \frac{|C| - r}{H(f_C)} \cdot \log|\mathcal{B}|\notag\\
&= \min_{C \in \Lambda(\mathcal{N})} \frac{|C| - r}{H(f(M_{I_C}, \mathbf{0}))} \cdot \log|\mathcal{B}|,
\end{align}
where the last equality follows from \eqref{eq:0}.
\end{proof}

\begin{corollary}\label{cor:spec}
If the target function $f$ is a scalar linear function over a finite field $\mathbb{F}_q$, i.e.,
\[
f(m_1, m_2, \dots, m_s) = \sum_{i=1}^s \alpha_i m_i, \quad \alpha_i \in \mathbb{F}_q^{*},\ i \in [s],
\]
and the security function $\zeta$ is the identity function, then
\[
\hat{\mathcal{C}}(\mathcal{N}, f, \mathcal{W}_r, \zeta) \leq \min_{\substack{(W,C) \in \mathcal{W}_r \times \Lambda(\mathcal{N}) \\ W \subseteq C,\ D_W \subseteq I_C}} (|C| - |W|)\cdot \log_{q}|\mathcal{B}|.
\]
\end{corollary}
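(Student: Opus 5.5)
This follows by specializing Theorem~\ref{thm:the upper bound}. We reduce the minimum over valid pairs to a minimum over all pairs $(W,C)$ with $W\subseteq C$ and $D_W\subseteq I_C$, and we compute the denominator $H(f_C\sqcap\zeta)$ explicitly. Here the source alphabets are $\mathcal{A}_i=\mathbb{F}_q$, with $M_i$ uniform and independent.

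\textbf{Step 1: every cut set gives a strongly decomposable pair.} Fix $C\in\Lambda(\mathcal{N})$. Take
\[
f_C(M_{I_C})=\sum_{i\in I_C}\alpha_i M_i
\quad\text{and}\quad
\hat f(x,M_{S\setminus I_C})=x+\sum_{i\in S\setminus I_C}\alpha_i M_i .
\]
For each fixed $M_{S\setminus I_C}$, the map $\hat f(\cdot,M_{S\setminus I_C})$ is a translation of $\mathbb{F}_q$, hence injective. So $(C,f)$ is strongly decomposable. The same argument works for any $S'\subseteq S$, so the hypothesis of Corollary~\ref{cor:upper upper} holds. Consequently a pair $(C,W)$ is valid exactly when $W\subseteq C$ and $D_W\subseteq I_C$, which is the index set in the claimed bound.

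\textbf{Step 2: compute the denominator.} Since $\zeta$ is the identity, the argument at the start of the proof of Corollary~\ref{cor:upper upper} gives $f_C\sqcap\zeta=f_C$. Because $C$ is a cut set, $I_C\neq\emptyset$. All coefficients satisfy $\alpha_i\in\mathbb{F}_q^*$ and the $M_i$ are independent and uniform on $\mathbb{F}_q$. Therefore $\sum_{i\in I_C}\alpha_iM_i$ is uniform on $\mathbb{F}_q$: condition on all but one summand, and the remaining term $\alpha_iM_i$ is uniform. Hence $H(f_C)=\log q$. This is consistent with \eqref{eq:0}, since $H(f(M_{I_C},\mathbf{0}))=\log q$.

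\textbf{Step 3: conclude.} Substitute into Theorem~\ref{thm:the upper bound}:
\[
\hat{\mathcal{C}}\le \min\frac{|C|-|W|}{\log q}\,\log|\mathcal{B}| = \min\,(|C|-|W|)\log_q|\mathcal{B}|,
\]
with both minima over $(W,C)\in\mathcal{W}_r\times\Lambda(\mathcal{N})$ satisfying $W\subseteq C$ and $D_W\subseteq I_C$. This is the stated bound. The entropy must be evaluated in a consistent base, which is why $\log q$ is used in Step 2.

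None of the steps is a serious obstacle. The only point that needs care is Step 2, where uniformity of the linear combination requires both that every $\alpha_i$ is nonzero and that $I_C\neq\emptyset$.
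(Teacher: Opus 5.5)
Your proposal is correct and follows the paper's proof essentially step by step. It verifies the strong decomposition via the translation map, so that the hypotheses of Corollary~\ref{cor:upper upper} hold and the valid pairs are exactly those with $W\subseteq C$ and $D_W\subseteq I_C$; it then uses $f_C\sqcap\zeta=f_C$ and $H(f_C)=\log q$ (from $I_C\neq\emptyset$) and substitutes into Theorem~\ref{thm:the upper bound}. Your explicit argument that $\sum_{i:\sigma_i\in I_C}\alpha_i M_i$ is uniform on $\mathbb{F}_q$ fills in a step the paper only asserts.
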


\begin{proof}
For any subset $S' \subseteq S$, the function $f$ admits the decomposition
\[
f(m_1, m_2, \dots, m_s) = \sum_{i \in [s]: \sigma_{i}\in S'} \alpha_i m_i + \sum_{i \in [s]: \sigma_{i}\in  S \setminus S'} \alpha_i m_i.
\]
Fix $m_{S \setminus S'}=\{m_{i}:\sigma_{i}\in S \setminus S' \}$. If $m_{S'}=\{m_{i}:\sigma_{i}\in S' \}$ and $m'_{S'} =\{m'_{i}:\sigma_{i}\in S' \}$ satisfy
\[
\sum_{i \in [s]: \sigma_{i}\in S'} \alpha_i m_i \neq \sum_{i \in [s]: \sigma_{i}\in S'} \alpha_i m_i',
\]
then $f(m_{S'}, m_{S \setminus S'}) \neq f(m_{S'}', m_{S \setminus S'})$. It follows that $f$ and $\zeta$ satisfy the conditions in Corollary \ref{cor:upper upper}, which implies
\[
\hat{\mathcal{C}}(\mathcal{N}, f, \mathcal{W}_r, \zeta) \leq  \min_{\substack{(W,C) \in \mathcal{W}_r \times \Lambda(\mathcal{N}) \\ W \subseteq C,\ D_W \subseteq I_C}} \frac{|C| - |W|}{H(f_{C})} \log|\mathcal{B}|.
\]

We may take $f_C(m_{I_C}) = \sum_{i \in [s]: \sigma_{i}\in I_C} \alpha_i m_i$. Since $I_C \neq \emptyset$, the entropy satisfies $H(f_C) = \log q$. Substituting this into the inequality above yields
\[
\hat{\mathcal{C}}(\mathcal{N}, f, \mathcal{W}_r, \zeta) \leq \min_{\substack{(W,C) \in \mathcal{W}_r \times \Lambda(\mathcal{N}) \\ W \subseteq C,\ D_W \subseteq I_C}} (|C| - |W|)\cdot \log_{q}|\mathcal{B}|.
\]
\end{proof}

Note that Corollary \ref{cor:spec} shows that in this specific scenario, our proposed upper bound in Theorem \ref{thm:the upper bound} degenerates into the upper bound previously established in {\cite[Theorem 1]{GBY2024}}. This further confirms that Theorem \ref{thm:the upper bound} is a generalization of the existing upper bound in \cite{GBY2024}, as our conclusion applies to arbitrary target functions and security functions.

\begin{theorem}
\label{thm:upper_bound_linear}
For the secure network function computation model $(\mathcal{N},f,\mathcal{W}_r,\zeta)$,
the secure computing capacity $\hat{\mathcal{C}}(\mathcal{N},f,\mathcal{W}_r,\zeta)$ satisfies
\[
\hat{\mathcal{C}}(\mathcal{N},f,\mathcal{W}_r,\zeta)
\leq
\min_{C \in \Lambda(\mathcal{N})} \frac{|C|}{H(f(M_{I_C}, \mathbf{0}))} \cdot \log |\mathcal{B}|.
\]
\end{theorem}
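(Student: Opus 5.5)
The plan is to prove this bound directly, without going through Theorem~\ref{thm:the upper bound}. It does not require $(C,f)$ to be strongly decomposable, and it does not involve $\zeta$ or $r$. It is in essence the classical cut-set bound for network function computation without security. Every admissible code for $(\mathcal{N},f,\mathcal{W}_r,\zeta)$ must in particular compute $f$ with zero error, so it suffices to show: for every admissible $(\ell,n)$ code $\hat{C}$ and every $C\in\Lambda(\mathcal{N})$,
\[
\ell\cdot H\bigl(f(M_{I_C},\mathbf{0})\bigr)\le |C|\cdot n\log|\mathcal{B}|.
\]
Here $\mathbf{0}$ denotes a fixed element of each $\mathcal{A}_i$, $\sigma_i\in S\setminus I_C$, as is implicit in the statement.

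\textbf{Step 1: reduce to $Y_C$ plus the outside sources.} Fix $C\in\Lambda(\mathcal{N})$ and let $C'=\bigcup_{\sigma\in S\setminus I_C}\operatorname{Out}(\sigma)$, exactly as in the proof of Theorem~\ref{thm:the upper bound}. Then $\widetilde{C}=C\cup C'$ is a global cut set. By acyclicity, $Y_{\operatorname{In}(\rho)}$ is a function of $Y_{\widetilde{C}}$. Since $Y_{C'}$ is a function of $(\mathbf{M}_{S\setminus I_C},\mathbf{K}_{S\setminus I_C})$, the computability condition gives that $f(\mathbf{M}_S)$ is a deterministic function of $(Y_C,\mathbf{M}_{S\setminus I_C},\mathbf{K}_{S\setminus I_C})$. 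This holds pointwise, for every realization, because the code is zero-error. Edges of $C$ may depend on sources in $J_C$; this causes no difficulty because all of those sources lie in $S\setminus I_C$ and are being conditioned on.

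\textbf{Step 2: specialize the outside inputs.} Fix the realization $\mathbf{m}_{S\setminus I_C}=\mathbf{0}$ (all $\ell$ coordinates zero) and an arbitrary key realization $\mathbf{k}_{S\setminus I_C}$. By Step~1, the map $(\mathbf{m}_{I_C},\mathbf{k}_{I_C})\mapsto f(\mathbf{m}_{I_C},\mathbf{0})$ factors through $\hat{g}_C(\mathbf{m}_{I_C},\mathbf{0},\mathbf{k}_{I_C},\mathbf{k}_{S\setminus I_C})$. Now let $(\mathbf{M}_{I_C},\mathbf{K}_{I_C})$ be random with their original distribution. They are independent of the fixed quantities, so
\[
H\bigl(f(\mathbf{M}_{I_C},\mathbf{0})\bigr)\le H\bigl(\hat{g}_C(\mathbf{M}_{I_C},\mathbf{0},\mathbf{K}_{I_C},\mathbf{k}_{S\setminus I_C})\bigr)\le |C|\,n\log|\mathcal{B}|.
\]
Since the $M_{i,j}$ are i.i.d.\ across $j$, the left-hand side equals $\ell\cdot H(f(M_{I_C},\mathbf{0}))$.

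\textbf{Step 3: conclude.} Rearranging gives $\ell/n\le |C|\log|\mathcal{B}|/H(f(M_{I_C},\mathbf{0}))$ for every $C\in\Lambda(\mathcal{N})$ with $H(f(M_{I_C},\mathbf{0}))>0$. Cut sets with zero entropy give a trivial (infinite) term and can be ignored. Minimizing over $C$ and then taking the supremum over admissible codes yields the theorem.

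The only delicate point, and the one I expect to be the main obstacle, is Step~2. The substitution of fixed values must be carried out pointwise, via the zero-error property, rather than through conditional entropies averaged over $\mathbf{M}_{S\setminus I_C}$. It is also necessary to check that $\mathbf{M}_{I_C}$ keeps its uniform distribution under the specialization, which follows from the mutual independence of all sources and keys.
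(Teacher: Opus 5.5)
Your proposal is correct and takes essentially the same route as the paper. You reduce $Y_{\operatorname{In}(\rho)}$ to a function of $Y_C$ together with $(\mathbf{M}_{S\setminus I_C},\mathbf{K}_{S\setminus I_C})$, fix the outside messages to $\mathbf{0}$ (the paper also fixes the outside keys to $\mathbf{0}$, while you allow any key realization), and bound $\ell\cdot H(f(M_{I_C},\mathbf{0}))$ by $H(Y_C)\le n|C|\log|\mathcal{B}|$; your pointwise handling of the substitution and your note on cuts with $H(f(M_{I_C},\mathbf{0}))=0$ make explicit two points the paper passes over quickly.
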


\begin{proof}
Consider any admissible $(\ell, n)$ function-computing secure network code $\mathcal{C}$ for the model $(\mathcal{N}, f, \mathcal{W}_r, \zeta)$.
By the computability condition of $\mathcal{C}$, we have
\[
H(f(\mathbf{M}_{S}) \mid Y_{\operatorname{In}(\rho)}) = 0,
\]
where $Y_{\operatorname{In}(\rho)}$ denotes the random variables observed by $\rho$.

For any $C \in \Lambda(\mathcal{N})$, we have
$H(Y_{\operatorname{In}(\rho)} \mid Y_C, \mathbf{M}_{S \setminus I_C}, \mathbf{K}_{S \setminus I_C}) = 0$.
Fixing $\mathbf{M}_{S \setminus I_C} = \mathbf{0}$ and $\mathbf{K}_{S \setminus I_C} = \mathbf{0}$, we obtain
\[
H(f(\mathbf{M}_{I_{C}},\mathbf{0}) \mid Y_C) = 0.
\]
This further implies
\[
\ell\cdot H(f(M_{I_C}, \mathbf{0}))=H(f(\mathbf{M}_{I_C}, \mathbf{0}))\leq  H(Y_C)\leq n|C|\cdot \log|\mathcal{B}|.
\]
Thus,
\[
\frac{\ell}{n}\leq \frac{|C|}{H(f(M_{I_C}, \mathbf{0}))} \cdot \log |\mathcal{B}|.
\]
Taking the minimum over all $C \in \Lambda(\mathcal{N})$ completes the proof.
\end{proof}

\begin{remark}
When $f$ and $\zeta$ satisfy the conditions of Corollary \ref{cor:upper upper}, the upper bound on the secure computation capacity provided by Theorem \ref{thm:upper_bound_linear} is not tighter than that in Theorem \ref{thm:the upper bound}. However, for other configurations of $f$ and $\zeta$, the upper bound proposed in Theorem \ref{thm:upper_bound_linear} may yield a better (tighter) bound compared to Theorem \ref{thm:the upper bound}. Hence, to obtain a generally tight upper bound on the secure computation capacity, it is necessary to consider the conclusions of both Theorem \ref{thm:the upper bound} and Theorem \ref{thm:upper_bound_linear} in combination.
\end{remark}

\begin{corollary}\label{cor:spec2}
If both the target function \(f\) and the security function \(\zeta\) are scalar linear functions over $\mathbb{F}_q$, i.e.,
\[
f(m_S) =\zeta(m_S) =  \sum_{i =1}^{s} \alpha_i m_i, \quad \alpha_i \in \mathbb{F}_q^{*},\ \forall i \in [s],
\]
then
\[
\hat{\mathcal{C}}(\mathcal{N}, f, \mathcal{W}_r, \zeta) \leq \min_{\substack{(W,C) \in \mathcal{W}_r \times \Lambda(\mathcal{N}): \\ W \subseteq C, \ I_C=D_{W}=S\ \text{or}\\ W \subseteq C, \ I_C\setminus D_{W}\neq\emptyset}} (|C| - |W|)\cdot \log_{q}|\mathcal{B}|.
\]
\end{corollary}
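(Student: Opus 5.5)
The plan is to prove this directly by an information-theoretic argument modeled on the proof of Theorem~\ref{thm:the upper bound}, rather than by invoking that theorem. The reason is that the second family of pairs, with $I_C\setminus D_W\neq\emptyset$, need not satisfy $D_W\subseteq I_C$, so such pairs are not necessarily valid. Throughout, take $\mathcal{A}_i=\mathbb{F}_q$. Fix an admissible $(\ell,n)$ code and a pair $(W,C)\in\mathcal{W}_r\times\Lambda(\mathcal{N})$ with $W\subseteq C$ that falls into one of the two cases. It suffices to show
\[
\ell\log q\le (|C|-|W|)\, n\log|\mathcal{B}|,
\]
and then minimize over all such pairs.

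\emph{Common step.} As in the derivation of \eqref{eq:bound1}, adjoin $C'=\bigcup_{\sigma\in S\setminus I_C}\operatorname{Out}(\sigma)$ to obtain a global cut. This gives
\[
H\bigl(f(\mathbf{M}_S)\mid Y_C,\mathbf{M}_{S\setminus I_C},\mathbf{K}_{S\setminus I_C}\bigr)=0 .
\]
Hence, writing $Z=(\mathbf{M}_{S\setminus I_C},\mathbf{K}_{S\setminus I_C})$,
\[
H(f(\mathbf{M}_S)\mid Y_W,Z)=I(f(\mathbf{M}_S);Y_{C\setminus W}\mid Y_W,Z)\le H(Y_{C\setminus W}\mid Y_W,Z)\le(|C|-|W|)n\log|\mathcal{B}|.
\]
So in each case the whole task reduces to showing $H(f(\mathbf{M}_S)\mid Y_W,Z)=\ell\log q$.

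\emph{Case $I_C=D_W=S$.} Here $Z$ is empty, so the required quantity is $H(f(\mathbf{M}_S)\mid Y_W)$. Since $\zeta=f$, the security condition gives $I(f(\mathbf{M}_S);Y_W)=0$. Because the $\alpha_i$ are nonzero and the messages are uniform, $f(\mathbf{M}_S)$ is uniform on $\mathbb{F}_q^\ell$. Therefore $H(f(\mathbf{M}_S)\mid Y_W)=H(f(\mathbf{M}_S))=\ell\log q$.

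\emph{Case $I_C\setminus D_W\neq\emptyset$.} Pick $\sigma_j\in I_C\setminus D_W$. Then $Y_W$ is a function of $(\mathbf{M}_{D_W},\mathbf{K}_{D_W})$, which does not involve $\sigma_j$. Also $\sigma_j\notin S\setminus I_C$, so $Z$ does not involve $\sigma_j$ either. Consequently $\mathbf{M}_j$ is uniform on $\mathbb{F}_q^\ell$ and independent of the tuple $(Y_W,Z,\mathbf{M}_{S\setminus\{\sigma_j\}})$. Write
\[
f(\mathbf{M}_S)=\alpha_j\mathbf{M}_j+\sum_{i\neq j}\alpha_i\mathbf{M}_i .
\]
Conditioned on $(Y_W,Z,\mathbf{M}_{S\setminus\{\sigma_j\}})$, this is a uniform vector shifted by a known constant. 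Hence
\[
\ell\log q\ \ge\ H(f(\mathbf{M}_S)\mid Y_W,Z)\ \ge\ H(f(\mathbf{M}_S)\mid Y_W,Z,\mathbf{M}_{S\setminus\{\sigma_j\}})=\ell\log q .
\]
No security condition is needed in this case.

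The only point requiring care is this independence argument. One must verify that $Y_W$ is determined by the messages and keys of $D_W$ alone, which follows from the recursive definition of the global encoding functions. Combining the two cases with the common step yields $\ell/n\le(|C|-|W|)\log_q|\mathcal{B}|$ for every pair in the stated family, and minimizing over the family proves the corollary.
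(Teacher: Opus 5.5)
Your proof is correct, but it is organized differently from the paper's. The paper does no new entropy computation. It applies Theorem~\ref{thm:the upper bound} to the pairs with $I_C=S$; there $f_C=f$, so $f_C\sqcap\zeta=f$ because $\zeta=f$. It also applies Theorem~\ref{thm:upper_bound_linear}, which gives $\min_{C\in\Lambda(\mathcal{N})}|C|\log_q|\mathcal{B}|$, i.e.\ the pairs with $W=\emptyset$. It then says that ``combining'' the two finishes the proof.

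Read literally, those two families do not cover the stated index set. A pair with $W\neq\emptyset$, $I_C\neq S$ and $I_C\setminus D_W\neq\emptyset$ is handled only after an extra observation. For $\sigma_j\in I_C\setminus D_W$, every path from $\sigma_j$ to $\rho$ avoids $W$. Hence $C\setminus W$ is itself a cut set with $\sigma_j\in I_{C\setminus W}$, and $|C|-|W|\ge\min_{C_0\in\Lambda(\mathcal{N})}|C_0|$. The paper leaves this domination step implicit.

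Your second case is the information-theoretic counterpart of that graph observation. Instead of reducing to $W=\emptyset$, you condition on $Y_W$ and use that $\mathbf{M}_j$ is independent of $(Y_W,Z,\mathbf{M}_{S\setminus\{\sigma_j\}})$. This handles every such pair directly and makes explicit that no security condition is used there. Your first case is the proof of Theorem~\ref{thm:the upper bound} specialized to $f_C\sqcap\zeta=f$; note that it uses only $I_C=S$, never $D_W=S$.

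The paper's route is shorter and presents the corollary as a specialization of its two general bounds, which is the purpose of that section. Yours is self-contained and covers the whole stated family without relying on an unstated reduction.
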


\begin{proof}
For any $C \in \Lambda(\mathcal{N})$, the function $f$ admits the decomposition
\[
f(m_S) = \sum_{i \in [s]: \sigma_{i}\in I_{C}} \alpha_i m_i + \sum_{i \in [s]: \sigma_{i}\in S\setminus I_{C}} \alpha_i m_i.
\]
Thus, $(C,f)$ is always strongly decomposable, and $(C,W)$ is valid if and only if $W \subseteq C$ and $D_W \subseteq I_C$.

When $I_C = S$, we have $f_C = f$. Combining $f = \zeta$, we obtain $f_C \sqcap \zeta = f$. Since $H(f) = \log q$, it follows from Theorem \ref{thm:the upper bound} that
\begin{align}\label{eq:theorem1}
\hat{\mathcal{C}}(\mathcal{N}, f, \mathcal{W}_r, \zeta) \leq \min_{\substack{(W,C) \in \mathcal{W}_r \times \Lambda(\mathcal{N}) \\ W \subseteq C,\ I_C=S}} (|C| - |W|) \log_{q} |\mathcal{B}|.
\end{align}

On the other hand, for any $C \in \Lambda(\mathcal{N})$, we have $I_C \neq \emptyset$. Thus $H(f(M_{I_C},\mathbf{0})) = \log q$. From Theorem \ref{thm:upper_bound_linear}, we obtain
\begin{align}\label{eq:theorem2}
\hat{\mathcal{C}}(\mathcal{N}, f, \mathcal{W}_r, \zeta) \leq \min_{C \in \Lambda(\mathcal{N})} |C|\cdot \log_{q}|\mathcal{B}|.
\end{align}

Combining inequalities \eqref{eq:theorem1} and \eqref{eq:theorem2} completes the proof.
\end{proof}

Note that Corollary \ref{cor:spec2} demonstrates that, in this specific scenario, our proposed upper bound on the secure computing capacity reduces to the upper bound established in {\cite[Theorem 1]{arxiv2025secure}}.

\begin{figure}[htbp]
    \centering
\begin{tikzpicture}[
    source/.style={circle,draw,minimum size=6pt,inner sep=1pt},
    every edge/.style={draw, -Stealth, thick},
    middle/.style={circle,draw,minimum size=4pt, inner sep=3pt},
]
\node[source] (s1) at (0, 6) {$\sigma_1$};
\node[above=0 of s1] {$(m_1, k_1)$};

\node[source] (s2) at (4, 6) {$\sigma_2$};
\node[above=0 of s2] {$(m_2, k_2)$};

\node[middle] (m1) at (2, 4.5) {};
\node[middle] (m2) at (2, 2) {};

\node[middle] (b1) at (0, 0) {};
\node[middle] (b2) at (4, 0) {};
\node[source] (rho) at (2, -1.5) {$\rho$};
\node[below=0 of rho] {$m_{1}m_{2}$};

\path (s1) edge node[below,left] {$k_1$} (b1) ;
\path (s1) edge node[sloped,above] {$m_{1}k_{1}$} (m1);

\path (s2) edge node[sloped,above] {$m_{2}k_{2}$} (m1);
\path (s2) edge node[midway,right] {$k_2$} (b2);

\path (m1) edge node[sloped,above] {$m_{1}m_{2}k_{1}k_{2}$} (m2) ;

\path (m2) edge node[sloped,above] {$m_{1}m_{2}k_{1}k_{2}$} (b1);
\path (m2) edge node[sloped,above] {$m_{1}m_{2}k_{1}k_{2}$} (b2);

\path (b1) edge node[sloped,above] {$m_{1}m_{2}k_{2}$} (rho);
\path (b2) edge node[sloped,above] {$k_{2}$} (rho);
\end{tikzpicture}
\caption{An admissible $(1, 1)$ secure network code for the secure model $(\mathcal{N}, f, \mathcal{W}_{r},\zeta)$}
    \label{fig:base_code_f1}
\end{figure}
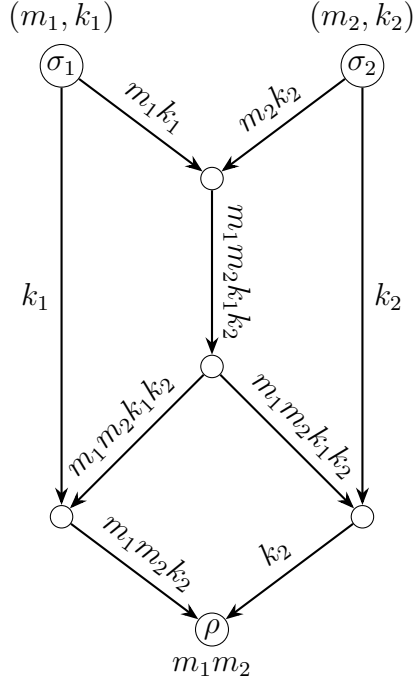

\begin{example}
We consider a secure model $(\mathcal{N}, f, \mathcal{W}_r, \zeta)$, where $\mathcal{N} = (\mathcal{G}, S, \rho)$ is the reverse butterfly network depicted in \autoref{fig:base_code_f1} with the set of source nodes $S = \{\sigma_1, \sigma_2\}$. Let $\mathcal{B} = \mathbb{F}_3^*$, meaning that each edge in the network can reliably transmit one symbol from $\mathbb{F}_3^*$ per use. The target function $f$ is multiplication over the finite field $\mathbb{F}_3$, i.e.,
\[f(m_1, m_2) = m_1 m_2 \pmod{3}, \quad \text{for } m_i \in \mathbb{F}_3^*, i=1,2.\]
The security function $\zeta$ is the identity function, and the security level is $r=1$. By Theorem \ref{thm:the upper bound}, since for any $C \in \Lambda(\mathcal{N})$, the pair $(C, f)$ is strongly decomposable, and
\[H(M_1) = H(M_2) = H(M_1 M_2) = 1,\]
we immediately obtain
\[
\hat{\mathcal{C}}(\mathcal{N}, f, \mathcal{W}_r, \zeta) \leq 1.
\]

On the other hand, we present an admissible $(1,1)$ secure network code as illustrated in \autoref{fig:base_code_f1}, where $m_i, k_i \in \mathbb{F}_3^*$ denote arbitrary realizations of the source message $\mathbf{M}_i$ and the random key $\mathbf{K}_i$ at source node $\sigma_i$, respectively. Hence,
\[
\hat{\mathcal{C}}(\mathcal{N}, f, \mathcal{W}_r, \zeta) = 1.
\]
\end{example}

\section{An Efficient Algorithm for the Upper Bound on Secure Computing Capacity for Linear Target Functions under Source Security}\label{sec:algorithm}

In this section, we focus on a specific case of the secure network function computation model $(\mathcal{N}, f, \mathcal{W}_r, \zeta)$, where $f$ and $\zeta$ are explicitly specified as follows. The security function $\zeta$ is the identity function, meaning our goal is to prevent source information from leaking to wiretappers. The target function $f$ is a linear function over $\mathbb{F}_q$, a fundamental class of functions that has been extensively studied in the literature. Specifically,
\[ f(m_1,m_2,\dots,m_s) = \left(\sum_{i=1}^s a_{i,j} m_i : j=1,2,\dots,k\right),\]
where $m_i, a_{i,j} \in \mathbb{F}_q$ for all $i = 1,2,\dots,s$ and $j = 1,2,\dots,k$. This implies $\mathcal{A}_i = \mathbb{F}_q$ for all $1 \leq i \leq s$ and $\mathcal{O} = \mathbb{F}_q^k$. For notational convenience, we rewrite $f$ in matrix form as
\begin{align}\label{eq:the target function}
f(\vec{m}_S) = \vec{m}_S \cdot T,
\end{align}
where $\vec{m}_S = (m_1,m_2,\dots,m_s)$ and $T \triangleq (a_{i,j})_{1 \leq i \leq s, 1 \leq j \leq k}$ is the coefficient matrix of $f$. Clearly, the linear function $f$ is completely determined by $T$, and we henceforth use $T$ to represent $f$. Without loss of generality, we assume $T$ has full column rank, which implies $k \leq s$. Correspondingly, all independent source messages $M_i$ ($i = 1,2,\dots,s$) are uniformly distributed over $\mathbb{F}_q$.

For brevity, we abbreviate the secure model $(\mathcal{N}, f, \mathcal{W}_r, \zeta)$ and the secure computing capacity $\hat{\mathcal{C}}(\mathcal{N}, f, \mathcal{W}_r,\zeta)$ as $(\mathcal{N}, T, r)$ and $\hat{\mathcal{C}}(\mathcal{N}, T, r)$, respectively. We assume without loss of generality that $\mathcal{B} = \mathbb{F}_q$, meaning each edge can reliably transmit an element of $\mathbb{F}_q$ per network use.

We assume $T$ has no all-zero rows. If there exists an index $i\in[s]$ such that $a_{ij}=0$ for all $1\leq j \leq k$, we remove the corresponding row from $T$ to form a new coefficient matrix $T'$. Simultaneously, we remove the source node $\sigma_{i}$ along with all its outgoing edges $\operatorname{Out}(\sigma_{i})$ from the graph $\mathcal{G}$ to construct a new graph $\mathcal{G}'$. We then update the network $\mathcal{N}$ to
\[\mathcal{N}'=\left(\mathcal{G}', S \backslash\left\{\sigma_{i}: a_{ij}=0\ \text{for all } 1\leq j \leq k\right\}, \rho\right).\]
Consider a new target function $f'$, a linear function over $\mathbb{F}_{q}$ with coefficient matrix $T'$. It is straightforward to verify that securely computing $f$ (with coefficient matrix $T$) over $\mathcal{N}$ at security level $r$ is equivalent to securely computing $f'$ (with coefficient matrix $T'$) over $\mathcal{N}'$ at the same security level, and thus $\hat{\mathcal{C}}(\mathcal{N}, T, r)=\hat{\mathcal{C}}(\mathcal{N}', T', r)$. Therefore, we assume without loss of generality that every row of $T$ is non-zero in the model $(\mathcal{N}, T, r)$.

For a subset of source nodes $A \subseteq S$, let $T_{A}$ denote the submatrix of $T$ consisting of all rows corresponding to the source nodes in $A$. From Theorem \ref{thm:the upper bound}, we immediately obtain an upper bound on $\hat{\mathcal{C}}(\mathcal{N}, T, r)$.

\begin{theorem}\label{cor:the upper bound}
Consider the secure network function computation model $(\mathcal{N}, T, r)$, where the target function $f$ is a linear function over $\mathbb{F}_q$ with coefficient matrix $T\in \mathbb{F}_{q}^{s\times k}$. Then,
\begin{align}
\hat{\mathcal{C}}(\mathcal{N}, T, r) \leq \min_{\substack{(W, C) \in \mathcal{W}_{r} \times \Lambda(\mathcal{N}): \\ W \subseteq C\ \text{and} \ D_{W} \subseteq I_{C}}}\frac{|C|-|W|}{\operatorname{Rank}(T_{I_{C}})}.
\end{align}
\end{theorem}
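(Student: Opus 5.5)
The bound will be obtained by specializing Theorem \ref{thm:the upper bound} (equivalently, via Corollary \ref{cor:upper upper}) to the linear model $(\mathcal{N},T,r)$. Here $\zeta$ is the identity, $\mathcal{B}=\mathbb{F}_q$, and the sources are i.i.d.\ uniform on $\mathbb{F}_q$. Three things need checking: that every pair $(C,f)$ with $C\in\Lambda(\mathcal{N})$ is strongly decomposable, that $f_C\sqcap\zeta=f_C$, and that $H(f_C)=\operatorname{Rank}(T_{I_C})\log q$. Once these hold, the $\log q$ factors cancel against $\log|\mathcal{B}|=\log q$, giving exactly the stated ratio. The minimum then runs over all valid pairs, i.e.\ all $(W,C)\in\mathcal{W}_r\times\Lambda(\mathcal{N})$ with $W\subseteq C$ and $D_W\subseteq I_C$.

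For strong decomposability, write $\vec{m}_S T=\vec{m}_{I_C}T_{I_C}+\vec{m}_{S\setminus I_C}T_{S\setminus I_C}$. Take the intermediate function $f_C(\vec{m}_{I_C})=\vec{m}_{I_C}T_{I_C}$ and the outer function $\hat f(x,\vec{m}_{S\setminus I_C})=x+\vec{m}_{S\setminus I_C}T_{S\setminus I_C}$. For fixed $\vec{m}_{S\setminus I_C}$, $\hat f$ is a translation, hence injective in $x$. Because this works for every subset $S'\subseteq S$ in place of $I_C$, the hypothesis of Corollary \ref{cor:upper upper} holds. Validity of $(C,W)$ therefore reduces to $W\subseteq C$ and $D_W\subseteq I_C$. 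Since $\zeta$ is the identity, $H(f_C\mid\zeta)=0$, so $f_C\sqcap\zeta=f_C$, as argued in the proof of Corollary \ref{cor:upper upper}.

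The only computation is the entropy. Since $M_{I_C}$ is uniform on $\mathbb{F}_q^{|I_C|}$, the linear map $\vec{m}_{I_C}\mapsto\vec{m}_{I_C}T_{I_C}$ pushes it forward to the uniform distribution on its image, the row space of $T_{I_C}$. This is because every fiber is a coset of the kernel and so has the same size. The image has $q^{\operatorname{Rank}(T_{I_C})}$ elements, hence $H(f_C)=\operatorname{Rank}(T_{I_C})\log q$. Because $I_C\neq\emptyset$ and $T$ has no zero rows, this rank is positive, so no denominator vanishes.

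Substituting these facts into Theorem \ref{thm:the upper bound} gives the claim. I expect no serious obstacle; the only point needing care is the uniformity argument for the entropy of the linear image, together with the bookkeeping that $\log|\mathcal{B}|/\log q=1$.
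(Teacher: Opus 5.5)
Your proposal is correct and follows essentially the same route as the paper. You show every $(C,f)$ is strongly decomposable via the additive split $\vec{m}_{I_C}T_{I_C}+\vec{m}_{S\setminus I_C}T_{S\setminus I_C}$, note that $f_C\sqcap\zeta=f_C$ because $\zeta$ is the identity, compute $H(M_{I_C}T_{I_C})=\operatorname{Rank}(T_{I_C})\log q$, and substitute into Theorem~\ref{thm:the upper bound}; the only difference is that for the entropy you use the equal-size-fibers (cosets of the kernel) argument directly, where the paper first reduces to a full-column-rank submatrix $T'_{I_C}$, which is an equivalent computation.
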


\begin{proof}
For any \( C \in \Lambda(\mathcal{N}) \), the function \( f \) can be decomposed as
\[
f(\vec{m}_S) = \vec{m}_{I_C} \cdot T_{I_C} + \vec{m}_{S \setminus I_C} \cdot T_{S \setminus I_C},
\]
where \( \vec{m}_A = (m_i : \sigma_{i} \in A) \) for any \( A \subseteq S \). Thus, \( (C,f) \) is strongly decomposable for all \( C \in \Lambda(\mathcal{N}) \), and \( (C,W) \) is valid if and only if \( W \subseteq C \) and \( D_W \subseteq I_C \). In addition, we have $f_{C}(\vec{m}_{I_C})=\vec{m}_{I_C}\cdot T_{I_C}$. Since \( \zeta \) is the identity function, we have \( f_C \sqcap \zeta = f_C \).

We now compute the entropy \( H((f_C \sqcap \zeta)(M_{I_C})) = H(M_{I_C} \cdot T_{I_C}) \). Since \( I_C \neq \emptyset \) and every row of \( T \) is non-zero, we have \( \operatorname{Rank}(T_{I_C}) \geq 1 \). There exists an \( |I_C| \times \operatorname{Rank}(T_{I_C}) \) submatrix $T'_{I_C}$ of $T_{I_C}$ with full column rank, whose column span is identical to that of $T_{I_C}$. Let $T_{I_{C}}^c$ denote the submatrix of $T_{I_{C}}$ obtained by removing $T'_{I_C}$. Since every column of $T_{I_{C}}^c$ is a linear combination of the columns of $T'_{I_C}$, we have
\[
H( M_{I_C} \cdot T_{I_{C}}^c \mid M_{I_C} \cdot T'_{I_C} )=0.
\]
Thus
\begin{align}
H ( M_{I_C} \cdot T_{I_{C}})
=& H(M_{I_C} \cdot T'_{I_C})+H( M_{I_C} \cdot T_{I_{C}}^c \mid M_{I_C} \cdot T'_{I_C}) \notag \\
=& H(M_{I_C} \cdot T'_{I_C}). \label{eq:xiangdeng}
\end{align}
Recall that all source messages $M_{i}$ ($i=1,2,\dots,s$) are independent and uniformly distributed over $\mathbb{F}_q$. Thus, $M_{I_{C}}$ is uniformly distributed over $\mathbb{F}_{q}^{|I_{C}|}$, i.e., for any $\mathbf{m}\in \mathbb{F}_{q}^{|I_{C}|}$,
\begin{align*}
\operatorname{Pr}\left(M_{I_{C}}=\mathbf{m}\right)=\frac{1}{q^{|I_{C}|}}.
\end{align*}
For any $\mathbf{y} \in \mathbb{F}_{q}^{\operatorname{Rank}(T_{I_{C}})}$, we have
\begin{align*}
\operatorname{Pr}\left(M_{I_{C}} \cdot T'_{I_C}=\mathbf{y}\right)
=&\sum_{\mathbf{m}\in \mathbb{F}_{q}^{|I_{C}|}}\operatorname{Pr}\left(M_{I_{C}}=\mathbf{m}\right)\cdot \operatorname{Pr}\left(M_{I_{C}} \cdot T'_{I_C}=\mathbf{y} \mid M_{I_{C}}=\mathbf{m}\right) \notag \\
=&\sum_{\substack{\mathbf{m}\in \mathbb{F}_{q}^{|I_{C}|}:\\\mathbf{m}\cdot T'_{I_C}=\mathbf{y}}} \operatorname{Pr}\left(M_{I_{C}}=\mathbf{m}\right) \notag\\
=& q^{|I_{C}|-\operatorname{Rank}(T'_{I_C})}\cdot \frac{1}{q^{|I_{C}|}} = \frac{1}{q^{\operatorname{Rank}(T_{I_{C}})}}.
\end{align*}
This implies $M_{I_{C}} \cdot T'_{I_C}$ is uniformly distributed over $\mathbb{F}_q^{\operatorname{Rank}(T_{I_{C}})}$. Together with \eqref{eq:xiangdeng}, we have
\begin{align*}
H(M_{I_{C}} \cdot T_{I_{C}})=\operatorname{Rank}(T_{I_{C}})\cdot \log{q}.
\end{align*}

From Theorem \ref{thm:the upper bound}, we obtain
\[
\hat{\mathcal{C}}(\mathcal{N}, T, r) \leq \min_{\substack{(W, C) \in \mathcal{W}_{r} \times \Lambda(\mathcal{N}): \\ W \subseteq C\ \text{and} \ D_{W} \subseteq I_{C}}}\frac{|C|-|W|}{\operatorname{Rank}(T_{I_{C}})}.
\]
\end{proof}

Note that when the security level $r=0$, Theorem \ref{cor:the upper bound} immediately yields
\[\begin{aligned}
\hat{\mathcal{C}}(\mathcal{N}, T, 0) & \leq \min_{C\in \Lambda(\mathcal{N})}\frac{|C|}{\operatorname{Rank}(T_{I_{C}})}.
\end{aligned}\]
The model of network function computation without information-theoretic security is a special case of the secure network function computation model. It can be observed that the upper bound in Theorem \ref{cor:the upper bound} matches the one presented in \cite{WXG2023}, which is the best known upper bound for computing linear target functions over arbitrary networks. Thus, Theorem \ref{cor:the upper bound} can be regarded as an extension to the secure setting.

We propose an efficient algorithm to compute the upper bound in Theorem \ref{cor:the upper bound}. We first investigate edge-set semilattices in the network and establish their key properties, which are crucial for developing an efficient algorithm to calculate the bound.

For a wiretap set $W \in \mathcal{W}_{r}$, we define
\begin{align}\label{eq:Omega}
\Omega(W) = \min_{\substack{C \in \Lambda(\mathcal{N}): \\ W \subseteq C \ \text{and}\  D_{W} \subseteq I_{C}}}\frac{|C|-|W|}{\operatorname{Rank}(T_{I_{C}})}.
\end{align}
Accordingly, the upper bound in Theorem \ref{cor:the upper bound} can be rephrased as
\[\hat{\mathcal{C}}(\mathcal{N}, T, r) \leq \min_{W \in \mathcal{W}_{r}} \Omega(W).\]

\begin{lemma}[{\cite[Lemma 5]{GBY2024}}]
Let $W$ be an edge subset and $W'$ be a minimum cut separating $W$ from $D_{W}$. Then \(D_{W'} = D_{W}\).
\end{lemma}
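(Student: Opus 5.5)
We prove the two inclusions $D_{W'}\subseteq D_W$ and $D_W\subseteq D_{W'}$ separately. The first comes from the fact that every edge of a minimum cut lies on a path into $W$. The second comes from the cut property of $W'$ together with the definition of $D_W$.

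\emph{The inclusion $D_W \subseteq D_{W'}$.} Take $\sigma\in D_W$. By definition there is an edge $e\in W$ and a path $P$ from $\sigma$ to $e$. Since $\sigma\in D_W$ and $W'$ separates $W$ from $D_W$, the path $P$ (extended through $e$) must use some edge $d\in W'$. One boundary convention needs checking: a cut separating an edge $e$ must block every path ending in $e$, so $d$ may equal $e$ itself. In either case the prefix of $P$ up to $d$ is a path from $\sigma$ to $d$. Hence $\sigma\to d$ with $d\in W'$, which gives $\sigma\in D_{W'}$.

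\emph{The inclusion $D_{W'}\subseteq D_W$.} This is where minimality is used. First we show that every edge $d\in W'$ lies on some path from a source in $D_W$ to an edge of $W$, i.e.\ there exist $\sigma'\in D_W$ and $e\in W$ with $\sigma'\to d\to e$.

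\begin{itemize}
\item Suppose $d$ is not reachable from $D_W$. Then no path from $D_W$ to $W$ uses $d$, so $W'\setminus\{d\}$ is still a cut separating $W$ from $D_W$. This contradicts the minimality of $W'$.
\item Suppose $d$ does not reach any edge of $W$. Again no separating path passes through $d$, so $d$ can be dropped, a contradiction.
\item Suppose both reachabilities hold but only through separate paths. Then a path $\sigma'\to d$ and a path $d\to e$ exist, and their concatenation is a walk from $\sigma'$ to $e$. Because $\mathcal{G}$ is acyclic, this walk is a genuine path. So $d$ lies on a path from $D_W$ to $W$, as required.
\end{itemize}

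Now let $\sigma\in D_{W'}$, so $\sigma\to d$ for some $d\in W'$. By the claim, $d\to e$ for some $e\in W$. Concatenating, and again using acyclicity to obtain a path, we get $\sigma\to e$, hence $\sigma\in D_W$.

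The main obstacle is the minimality step. The argument must show carefully that removing an edge $d$ not lying on any $D_W$–$W$ path preserves the separating property. This holds because every path from $D_W$ to $W$ avoids $d$ and therefore still meets $W'\setminus\{d\}$.
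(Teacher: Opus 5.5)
Your proof is correct: the cut property of $W'$ gives $D_W\subseteq D_{W'}$. Inclusion-minimality of a minimum cut forces every $d\in W'$ to reach some edge of $W$, since otherwise $W'\setminus\{d\}$ would still separate $W$ from $D_W$, and concatenation then gives $D_{W'}\subseteq D_W$. The paper does not prove this lemma itself (it imports it from \cite{GBY2024}), and your two-inclusion argument is the standard one; the three-case split is more than you need, because the second direction uses only the fact that each $d\in W'$ reaches $W$.
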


\begin{lemma}[{\cite[Lemma 6]{GBY2024}}] \label{lem:relation of omega}
Let $W \in \mathcal{W}_{r}$ be a wiretap set, and let $W'$ be a minimum cut separating $W$ from $D_{W}$. Then
\[\Omega\left(W'\right) \leq \Omega(W).\]
\end{lemma}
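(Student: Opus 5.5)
Let $W\in\mathcal{W}_r$ and let $W'$ be a minimum cut separating $W$ from $D_W$. The plan is to show that every cut set $C$ feasible in the minimization defining $\Omega(W)$ can be replaced by a cut set $\widetilde C$ that is feasible for $\Omega(W')$ and whose objective value is no larger. Taking the minimum over $C$ then gives $\Omega(W')\le\Omega(W)$.

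First, $|W'|\le|W|$, because $W$ itself is a cut separating $W$ from $D_W$. Hence $W'\in\mathcal{W}_r$. By the preceding lemma, $D_{W'}=D_W$.

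Now fix $C\in\Lambda(\mathcal{N})$ with $W\subseteq C$ and $D_W\subseteq I_C$, and set $\widetilde C=(C\setminus W)\cup W'$. I will verify the following three properties.

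\begin{itemize}
\item[(i)] $W'\subseteq\widetilde C$. This holds by construction.
\item[(ii)] $I_C\subseteq I_{\widetilde C}$. Take $\sigma\in I_C$ and any path $P$ from $\sigma$ to $\rho$. The path $P$ meets $C$. If it meets $C\setminus W$, it meets $\widetilde C$. Otherwise it meets some $e\in W$, so $\sigma\to e$ and hence $\sigma\in D_W$. The prefix of $P$ from $\sigma$ to $e$ is a path from $D_W$ to $W$, so it must cross $W'$. Thus $P$ meets $\widetilde C$. It follows that $I_{\widetilde C}\supseteq I_C\ne\emptyset$, so $\widetilde C\in\Lambda(\mathcal{N})$, and also $D_{W'}=D_W\subseteq I_C\subseteq I_{\widetilde C}$.
\item[(iii)] $|\widetilde C|-|W'|\le |C\setminus W|=|C|-|W|$. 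This follows since $|\widetilde C|\le|C\setminus W|+|W'|$.
\end{itemize}

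Finally, $I_C\subseteq I_{\widetilde C}$ gives $\operatorname{Rank}(T_{I_{\widetilde C}})\ge\operatorname{Rank}(T_{I_C})\ge1$. Since the numerators are nonnegative, we get
\[
\frac{|\widetilde C|-|W'|}{\operatorname{Rank}(T_{I_{\widetilde C}})}\le\frac{|C|-|W|}{\operatorname{Rank}(T_{I_C})}.
\]
The left side is at least $\Omega(W')$, and minimizing the right side over $C$ gives $\Omega(W)$. This proves the claim.

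The main obstacle I expect is step (ii), specifically the path-rerouting argument. It needs care in two places. One is the edge-versus-node cut convention for "separating $W$ from $D_W$": a path reaching the tail of $e\in W$ must be blocked, and I would read the definition as blocking paths that end with the edge $e$. The other is the possible overlap between $W'$ and $C\setminus W$, which only helps the size count in (iii).
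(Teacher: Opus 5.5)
Your proof is correct. It is the natural replacement argument: swap $W'$ in for $W$ inside a feasible cut $C$, get $I_C\subseteq I_{(C\setminus W)\cup W'}$ by making the prefix of a path that ends at an edge of $W$ cross $W'$, and use $D_{W'}=D_W$. The paper itself gives no proof here and simply cites \cite[Lemma 6]{GBY2024}, which was stated for the scalar setting; your observation that $I_C\subseteq I_{\widetilde C}$ forces $\operatorname{Rank}(T_{I_{\widetilde C}})\ge\operatorname{Rank}(T_{I_C})\ge 1$ is exactly what extends the result to the rank-weighted $\Omega$ used in this paper.
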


Note that if $W'$ is a minimum cut separating $W$ from $D_{W}$, then $W'$ is also a minimum cut separating $W'$ from $D_{W'}$, which implies $|W'| = \operatorname{mincut}(D_{W'},W')$. Based on this observation, we define the set $\mathcal{W}_{r}^{*} = \{W\subseteq \mathcal{E}: |W| = \operatorname{mincut}(D_{W},W) \leq r\}$. Consequently, the upper bound in Theorem \ref{cor:the upper bound} simplifies to
\[\hat{\mathcal{C}}(\mathcal{N}, T, r) \leq \min_{W \in \mathcal{W}_{r}^{*}} \Omega(W).\]

To facilitate subsequent analysis, we first introduce a natural partial order ``$\prec$'' on the edge set $\mathcal{E}$: for any two edges $d, e \in \mathcal{E}$, we write $d \prec e$ if there exists a directed path from $d$ to $e$ in $\mathcal{G}$. Building on the concept of minimum cuts, we further define a binary relation ``$\sim^{s}$'' on edge subsets: for any two edge subsets $\alpha, \beta \subseteq \mathcal{E}$, $\alpha \sim^{s} \beta$ if and only if there exists an edge subset $\xi$ that is a common minimum cut separating $\alpha$ from $D_{\alpha}$ and $\beta$ from $D_{\beta}$.

\begin{lemma}
The binary relation ``$\sim^{s}$'' is an equivalence relation on the edge subsets of $\mathcal{G}$. Specifically, for arbitrary edge subsets $W, W', W'' \subseteq \mathcal{E}$, it satisfies:
\begin{itemize}
\item (Reflexivity) $W \sim^{s} W$;
\item (Symmetry) If $W\sim^{s} W'$, then $W'\sim^{s} W$;
\item (Transitivity) If $W \sim^{s} W'$ and $W'\sim^{s} W''$, then $W\sim^{s} W''$.
\end{itemize}
\end{lemma}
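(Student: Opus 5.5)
The plan is to derive all three properties from one structural fact about minimum cuts. Throughout, fix an edge subset $W$ and write $D=D_W$.

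\emph{Reflexivity.} Since $\mathcal{E}$ is finite and $W$ itself separates $W$ from $D_W$, the set $\operatorname{MINCUT}(D_W,W)$ is nonempty. Any member of it is a common minimum cut separating $W$ from $D_W$ and $W$ from $D_W$, so $W\sim^{s}W$.

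\emph{Symmetry.} This is immediate, because the defining condition is symmetric in $\alpha$ and $\beta$.

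\emph{Transitivity.} The core of the argument is the following claim.

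\emph{Claim:} if $\xi\in\operatorname{MINCUT}(D_W,W)$, then $\operatorname{MINCUT}(D_W,W)=\operatorname{MINCUT}(D_\xi,\xi)$.

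Given the claim, suppose $\xi$ is a common minimum cut for $W$ and $W'$, and $\eta$ is a common minimum cut for $W'$ and $W''$. By the claim applied to $W'$ with $\xi$ and with $\eta$, both sets $\operatorname{MINCUT}(D_\xi,\xi)$ and $\operatorname{MINCUT}(D_\eta,\eta)$ equal $\operatorname{MINCUT}(D_{W'},W')$. Applying the claim to $W$ with $\xi$, and to $W''$ with $\eta$, we get
\[
\operatorname{MINCUT}(D_W,W)=\operatorname{MINCUT}(D_{W'},W')=\operatorname{MINCUT}(D_{W''},W'').
\]
Hence $\xi$ itself is a common minimum cut separating $W$ from $D_W$ and $W''$ from $D_{W''}$, and so $W\sim^{s}W''$.

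\emph{Proof of the claim, first steps.}
\begin{itemize}
\item By \cite[Lemma 5]{GBY2024}, $D_\xi=D$.
\item Let $X$ be any cut separating $\xi$ from $D$. Every path from $D$ to $W$ passes through some edge of $\xi$. The prefix of that path ending at this edge must meet $X$, so $X$ also separates $W$ from $D$. Consequently $\operatorname{mincut}(D,\xi)\ge\operatorname{mincut}(D,W)=|\xi|$.
\item Since $\xi$ trivially separates itself from $D$, we get $\operatorname{mincut}(D,\xi)=|\xi|$.
\item The argument above also shows that every $X\in\operatorname{MINCUT}(D,\xi)$ is a cut for $W$ of size $|\xi|$, so $\operatorname{MINCUT}(D,\xi)\subseteq\operatorname{MINCUT}(D,W)$.
\end{itemize}

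\emph{Proof of the claim, converse inclusion.} Let $\eta\in\operatorname{MINCUT}(D,W)$. Then $|\eta|=|\xi|=\operatorname{mincut}(D,\xi)$, so it suffices to show that $\eta$ separates $\xi$ from $D$.

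I would argue via Menger's theorem. Take $|\xi|$ edge-disjoint paths $Q_1,\dots,Q_{|\xi|}$ from $D$ to $W$. Since both $\xi$ and $\eta$ are minimum cuts, each $Q_j$ contains exactly one edge of $\xi$ and exactly one edge of $\eta$.

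Suppose, for contradiction, that some path $P$ runs from $D$ to an edge $e\in\xi$ while avoiding $\eta$. Let $Q_j$ be the Menger path through $e$. The suffix of $Q_j$ after $e$ reaches $W$, so the $\eta$-edge $d$ of $Q_j$ must lie after $e$; otherwise $P$ followed by this suffix would be a path from $D$ to $W$ avoiding $\eta$.

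I would then uncross using the source-side reachability sets. Let $R_\eta$ be the set of edges reachable from $D$ without passing through $\eta$. Define $\eta^{*}$ as the set of edges of $\xi\cup\eta$ that are met first along the Menger paths, i.e., the boundary of $R_\xi\cap R_\eta$. The aims are:
\begin{itemize}
\item to show that $\eta^{*}$ is still a cut separating $W$ from $D$;
\item to show, via a counting argument with submodularity of cut sizes, that $|\eta^{*}|\le|\xi|$, which forces equality;
\item to use acyclicity of $\mathcal{G}$ to conclude that the configuration produced by $P$ contradicts the fact that $\xi$ is a cut for $W$ of minimum size.
\end{itemize}

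This uncrossing step is where I expect the real difficulty. If it resists, I would instead try to show directly that the minimum cuts of $W$ form a lattice under the "closer to $D$" order, and derive the contradiction from that lattice structure.
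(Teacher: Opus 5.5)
Your reflexivity and symmetry arguments are fine, and so is the first half of your Claim: $\operatorname{mincut}(D_\xi,\xi)=|\xi|$ and $\operatorname{MINCUT}(D_\xi,\xi)\subseteq\operatorname{MINCUT}(D_W,W)$. \textbf{The reverse inclusion is false}, so no uncrossing argument can establish it. Take a single path $\sigma\to u\to v\to w$ with edges $a=(\sigma,u)$, $b=(u,v)$, $c=(v,w)$, and let $W=\{c\}$. Then $D_W=\{\sigma\}$, and $\{a\}$, $\{b\}$, $\{c\}$ all lie in $\operatorname{MINCUT}(D_W,W)$. But for $\xi=\{a\}$ we get $\operatorname{MINCUT}(D_\xi,\xi)=\{\{a\}\}$, and $\eta=\{b\}$ does not separate $\xi$ from $\sigma$. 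The configuration you set out to refute is a path reaching $e\in\xi$ while avoiding $\eta$, with the $\eta$-edge of the corresponding Menger path lying after $e$. That is simply the situation where $\eta$ sits downstream of $\xi$, and nothing contradicts it. Consequently the chain $\operatorname{MINCUT}(D_W,W)=\operatorname{MINCUT}(D_{W'},W')=\operatorname{MINCUT}(D_{W''},W'')$, on which your transitivity step rests, is false in general. For example, $\{c\}\sim^{s}\{b\}$ via the common cut $\{a\}$. Yet $\{c\}$ is a minimum cut separating $\{c\}$ from $\sigma$ but not one separating $\{b\}$ from $\sigma$.

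What is missing is a common lower bound, not an equality. Keep the inclusion you did prove, and show that any $\xi,\eta\in\operatorname{MINCUT}(D_{W'},W')$ share an element of $\operatorname{MINCUT}(D_\xi,\xi)\cap\operatorname{MINCUT}(D_\eta,\eta)$.
\begin{itemize}
\item Take $m$ edge-disjoint Menger paths $Q_1,\dots,Q_m$ from $D_{W'}$ to $W'$, and write $\xi=\{\xi_i\}$, $\eta=\{\eta_i\}$ with $\xi_i,\eta_i$ on $Q_i$.
\item Let $\zeta$ consist of the earlier of $\xi_i,\eta_i$ on each $Q_i$.
\item Suppose a path $P$ from $D_{W'}$ reached $\xi\cup\eta$ while avoiding $\zeta$, and let $e$ be its first edge in $\xi\cup\eta$. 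If $e=\xi_i$, then $\eta_i$ strictly precedes $e$ on $Q_i$. So $P$ up to $e$, followed by the remainder of $Q_i$, reaches $W'$ while avoiding $\eta$, a contradiction. The case $e=\eta_i$ is symmetric.
\item Hence $\zeta\in\operatorname{MINCUT}(D_\xi,\xi)\cap\operatorname{MINCUT}(D_\eta,\eta)$. By \cite[Lemma 5]{GBY2024}, $D_W=D_\xi=D_{W'}=D_\eta=D_{W''}$, so your inclusion places $\zeta$ in $\operatorname{MINCUT}(D_W,W)\cap\operatorname{MINCUT}(D_{W''},W'')$.
\end{itemize}
This is the paper's ``minord'' construction. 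The paper verifies directly that this set separates $W$ and $W''$ from their sources, whereas your inclusion reduces that to separating $\xi$ and $\eta$. Your ``boundary of $R_\xi\cap R_\eta$'' and lattice ideas point at exactly this set, but its role is to serve as the meet of $\xi$ and $\eta$, not to produce a contradiction.
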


\begin{proof}
Reflexivity and symmetry are straightforward. We only prove transitivity. Let $\text{CUT}^{1}$ be a common minimum cut separating $W$ from $D_{W}$ and $W'$ from $D_{W'}$, i.e., $\text{CUT}^{1}\in \operatorname{MINCUT}(D_{W},W)\cap \operatorname{MINCUT}(D_{W'},W')$. Similarly, let $\text{CUT}^{2}$ be a common minimum cut separating $W'$ from $D_{W'}$ and $W''$ from $D_{W''}$, i.e., $\text{CUT}^{2}\in \operatorname{MINCUT}(D_{W'},W')\cap \operatorname{MINCUT}(D_{W''},W'')$. Clearly,
\[|\text{CUT}^{1}|=\operatorname{mincut}(D_{W'},W')=|\text{CUT}^{2}|.
\]
Let \(m \triangleq \operatorname{mincut}(D_{W'},W')\). By Menger’s Theorem, there exist $m$ edge-disjoint paths $P_{1},P_{2},\dots,P_{m}$ from $D_{W'}$ to $W'$, and no additional paths exist after removing these paths. By the definition of minimum cuts, we have $\text{CUT}^{j}=\{e_{j,i} :1\leq i \leq m\}$ for $j = 1,2$, where $e_{j,i}$ lies on $P_{i}$.

We define a new edge set:
\[\text{CUT}=\{\operatorname{minord}(e_{1,i},e_{2,i}) : 1\leq i \leq m\},\]
where
\[\operatorname{minord}\left(e_{1, i}, e_{2, i}\right)=\left\{\begin{array}{ll}e_{1, i}, & e_{1, i} \prec e_{2, i}, \\ e_{2, i}, & \text { otherwise. }\end{array}\right.\]
We next prove $\text{CUT}$ is a common minimum cut separating $W$ from $D_{W}$ and $W''$ from $D_{W''}$. By symmetry, we only show $\text{CUT} \in \operatorname{MINCUT}(D_{W},W)$. Suppose, for contradiction, that there exists a path $P_{W}$ from $D_{W}$ to $W$ after removing all edges in $\text{CUT}$ from $\mathcal{G}$. Since $\text{CUT}^{1}$ is a minimum cut separating $W$ from $D_{W}$, $P_{W}$ must pass through $\text{CUT}^{1}$. Thus, there exists $i'\in[m]$ such that $P_{W}$ passes through $e_{1,i'}\in \text{CUT}^{1}$, and for any other edge $e_{1,i}\in \text{CUT}^{1}$ that $P_{W}$ passes through, $e_{1,i'} \prec e_{1,i}$. Since $e_{1,i'} \notin \text{CUT}$, we have $e_{2,i'}\prec e_{1,i'}$.

Let $P_{W}^{1}$ denote the subpath of $P_{W}$ from $D_{W}$ to $e_{1,i'}$. We claim $P_{W}^{1}$ does not pass through $\text{CUT}^{2}$. If it did, there would exist $i^{*}\in [m]$ such that $e_{1,i^{*}}\prec e_{2,i^{*}}$ and $P_{W}^{1}$ passes through $e_{2,i^{*}}$. We could then construct a path $P^{*}$ from $D_{W}$ to $W'$ by concatenating the subpath of $P_{W}^{1}$ from $D_{W}$ to $e_{2,i^{*}}$ with the subpath of $P_{i^{*}}$ from $e_{2,i^{*}}$ to $W'$. However, $P^{*}$ does not pass through $\text{CUT}^{1}$, contradicting the fact that $\text{CUT}^{1}$ is a minimum cut separating $W'$ from $D_{W'}$. Thus, $P_{W}^{1}$ does not pass through $\text{CUT}^{2}$.

Next, we construct a path $P^{W'}$ from some source node to $W'$ by concatenating $P_{W}^{1}$ with the subpath of $P_{i'}$ from $e_{1,i'}$ to $W'$. Neither subpath passes through $\text{CUT}^{2}$, so $P^{W'}$  does not pass through $\text{CUT}^{2}$, contradicting the fact that $\text{CUT}^{2}$ is a minimum cut separating $W'$ from $D_{W'}$.

Our initial assumption is false, so $\text{CUT} \in \operatorname{MINCUT}(D_{W},W)$. Similarly, we can prove $\text{CUT} \in \operatorname{MINCUT}(D_{W''},W'')$, establishing transitivity.
\end{proof}

We next consider the collection $\mathcal{W}_{r}^{*}$ for a non-negative integer $r$. We define another binary relation ``$\leq$'' on $\mathcal{W}_{r}^{*}$: for any $\alpha, \beta \in \mathcal{W}_{r}^{*}$, $\alpha \leq \beta$ if and only if $\alpha \in \operatorname{MINCUT}(D_{\beta},\beta)$. This relation is a partial order:

\begin{lemma}
The binary relation ``$\leq$'' on $\mathcal{W}_{r}^{*}$ is a partial order.
\end{lemma}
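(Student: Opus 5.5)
We verify the three axioms of Definition~\ref{def:poset} directly from the definition $\alpha \leq \beta \iff \alpha \in \operatorname{MINCUT}(D_{\beta},\beta)$ on $\mathcal{W}_r^{*}$. Two facts are used repeatedly. First, the earlier lemma (\cite[Lemma 5]{GBY2024}) says that if $\alpha\in\operatorname{MINCUT}(D_\beta,\beta)$ then $D_\alpha=D_\beta$. Second, for any edge set $\alpha$, every edge of $\alpha$ is reachable from $D_\alpha$ by definition of $D_\alpha$.

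\emph{Reflexivity.} For $\alpha\in\mathcal{W}_r^{*}$, the set $\alpha$ trivially separates $\alpha$ from $D_\alpha$, since every path ending in $\alpha$ uses an edge of $\alpha$. By the definition of $\mathcal{W}_r^{*}$ we have $|\alpha|=\operatorname{mincut}(D_\alpha,\alpha)$, so $\alpha\in\operatorname{MINCUT}(D_\alpha,\alpha)$, i.e. $\alpha\le\alpha$.

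\emph{Antisymmetry.} Suppose $\alpha\le\beta$ and $\beta\le\alpha$. Then $D_\alpha=D_\beta\triangleq D$ and $|\alpha|=|\beta|=m$. Take $m$ edge-disjoint paths $P_1,\dots,P_m$ from $D$ to $\beta$ (Menger). Since $\alpha$ is a minimum cut separating $\beta$ from $D$, each $P_i$ contains exactly one edge $a_i\in\alpha$, and by the cut property $a_i\preceq b_i$, where $b_i\in\beta$ is the terminal edge of $P_i$. Note that the terminal edges $b_i$ exhaust $\beta$: they are distinct because the paths are edge-disjoint, and there are $m=|\beta|$ of them.

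Symmetrically, $\beta$ separates $\alpha$ from $D$. Each $a_i$ is reached from $D$ by the prefix of $P_i$ ending at $a_i$, and this prefix must meet $\beta$. The main obstacle is to show that this forces $a_i=b_i$ rather than producing some cyclic pattern $a_i\prec b_j\prec a_k\prec\cdots$. We argue by acyclicity as follows. Suppose some $a_i\ne b_i$, so $a_i\prec b_i$ strictly. Then the prefix of $P_i$ up to $a_i$ hits some $b_j$ with $b_j\preceq a_i\prec b_i$, and $j\neq i$ because $P_i$ ends at $b_i$ and is a path in a DAG. Since $b_j$ lies on $P_j$ with $a_j\preceq b_j$, we get $a_j\prec a_i$. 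Iterating this argument (each strict step passes to an index with a strictly smaller $a$ in the order $\prec$) must terminate in the finite DAG at an index where the chain cannot continue. Turning this into a contradiction is the delicate part. The cleanest route is a counting argument instead: $\beta\setminus\alpha$ and $\alpha\setminus\beta$ must both be empty. If $b\in\beta\setminus\alpha$, consider the path $P_i$ ending at $b$. Its edge $a_i$ is strictly before $b$, so $a_i$ is reached from $D$ through a $\beta$-edge $b_j\prec b$. Composing this with the segment of $P_j$ gives a contradiction with the edge-disjoint structure and the requirement that $\alpha$ meets every $D$–$\beta$ path exactly once. Concretely, the path $D\to b_j\to(\text{along } P_i)\to b$ would have to meet $\alpha$ twice, namely at $a_j$ and at $a_i$. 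That is impossible for a minimum cut that is met by the $m$ disjoint paths exactly once each, once we choose $P_1,\dots,P_m$ to be "latest" paths. We would make this precise by choosing the cut $\alpha$ and the paths so that the standard argument applies. We expect this to be the step requiring the most care. With it established, $\alpha=\beta$.

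\emph{Transitivity.} Suppose $\alpha\le\beta\le\gamma$. Then $D_\alpha=D_\beta=D_\gamma$ and $|\alpha|=|\beta|=|\gamma|$. Any path from $D_\gamma$ to $\gamma$ meets $\beta$; its prefix ending at that $\beta$-edge is a path from $D_\beta=D_\gamma$ to $\beta$, hence it meets $\alpha$. So $\alpha$ separates $\gamma$ from $D_\gamma$. Since $|\alpha|=|\gamma|=\operatorname{mincut}(D_\gamma,\gamma)$, we conclude $\alpha\in\operatorname{MINCUT}(D_\gamma,\gamma)$, i.e. $\alpha\le\gamma$.
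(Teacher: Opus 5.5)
\textbf{Gap: antisymmetry is not proved.} Your reflexivity and transitivity arguments are fine; the paper simply calls them obvious. Antisymmetry, the only part with real content, is not established. You flag the key step as delicate, start an iteration ($a_j\prec a_i$, and so on) that you never close, and then switch to a ``counting argument'' that rests on a false principle. It is \emph{not} true that $\alpha$ meets every path from $D$ to $\beta$ exactly once. That property is guaranteed only for the $m$ Menger paths. An arbitrary path, such as your concatenation of $P_j$ up to $b_j$ with the rest of $P_i$, can meet a minimum cut twice without any contradiction. Moreover, $\alpha$ is given, so you cannot ``choose the cut'' to make a standard argument apply, and the ``latest paths'' are never defined. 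As written, $\alpha=\beta$ does not follow.

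\textbf{The missing step.} You need only one observation, and everything required is already in your setup. The edge $b_j$ is the terminal edge of $P_j$, so $b_j\in P_j$. Since the paths are edge-disjoint, $b_j$ cannot lie on $P_i$ for $j\neq i$. Because $\beta=\{b_1,\dots,b_m\}$, the only edge of $\beta$ on $P_i$ is $b_i$. The prefix of $P_i$ ending at $a_i$ must contain an edge of $\beta$, since $\beta$ separates $\alpha$ from $D$; that edge is therefore $b_i$ itself. So $b_i\preceq a_i\preceq b_i$ along $P_i$, and $a_i=b_i$. The $a_i$ are $m$ distinct edges of $\alpha$ and $|\alpha|=m$, so $\alpha=\{a_1,\dots,a_m\}=\{b_1,\dots,b_m\}=\beta$. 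In particular, the situation $b_j\preceq a_i\prec b_i$ with $j\neq i$ that you iterate on is contradictory from the start.

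\textbf{A shorter alternative.} You can also avoid Menger entirely. For $a\in\alpha$, the set $\alpha\setminus\{a\}$ is too small to separate $\alpha$ from $D_\alpha$. Hence some path from $D_\alpha$ ends at $a$ and contains no other edge of $\alpha$. This path meets $\beta$ at some $b$. Its prefix up to $b$ is a path from $D_\beta=D_\alpha$ to $\beta$, so it must meet $\alpha$. The only $\alpha$-edge on the path is its last edge $a$, which forces $b=a$. Thus $\alpha\subseteq\beta$, and symmetrically $\beta\subseteq\alpha$.

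Either argument is what the paper's one-line claim (that $D_W=D_{W'}$ plus the two separation properties force $W=W'$) leaves implicit.
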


\begin{proof}
Reflexivity and transitivity are obvious. For antisymmetry, let $W, W' \in \mathcal{W}_{r}^{*}$ satisfy $W\leq W'$ and $W'\leq W$. By definition, $W$ separates $W'$ from $D_{W'}$ and $W'$ separates $W$ from $D_{W}$. Since $D_{W}=D_{W'}$, this implies $W = W'$.
\end{proof}

The equivalence relation ``$\sim^{s}$'' partitions $\mathcal{W}_{r}^{*}$ into distinct equivalence classes. We show that each class, partially ordered by ``$\leq$'', forms a meet-semilattice.

\begin{theorem}\label{thm:meet}
Let $Cl_{r}$ be an equivalence class of $\mathcal{W}_{r}^{*}$ under ``$\sim^{s}$''. Then $(Cl_{r}, \leq)$ is a meet-semilattice.
\end{theorem}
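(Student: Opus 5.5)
Given $W_1, W_2 \in Cl_r$, I will exhibit a lower bound of $\{W_1,W_2\}$ in $(Cl_r,\leq)$ that dominates every other lower bound; this is their meet. Since $W_1 \sim^{s} W_2$, there is a common minimum cut $\xi \in \operatorname{MINCUT}(D_{W_1},W_1)\cap \operatorname{MINCUT}(D_{W_2},W_2)$. By the lemma quoted from \cite[Lemma 5]{GBY2024}, $D_{W_1}=D_{\xi}=D_{W_2}$; write $D$ for this common source set. All elements of $\operatorname{MINCUT}(D,W_1)\cap\operatorname{MINCUT}(D,W_2)$ have size $m=\operatorname{mincut}(D,W_1)=|W_1|\le r$. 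Each such element $X$ satisfies $D_X=D$, so $|X|=\operatorname{mincut}(D_X,X)$ and $X\in\mathcal{W}_r^*$. Also $X\sim^{s} W_1$, because $X$ is itself a minimum cut separating $X$ from $D_X$, and $X$ separates $W_1$ from $D$. Hence the set $L$ of lower bounds of $\{W_1,W_2\}$ in $Cl_r$ is exactly $\operatorname{MINCUT}(D,W_1)\cap\operatorname{MINCUT}(D,W_2)$, and it is nonempty.

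Next I show that $L$ is closed under the ``later'' combination, i.e. the dual of the minord construction used in the transitivity proof of $\sim^{s}$. By Menger's theorem, fix $m$ edge-disjoint paths $P_1,\dots,P_m$ from $D$ to $W_1$. Any two elements $X=\{x_i\}$ and $Y=\{y_i\}$ of $L$ each meet every $P_i$ in exactly one edge, $x_i$ and $y_i$ respectively. Define $Z=\{\operatorname{maxord}(x_i,y_i)\}$, choosing on each $P_i$ the edge that lies later along $P_i$. I then prove that $Z$ separates $W_1$ from $D$, and likewise $W_2$ from $D$, with $|Z|=m$, so $Z\in L$. The argument mirrors the one in the transitivity lemma. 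Suppose a path $P$ from $D$ to $W_1$ avoids $Z$. Then $P$ hits $X$, and I take the \emph{last} cut edge $x_{i'}$ on $P$. Since $x_{i'}\notin Z$, we have $x_{i'}\prec y_{i'}$. I splice the portion of $P$ after $x_{i'}$, or the appropriate segments of the $P_i$, to build a path from $D$ to $W_1$ that avoids $Y$ (or avoids $X$), which is a contradiction. The same reasoning applies for $W_2$, using that $X$ and $Y$ are also minimum cuts for $W_2$. The paths $P_i$ can be taken from $W_1$ only, since the cut sizes agree and every element of $L$ meets each $P_i$ exactly once.

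Finally, because $L$ is finite and closed under this binary operation $\vee'$, the element $W^\wedge=\vee' L$ (iterating over all of $L$) belongs to $L$. It remains to check that every $X\in L$ satisfies $X\le W^\wedge$, i.e. $X\in\operatorname{MINCUT}(D,W^\wedge)$. Each edge of $X$ precedes or equals the corresponding edge of $W^\wedge$ on its path $P_i$. A path from $D$ to $W^\wedge$ avoiding $X$ could be extended along the $P_i$ to $W_1$ while still avoiding $X$, which contradicts $X\in L$. Hence $W^\wedge$ is the greatest lower bound, so it is the meet $W_1\wedge W_2$, and $(Cl_r,\leq)$ is a meet-semilattice.

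The main obstacle is the path-splicing step showing that the later-edge combination $Z$ is again a cut. In particular, the case analysis has to handle paths that revisit different $P_i$ in a non-monotone order, and this requires the careful choice of an extremal cut edge along $P$, just as in the transitivity lemma.
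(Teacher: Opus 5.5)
Your proposal is correct and follows essentially the paper's proof: the maxord combination of two lower bounds along fixed Menger paths to $W_1$, the contradiction obtained by splicing at the last edge of $(X\setminus Z)\cup(Y\setminus Z)$ on an offending path, and the extension along $P_i$ to show that each lower bound lies below the combined cut. The differences are cosmetic. You identify the lower-bound set explicitly as $\operatorname{MINCUT}(D,W_1)\cap\operatorname{MINCUT}(D,W_2)$, using $D_{W_1}=D_{W_2}$, and take the iterated maxord over all of it, whereas the paper shows the lower-bound set is directed and then invokes finiteness.
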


\begin{proof}
We show that for any $W, W' \in Cl_{r}$, their meet $W\wedge W'$ (infimum of $\{W, W'\}$) exists. Define the set of lower bounds:
\[\{W,W'\}^{l}=\{\eta\in Cl_{r}: \eta\leq W\ \text{and}\ \eta\leq W'\}.\]
Since there exists a common minimum cut separating $W$ from $D_{W}$ and $W'$ from $D_{W'}$, $\{W,W'\}^{l}$ is non-empty. Since $\{W,W'\}^{l}$ is a finite set, it suffices to show that for any $\eta_{1}, \eta_{2} \in \{W,W'\}^{l}$, there exists $\eta\in \{W,W'\}^{l}$ such that $\eta_{1}\leq \eta$ and $\eta_{2}\leq \eta$. This ensures the existence of a maximum element in $\{W,W'\}^{l}$, which is exactly the infimum $W\wedge W'$.

Let $m\triangleq \operatorname{mincut}(D_{W},W)$. By Menger’s Theorem, there exist $m$ edge-disjoint paths $P_{1}, P_{2}, \dots , P_{m}$ from $D_{W}$ to $W$, and no additional paths from $D_{W}$ to $W$ can exist after removing these $m$ paths. Since both $\eta_{1}$ and $\eta_{2}$ are minimum cuts separating $W$ from $D_{W}$, we have $\eta_{j}=\{e_{j,i} :1\leq i \leq m\}$ for $j = 1,2$, where $e_{j,i}$ lies on $P_{i}$. We construct a new edge set $\eta$ as
\[\eta=\{\operatorname{maxord}(e_{1,i},e_{2,i}) : 1\leq i \leq m\},\]
where
\[\operatorname{maxord}\left(e_{1, i}, e_{2, i}\right)=\left\{\begin{array}{ll}e_{2, i}, & e_{1, i} \prec e_{2, i}, \\ e_{1, i}, & \text { otherwise. }\end{array}\right.\]

We next prove that $\eta \in \{W,W'\}^{l}$ by verifying three key conditions: $\eta$ is a minimum cut separating $W$ from $D_{W}$, $\eta$ is a minimum cut separating $W'$ from $D_{W'}$, and $\eta\in Cl_{r}$. We first show that $\eta$ is a cut separating $W$ from $D_{W}$ by contradiction. Suppose, for contradiction, that there exists a path $P_{W}$ from $D_{W}$ to $W$ in $\mathcal{G}$ after removing all edges in $\eta$. Since $\eta_{1}$ and $\eta_{2}$ are both minimum cuts separating $W$ from $D_{W}$, $P_{W}$ must pass through at least one edge from each of $\eta_{1}$ and $\eta_{2}$. Let
\[\eta_{1}^{c}\triangleq \eta_{1}\backslash\eta \quad \text{and} \quad \eta_{2}^{c}\triangleq \eta_{2}\backslash\eta.\]
It follows that $P_{W}$ must pass through edges in both $\eta_{1}^{c}$ and $\eta_{2}^{c}$. Thus, there exist $i'\in[m]$ and $j'\in \{1,2\}$ such that $P_{W}$ passes through $e_{j',i'}\in \eta_{j'}$, and for any other edge $e \in \eta_{1}^{c}\cup \eta_{2}^{c}$ traversed by $P_{W}$, we have $e \prec e_{j',i'}$. We then construct a path $P^{*}$ from a source node to $W$ by concatenating the subpath of $P_{i'}$, which is a path from $D_{W}$ to $e_{j',i'}$, with the subpath of $P_{W}$, which is a path from $e_{j',i'}$ to $W$. Let $j''=3-j'$. It is easy to observe that $P^{*}$ does not pass through any edge in $\eta_{j''}$, which contradicts the fact that $\eta_{j''}$ is a minimum cut separating $W$ from $D_{W}$. Therefore, our assumption is false, and $\eta$ is indeed a cut separating $W$ from $D_{W}$. Since $|\eta|=m=\operatorname{mincut}(D_{W},W)$, $\eta$ is a minimum cut separating $W$ from $D_{W}$, which implies $\eta\leq W$.

By analogous reasoning, $\eta$ is also a minimum cut separating $W'$ from $D_{W'}$, so $\eta\leq W'$. Additionally, since $\operatorname{mincut}(D_{\eta},\eta)=|\eta|$, we have $\eta\in Cl_{r}$. Combining these results, we conclude that $\eta\in \{W,W'\}^{l}$.

We further prove that $\eta_{1}\leq \eta$. If $\eta_{1}=\eta$, the conclusion holds trivially. Assume $\eta_{1}\neq \eta$. Then $\eta\backslash \eta_{1}\neq\emptyset$. We claim that $\eta_{1}$ is a cut separating $\eta$ from $D_{\eta}$. Suppose not, then there exists a path $P_{\eta}$ from $D_{\eta}$ to some edge $e'\in \eta\backslash \eta_{1}$ in $\mathcal{G}$ after removing all edges in $\eta_{1}$. Since $\eta\backslash \eta_{1}\subseteq \eta_{2}$, we have $e'\in \eta_{2}$, i.e., $e'=e_{2,i}$ for some $i\in[m]$. We construct a path $P'$ from a source node to $W$ by concatenating the subpath of $P_{\eta}$, which is a path from $D_{\eta}$ to $e_{2,i}$, with the subpath of $P_{i}$, which is a path from $e_{2,i}$ to $W$. However, $P'$ does not pass through any edge in $\eta_{1}$, which contradicts the fact that $\eta_{1}$ is a minimum cut separating $W$ from $D_{W}$. Thus, $\eta_{1}$ is a cut separating $\eta$ from $D_{\eta}$. Furthermore, since $\operatorname{mincut}(D_{\eta},\eta)=|\eta|=|\eta_{1}|$, $\eta_{1}$ is a minimum cut separating $\eta$ from $D_{\eta}$, which implies $\eta_{1}\leq \eta$.

By symmetry, we also have $\eta_{2}\leq \eta$. In summary, the set $\{W,W'\}^{l}$ has a maximum element $\eta$, which is the infimum of $\{W, W'\}$. Therefore, the theorem holds.
\end{proof}

From \cref{prop:bottom-top-exist}, the meet-semilattice $(Cl_{r}, \leq)$ has a unique bottom element $\perp$. This leads to the following lemma:

\begin{lemma}\label{lem:exist bottom}
Let $Cl_{r}$ be an equivalence class of $\mathcal{W}_{r}^{*}$ under ``$\sim^{s}$''. Then there exists a unique element $\perp \in Cl_{r}$ such that $\perp$ is a minimum cut separating every other element of $Cl_{r}$ from all source nodes.
\end{lemma}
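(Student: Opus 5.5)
The plan is to read the lemma off Theorem~\ref{thm:meet} together with Proposition~\ref{prop:bottom-top-exist}, and then translate the order-theoretic bottom into the statement about cuts.

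\textbf{Step 1: existence and uniqueness of a bottom.} By Theorem~\ref{thm:meet}, $(Cl_{r},\leq)$ is a meet-semilattice. It is finite because $\mathcal{E}$ is finite. It is nonempty because an equivalence class always contains at least one element. Proposition~\ref{prop:bottom-top-exist} therefore gives a unique bottom $\perp\in Cl_{r}$, satisfying $\perp\leq W$ for all $W\in Cl_{r}$. Concretely, $\perp$ can be taken as the iterated meet $\bigwedge Cl_{r}$ of all elements, which is well defined since binary meets exist and the class is finite.

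\textbf{Step 2: unwinding the order relation.} By the definition of ``$\leq$'' on $\mathcal{W}_{r}^{*}$, the relation $\perp\leq W$ means $\perp\in\operatorname{MINCUT}(D_{W},W)$. That is, $\perp$ is a minimum cut separating $W$ from $D_{W}$.

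\textbf{Step 3: upgrading to all source nodes.} The lemma asks for a cut separating $W$ from \emph{all} source nodes. Sources in $S\setminus D_{W}$ have no path to any edge of $W$ by the definition of $D_{W}$. So any cut separating $W$ from $D_{W}$ also separates $W$ from $S$. For minimality, note that $\operatorname{mincut}(S,W)=\operatorname{mincut}(D_{W},W)$, since paths from $S$ to $W$ can only start in $D_{W}$. Hence $\perp$ is a minimum cut separating $W$ from $S$ for every $W\in Cl_{r}$.

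\textbf{Step 4: uniqueness.} Suppose $\perp'\in Cl_{r}$ also has this property. Then $\perp'\leq W$ for every $W\in Cl_{r}$, so $\perp'$ is a bottom of the poset. By antisymmetry, $\perp\leq\perp'$ and $\perp'\leq\perp$ together give $\perp=\perp'$.

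\textbf{Main obstacle.} The step that needs care is the reading of ``every other element'' in the statement. I would state the claim for all $W\in Cl_{r}$, including $W=\perp$ itself, where it holds trivially because $\perp\in\mathcal{W}_{r}^{*}$ means $|\perp|=\operatorname{mincut}(D_{\perp},\perp)$. I would also point out explicitly the equivalence between separating $W$ from $D_{W}$ and separating it from $S$ used in Step 3.
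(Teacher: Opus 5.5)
Your proposal is correct and follows the same route as the paper, which obtains the lemma directly from Theorem~\ref{thm:meet} together with Proposition~\ref{prop:bottom-top-exist}. Your Steps 2--4 make explicit what the paper leaves implicit: translating $\perp\leq W$ into the cut statement, using $\operatorname{mincut}(S,W)=\operatorname{mincut}(D_{W},W)$ to pass from $D_W$ to all of $S$, and deriving uniqueness from antisymmetry.
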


We now present a key theorem simplifying the upper bound:

\begin{theorem}
Let $Cl_{r}$ be an equivalence class of $\mathcal{W}_{r}^{*}$ under ``$\sim^{s}$'', and let $\perp$ be the bottom element of $(Cl_{r}, \leq)$. Then,
\[\Omega(\perp)\leq \Omega(W), \ \text{for all } W\in Cl_{r}.\]
\end{theorem}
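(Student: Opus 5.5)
We prove the statement by transporting, for any $W\in Cl_r$, an optimal cut set for $W$ into a feasible cut set for $\perp$ whose ratio is no larger. Fix $W\in Cl_r$ and let $C\in\Lambda(\mathcal{N})$ attain the minimum in the definition \eqref{eq:Omega} of $\Omega(W)$, so that $W\subseteq C$, $D_W\subseteq I_C$, and $\Omega(W)=(|C|-|W|)/\operatorname{Rank}(T_{I_C})$. By Lemma~\ref{lem:exist bottom}, $\perp$ is a minimum cut separating $W$ from $D_W$. The elements of $Cl_r$ have a common minimum cut, so $D_\perp=D_W$ by the lemma quoted from \cite[Lemma 5]{GBY2024}. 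Moreover, $|\perp|=|W|=\operatorname{mincut}(D_W,W)$, since both lie in $\mathcal{W}_r^*$ and $\perp$ is a minimum cut for $W$. Define
\[
C'\triangleq (C\setminus W)\cup \perp .
\]
Then $\perp\subseteq C'$ and $|C'|\le |C|-|W|+|\perp|=|C|$, which gives $|C'|-|\perp|\le |C|-|W|$.

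It remains to show that $I_C\subseteq I_{C'}$. Then $C'\in\Lambda(\mathcal{N})$, $D_\perp=D_W\subseteq I_C\subseteq I_{C'}$, and $\operatorname{Rank}(T_{I_{C'}})\ge \operatorname{Rank}(T_{I_C})$, because $T_{I_C}$ is a row-submatrix of $T_{I_{C'}}$. Together these give $\Omega(\perp)\le (|C'|-|\perp|)/\operatorname{Rank}(T_{I_{C'}})\le \Omega(W)$.

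The key step, and the main obstacle, is this inclusion $I_C\subseteq I_{C'}$. Take $\sigma\in I_C$ and suppose, for contradiction, that some path $P$ from $\sigma$ to $\rho$ avoids $C'$. Since $\sigma\in I_C$, the path $P$ must meet $C$, and since $P$ avoids $C\setminus W\subseteq C'$, it meets $C$ only in edges of $W$. Let $e\in W$ be the first edge of $C$ on $P$. Because $\sigma\to e$, we have $\sigma\in D_W=D_\perp$, and the prefix of $P$ up to $e$ is a path from $D_W$ to $W$. As $\perp$ separates $W$ from $D_W$, this prefix contains an edge of $\perp\subseteq C'$, contradicting the choice of $P$. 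Hence $\sigma\in I_{C'}$.

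The point requiring care is that the argument uses only that $\perp$ is a cut (not a minimum one) for $W$ from $D_W$, plus the size equality $|\perp|=|W|$. The same reasoning is essentially \cite[Lemma 6]{GBY2024} (Lemma~\ref{lem:relation of omega}) applied with $W'=\perp$. So one may alternatively just invoke Lemma~\ref{lem:relation of omega} directly, after observing via Lemma~\ref{lem:exist bottom} that $\perp\in\operatorname{MINCUT}(D_W,W)$. I would present it that way and include the direct argument above for completeness.
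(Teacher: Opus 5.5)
Your proposal is correct and is essentially the paper's proof, which just combines Lemma~\ref{lem:exist bottom} (giving $\perp\in\operatorname{MINCUT}(D_W,W)$) with Lemma~\ref{lem:relation of omega}; your direct argument via $C'=(C\setminus W)\cup\perp$ is a valid self-contained proof of the latter. One small correction to your closing remark: the extra ingredient the direct argument actually needs is $D_\perp\subseteq D_W$ (to ensure $D_\perp\subseteq I_{C'}$), which comes from minimality via \cite[Lemma 5]{GBY2024}, whereas the counting step does not need $|\perp|=|W|$, since $|C'|-|\perp|=|(C\setminus W)\setminus\perp|\le|C|-|W|$ holds outright.
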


\begin{proof}
It follows directly from Lemma \ref{lem:exist bottom} and Lemma \ref{lem:relation of omega}.
\end{proof}

Let $\widehat{\mathcal{W}}_{r}^{*}$ denote the set of bottom elements of all meet-semilattices $(Cl_{r},\leq)$ corresponding to the equivalence classes $Cl_{r}$ of $\mathcal{W}_{r}^{*}$ under the equivalence relation ``$\sim^{s}$''. Then, the upper bound in Theorem \ref{cor:the upper bound} can be rewritten as
\begin{align}\label{eq:bound form}
\hat{\mathcal{C}}(\mathcal{N},T,r)\leq \min_{W\in \widehat{\mathcal{W}}_{r}^{*}} \Omega(W).
\end{align}

Fix $W\in \widehat{\mathcal{W}}_{r}^{*}$. Next, we present a simplified form of the expression $\Omega(W)$ defined in \eqref{eq:Omega}. Let $\mathcal{G}_{W}$ denote the residual graph obtained by removing the edges in $W$ from $\mathcal{G}$, and let $\mathcal{N}_{W} \triangleq (\mathcal{G}_{W}, S, \rho)$. We show that $\Omega(W)$ can be computed within $\mathcal{N}_{W}$:

\begin{lemma}
Let $W\subseteq \mathcal{E}$ be an edge subset. Then
\begin{align}\label{eq:omega}
\Omega(W)=\min_{\substack{C\in \Lambda(\mathcal{N}_{W}):\\ D_{W}\subseteq I^{W}_{C}}}\frac{|C|}{\operatorname{Rank}(T_{I^{W}_{C}})},
\end{align}
where
\[I^{W}_{C}=\{\sigma\in S: \sigma \nrightarrow \rho \ \text{in}\ \mathcal{G}_{W}\ \text{after deleting the edges in}\ C\}.\]
\end{lemma}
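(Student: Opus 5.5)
We will prove \eqref{eq:omega} by exhibiting a size-preserving correspondence between the two feasible families: on the left, cut sets $C\in\Lambda(\mathcal{N})$ with $W\subseteq C$ and $D_W\subseteq I_C$; on the right, cut sets $C'\in\Lambda(\mathcal{N}_W)$ with $D_W\subseteq I^W_{C'}$. The maps will be $C\mapsto C'=C\setminus W$ and $C'\mapsto C=C'\cup W$. The key observation is that deleting $C$ from $\mathcal{G}$ yields exactly the same graph as deleting $C\setminus W$ from $\mathcal{G}_W$, whenever $W\subseteq C$. In both cases the remaining edge set is $\mathcal{E}\setminus C$. Hence, for such $C$ and $C'=C\setminus W$, we get $I^W_{C'}=I_C$ directly from the definitions.

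The steps are as follows.

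\textbf{Step 1 (left to right).} Take any $C$ feasible for \eqref{eq:Omega} and set $C'=C\setminus W$. Then $C'\subseteq\mathcal{E}\setminus W$, so $C'$ is an edge subset of $\mathcal{G}_W$, and $I^W_{C'}=I_C\neq\emptyset$. Thus $C'\in\Lambda(\mathcal{N}_W)$ and $D_W\subseteq I^W_{C'}$. Moreover $|C'|=|C|-|W|$ and $\operatorname{Rank}(T_{I^W_{C'}})=\operatorname{Rank}(T_{I_C})$, so the two objective values coincide. This gives ``$\geq$'' in \eqref{eq:omega}.

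\textbf{Step 2 (right to left).} Take any $C'$ feasible on the right and set $C=C'\cup W$. This union is disjoint, so $|C|-|W|=|C'|$. The remaining edge set is again the same, so $I_C=I^W_{C'}\neq\emptyset$. Hence $C\in\Lambda(\mathcal{N})$, $W\subseteq C$, and $D_W\subseteq I_C$, and the objective values agree. This gives ``$\leq$''.

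\textbf{Step 3 (bookkeeping).} Note that $\operatorname{Rank}(T_{I_C})\geq 1$ on both sides. This holds because the relevant $I$-set is nonempty and every row of $T$ is nonzero, as arranged earlier in the section. So all ratios are well defined, and both minima are over finite nonempty families, or both families are empty simultaneously.

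There is no real obstacle here. The only point needing care is confirming that $I^W_{C'}$ and $I_C$ are defined with respect to the same residual graph, namely $\mathcal{G}$ with all of $\mathcal{E}\setminus(\mathcal{E}\setminus C)$ removed. One must also check that the disjointness $C'\cap W=\emptyset$ is what makes the size identity $|C|-|W|=|C'|$ exact.
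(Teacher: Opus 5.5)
Your proposal is correct and follows essentially the same route as the paper, which maps an optimal $C$ to $C\setminus W$ and an optimal $C'$ to $C'\cup W$, using $I_C=I^W_{C\setminus W}$ and $I^W_{C'}=I_{C'\cup W}$. Your observation that both residual graphs are literally $(\mathcal{V},\mathcal{E}\setminus C)$ justifies these identities directly and makes the paper's path-based separation arguments unnecessary.
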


\begin{proof}
Let $C \in \Lambda(\mathcal{N})$ be an edge subset belonging to
\[\operatorname{arg} \min_{C \in \Lambda(\mathcal{N})}\left\{\frac{|C\backslash W|}{\operatorname{Rank}(T_{I_{C}})}: W \subseteq C \ \text{and}\ D_{W} \subseteq I_{C}\right\}.\]
We claim $C\backslash W$ is a cut separating $\rho$ from $D_{W}$ in $\mathcal{G}_{W}$. If not, there exists a path $P$ from some $\sigma \in D_{W}$ to $\rho$ in $\mathcal{G}_{W}$ that does not pass through $C\backslash W$. This path would also be a path in $\mathcal{G}$ from $\sigma$ to $\rho$ that does not pass through $C$, which contradicts the fact that $C$ is a cut separating $\rho$ from $D_{W}$ in $\mathcal{G}$. Thus, \(C\backslash W\in  \Lambda(\mathcal{N}_{W})\) and \(D_{W}\subseteq I^{W}_{C\backslash W}\). Furthermore, \(I_{C}= I^{W}_{C\backslash W}\), so
\begin{align*}
\min_{\substack{C\in \Lambda(\mathcal{N}_{W}):\\ D_{W}\subseteq I^{W}_{C}}}\frac{|C|}{\operatorname{Rank}(T_{I^{W}_{C}})} \leq  \frac{|C\backslash W|}{\operatorname{Rank}(T_{I^{W}_{C\backslash W}})} =\frac{|C|-| W|}{\operatorname{Rank}(T_{I_{C}})} =\Omega(W).
\end{align*}

Conversely, let $C \in \Lambda(\mathcal{N}_{W})$ be an edge subset belonging to
\[\operatorname{arg} \min_{C \in \Lambda(\mathcal{N}_{W})}     \left\{\frac{|C|}{\operatorname{Rank}(T_{I^{W}_{C}})}: D_{W} \subseteq I_{C}^{W}\right\}.\]
We claim $C \cup W$ is a cut separating $\rho$ from $D_{W}$ in $\mathcal{G}$. If not, there exists a path from some $\sigma \in D_{W}$ to $\rho$ in $\mathcal{G}$ that does not pass through $C \cup W$, which implies a path in $\mathcal{G}_{W}$ from $\sigma$ to $\rho$ that does not pass through $C$, contradicting the choice of $C$. Thus, \(C \cup W \in \Lambda(\mathcal{N})\) and \(D_{W}\subseteq I_{C \cup W}\). Note that \(I^{W}_{C}=I_{C\cup W}\), so
\[
\Omega(W)\leq \frac{|C\cup W|-|W|}{\operatorname{Rank}(T_{I_{C\cup W}})} = \frac{|C|}{\operatorname{Rank}(T_{I_{C}^{W}})} =\min_{\substack{C\in \Lambda(\mathcal{N}_{W}):\\ D_{W}\subseteq I^{W}_{C}}}\frac{|C|}{\operatorname{Rank}(T_{I^{W}_{C}})}.
\]
This completes the proof.
\end{proof}

\begin{lemma}\label{lem:cut_rank_relation}
Let \(W \subseteq \mathcal{E}\) be an edge set. For any \(C\in \Lambda(\mathcal{N}_{W})\) with \(D_{W}\subseteq I^{W}_{C}\), if \(C'\) is a minimum cut separating \(\rho\) from \(C\) in \(\mathcal{G}_{W}\), then \(C'\in \Lambda(\mathcal{N}_{W})\), \(D_{W}\subseteq I^{W}_{C'}\), and
\[\frac{|C'|}{\operatorname{Rank}(T_{I^{W}_{C'}})}\leq \frac{|C|}{\operatorname{Rank}(T_{I^{W}_{C}})}.\]
\end{lemma}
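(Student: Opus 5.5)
The plan is to show two things: that $C'$ cuts off at least the sources that $C$ cuts off, so that $I^{W}_{C}\subseteq I^{W}_{C'}$, and that $|C'|\le|C|$. The inequality then follows at once, because $\operatorname{Rank}(T_{A})$ is monotone in $A$: $T_{A}$ is a row-submatrix of $T_{B}$ whenever $A\subseteq B$. The other two conclusions also follow from the inclusion. Since $I^{W}_{C}\neq\emptyset$, we get $I^{W}_{C'}\neq\emptyset$, so $C'\in\Lambda(\mathcal{N}_{W})$. And $D_{W}\subseteq I^{W}_{C}\subseteq I^{W}_{C'}$.

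For $|C'|\le |C|$, I use only the definition of a minimum cut separating $\rho$ from $C$ in $\mathcal{G}_W$. The set $C$ itself separates $\rho$ from $C$, in the natural convention that every path starting at an edge of $C$ passes through $C$. Hence $\operatorname{mincut}(C,\rho)\le |C|$, and so $|C'|\le|C|$.

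For the inclusion, take $\sigma\in I^{W}_{C}$ and suppose, for contradiction, that some path $P$ from $\sigma$ to $\rho$ in $\mathcal{G}_W$ avoids $C'$. I compare with $\mathcal{G}_W$ with $C$ removed.
\begin{itemize}
\item If $P$ avoided $C$, then $\sigma\notin I^{W}_{C}$, a contradiction. So $P$ meets $C$; let $e\in C$ be the last edge of $C$ on $P$.
\item The subpath of $P$ from $e$ to $\rho$ is a path from $C$ to $\rho$ in $\mathcal{G}_W$.
\item Since $C'$ separates $\rho$ from $C$, this subpath must meet $C'$. But it is part of $P$, which avoids $C'$, a contradiction.
\end{itemize}
Hence every path from $\sigma$ to $\rho$ in $\mathcal{G}_W$ meets $C'$, that is, $\sigma\in I^{W}_{C'}$.

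The only delicate point is the convention for paths starting at an edge of $C$. The issue is whether a cut separating $\rho$ from $C$ may contain edges of $C$ itself, and whether the subpath from $e$ counts as a path "from $C$". I adopt the paper's convention, under which the path from edge $e$ includes $e$. This is the same convention used in Lemma~\ref{lem:relation of omega} and in the meet-semilattice arguments. Under it, both the claim that $C$ is itself a cut and the subpath argument above go through. Once this is settled, the rest is routine monotonicity of rank.
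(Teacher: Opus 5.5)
Your proposal is correct and follows essentially the same argument as the paper. You get $|C'|\le|C|$ because $C$ itself separates $\rho$ from $C$, and you get $I^{W}_{C}\subseteq I^{W}_{C'}$ by taking a $C'$-avoiding path from a source in $I^{W}_{C}$ and extracting its tail starting at an edge of $C$; the rest follows from monotonicity of $\operatorname{Rank}(T_{A})$ in $A$, and your explicit treatment of the path convention fills in a point the paper leaves implicit.
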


\begin{proof}
Since $C'$ is a minimum cut separating $\rho$ from $C$, $|C'|\leq|C|$. We first prove $I^{W}_{C}\subseteq I^{W}_{C'}$ by contradiction. Suppose $I^{W}_{C}\nsubseteq I^{W}_{C'}$; then there exists a path $P$ from $\sigma\in I^{W}_{C}\backslash I^{W}_{C'}$ to $\rho$ in $\mathcal{G}_{W}$ that does not pass through $C'$. Since $C$ separates $\rho$ from $I^{W}_{C}$, $P$ must pass through an edge $e\in C$. This implies  the existence of a path from $e$ to $\rho$ in $\mathcal{G}_{W}$ that does not pass
through $C'$, contradicting the fact that $C'$ separates $\rho$ from $C$. Thus, $I^{W}_{C}\subseteq I^{W}_{C'}$, so \(\operatorname{Rank}(T_{I^{W}_{C}})\leq \operatorname{Rank}(T_{I^{W}_{C'}})\). Combining with $|C'|\leq|C|$, we obtain
\[\frac{|C'|}{\operatorname{Rank}(T_{I^{W}_{C'}})}\leq \frac{|C|}{\operatorname{Rank}(T_{I^{W}_{C}})}.\]
Since \(D_{W}\subseteq I^{W}_{C}\), we have \(D_{W}\subseteq I^{W}_{C'}\). This verifies that \(C'\) is a cut set in \(\mathcal{N}_{W}\) satisfying \(D_{W}\subseteq I^{W}_{C'}\). 
\end{proof}

By Lemma \ref{lem:cut_rank_relation}, we can restrict our attention to cuts \(C\) in $\mathcal{N}_{W}$ satisfying \(|C| = \text{mincut}^{W}(C,\rho)\), where $\text{mincut}^{W}(C,\rho)$ is the minimum cut capacity between $C$ and $\rho$ in $\mathcal{N}_W$. This is because Lemma \ref{lem:cut_rank_relation} guarantees that no other cuts can yield a smaller value of $\frac{|C|}{\operatorname{Rank}(T_{I^{W}_{C}})}$. Define the set $\mathcal{C}_{W}^{*}$ as
\[\mathcal{C}_{W}^{*}=\{C\in \Lambda(\mathcal{N}_{W}): |C|=\text{mincut}^{W}(C,\rho)\ \text{and}\ D_{W}\subseteq I_{C}^{W}\}.\]
Then, $\Omega(W)$ can be rewritten as
\begin{align}
\Omega(W)=\min_{C\in \mathcal{C}_{W}^{*}}\frac{|C|}{\operatorname{Rank}(T_{I^{W}_{C}})}.
\end{align}

We next introduce binary relations to simplify $\Omega(W)$ further. For two edge subsets $\alpha, \beta $ in $\mathcal{G}_{W}$, define ``$\sim^{\rho}$'': $\alpha \sim^{\rho} \beta$ if and only if there exists a common minimum cut separating $\rho$ from both $\alpha$ and $\beta$. As proven in \cite{GY2018}, ``$\sim^{\rho}$'' is an equivalence relation, which partitions the edge subsets in $\mathcal{C}_W^*$ into distinct equivalence classes. We next define a binary relation ``$\leq^{\rho}$'' on $\mathcal{C}_{W}^*$: for any $\alpha, \beta \in \mathcal{C}_{W}^*$, $\alpha \leq^{\rho} \beta$ if $\beta$ is a minimum cut separating $\rho$ from $\alpha$. The following lemma is cited from \cite{GCY2025}:

\begin{lemma}[{\cite[Theorem 8]{GCY2025}}]
The binary relation ``$\leq^{\rho}$'' on $\mathcal{C}_{W}^*$ is a partial order. Let $Cl$ be an equivalence class of $\mathcal{C}_{W}^*$ under ``$\sim^{\rho}$''. Then $(Cl,\leq^{\rho})$ is a join-semilattice.
\end{lemma}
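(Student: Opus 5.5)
The plan is to mirror, in the dual direction, the argument used for Theorem~\ref{thm:meet}, replacing sources by the sink $\rho$ and ``minimum cut separating $\beta$ from $D_\beta$'' by ``minimum cut separating $\rho$ from $\alpha$'' inside $\mathcal{G}_W$.

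\textbf{Partial order.} Reflexivity holds because every $\alpha\in\mathcal{C}_W^*$ satisfies $|\alpha|=\text{mincut}^{W}(\alpha,\rho)$, so $\alpha$ is itself a minimum cut separating $\rho$ from $\alpha$. For transitivity, suppose $\alpha\leq^{\rho}\beta$ and $\beta\leq^{\rho}\gamma$. Then every path from $\alpha$ to $\rho$ meets $\beta$, and its tail from $\beta$ to $\rho$ meets $\gamma$. Hence $\gamma$ separates $\rho$ from $\alpha$. Its size is $|\gamma|=|\beta|=|\alpha|=\text{mincut}^{W}(\alpha,\rho)$, so $\gamma$ is a minimum cut. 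For antisymmetry, suppose $\alpha\leq^{\rho}\beta$ and $\beta\leq^{\rho}\alpha$. Take any $e\in\alpha$ and any path from $e$ to $\rho$. It meets $\beta$ at some $e_1$ with $e\preceq e_1$. The path from $e_1$ to $\rho$ then meets $\alpha$ at some $e_2$ with $e_1\preceq e_2$. Iterating, and using acyclicity, the minimality of cuts, and $|\alpha|=|\beta|$ via the Menger paths, forces $e\in\beta$. The cleanest way to do this is to take $m=|\alpha|$ edge-disjoint paths from $\alpha$ to $\rho$, each of which contains exactly one edge of each minimum cut, and compare positions on each path.

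\textbf{Join.} Fix $\alpha,\beta$ in the same class $Cl$. The set of upper bounds $\{\alpha,\beta\}^u$ is nonempty by the definition of $\sim^{\rho}$, and it is finite. It therefore suffices to show that it is \emph{downward directed}: for any $\eta_1,\eta_2\in\{\alpha,\beta\}^u$ we must produce some $\eta\in\{\alpha,\beta\}^u$ with $\eta\leq^{\rho}\eta_1$ and $\eta\leq^{\rho}\eta_2$.

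The construction goes as follows. Let $m=\text{mincut}^{W}(\alpha,\rho)$, and fix Menger paths $P_1,\dots,P_m$ from $\alpha$ to $\rho$. Write $\eta_j=\{e_{j,i}\}$ with $e_{j,i}\in P_i$, and set $\eta=\{\operatorname{minord}(e_{1,i},e_{2,i})\}$, the earlier edge on each path. This is the exact dual of the maxord construction in Theorem~\ref{thm:meet}.

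\textbf{Verification.} Three facts must be checked:
\begin{itemize}
\item $\eta$ separates $\rho$ from $\alpha$, and likewise from $\beta$.
\item $\eta\in\mathcal{C}_W^*$, including $\eta\in\Lambda(\mathcal{N}_W)$ and $D_W\subseteq I^W_\eta$.
\item $\eta\leq^{\rho}\eta_1$ and $\eta\leq^{\rho}\eta_2$.
\end{itemize}

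For the separation claim, suppose a path $P$ from $\alpha$ to $\rho$ avoids $\eta$. Take the last edge $e_{j',i'}$ of $\eta_1^c\cup\eta_2^c$ on $P$. Concatenate $P$ up to that edge with the portion of $P_{i'}$ after it; the earlier-edge choice guarantees that this portion avoids $\eta_{3-j'}$. The result is a path from $\alpha$ to $\rho$ avoiding $\eta_{3-j'}$, which is a contradiction. The argument for $\beta$ uses Menger paths from $\beta$; those meet $\eta_1,\eta_2$ in the same edges, since all these cuts are minimum cuts of the same size.

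For membership in $\mathcal{C}_W^*$, we need $I^W_{\eta_j}\subseteq I^W_\eta$. This follows as in Lemma~\ref{lem:cut_rank_relation}, because $\eta$ separates $\rho$ from $\alpha$ and $D_W\subseteq I^W_\alpha$. The equality $|\eta|=\text{mincut}^{W}(\eta,\rho)$ holds because $\eta$ lies on $m$ edge-disjoint paths to $\rho$.

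For $\eta\leq^{\rho}\eta_1$: any path from $\eta\setminus\eta_1\subseteq\eta_2$ to $\rho$ avoiding $\eta_1$ could be prefixed by the segment of the corresponding Menger path from $\alpha$. That would give a path from $\alpha$ to $\rho$ avoiding $\eta_1$, which is impossible.

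\textbf{Main obstacle.} I expect the hardest step to be the separation argument for $\beta$, and more precisely justifying that the decomposition of $\eta_1$ and $\eta_2$ along paths is consistent with both $\alpha$ and $\beta$. If the paths from $\alpha$ do not serve, I would instead fix Menger paths from $\eta_1$ to $\rho$. These paths work simultaneously for every set that $\eta_1$ cuts minimally, and I would redo the argument with them.
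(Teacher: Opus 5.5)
Your strategy is the right one. Showing that $\{\alpha,\beta\}^u$ is downward directed by taking $\operatorname{minord}$ along Menger paths $P_1,\dots,P_m$ from $\alpha$ is exactly the dual of the paper's proof of Theorem~\ref{thm:meet}. The partial-order part, the size and membership checks, and the proof of $\eta\leq^{\rho}\eta_1$ are fine. There is one slip: what you need, and what your reasoning gives, is $I^W_\alpha\subseteq I^W_\eta$. The inclusion $I^W_{\eta_j}\subseteq I^W_\eta$ you wrote is backwards, since $\eta\leq^{\rho}\eta_j$ yields $I^W_\eta\subseteq I^W_{\eta_j}$.

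\textbf{The gap.} The genuine error is in the key step that $\eta$ separates $\rho$ from $\alpha$: you dualized the concatenation but not the choice of edge. Suppose $e_{j',i'}$ is the \emph{last} edge of $\eta_1^c\cup\eta_2^c$ on $P$.
\begin{itemize}
\item $e_{j',i'}\notin\eta_{3-j'}$, because an edge of $\eta_1\cap\eta_2$ is the common cut edge of its Menger path and so lies in $\eta$.
\item $P$ must meet the cut $\eta_{3-j'}$, necessarily in $\eta_{3-j'}\setminus\eta$, and hence strictly before $e_{j',i'}$.
\end{itemize}
So the prefix of $P$ up to $e_{j',i'}$ always contains an edge of $\eta_{3-j'}$, and your concatenated path gives no contradiction. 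The other splice (prefix of $P_{i'}$, then the rest of $P$) fails too, because under $\operatorname{minord}$ that prefix contains $e_{3-j',i'}$.

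\textbf{The fix.} Reversing the graph turns the paper's ``last'' into \emph{first}. Take $e_{j',i'}$ to be the first edge of $\eta_1^c\cup\eta_2^c$ on $P$. Then $P$ before $e_{j',i'}$ avoids $\eta_1\cup\eta_2$, and $e_{j',i'}\notin\eta_{3-j'}$. Also $P_{i'}$ after $e_{j',i'}$ avoids $\eta_{3-j'}$, since its unique $\eta_{3-j'}$-edge $e_{3-j',i'}$ strictly precedes $e_{j',i'}$. By acyclicity the splice is a path from $\alpha$ to $\rho$ avoiding $\eta_{3-j'}$, which is the desired contradiction.

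\textbf{The $\beta$ case.} The corrected argument also removes your ``main obstacle''. Nothing in it requires $P$ to start in $\alpha$: only the tail of the $\alpha$-Menger path $P_{i'}$ is used, and that tail ends at $\rho$. Running it with $P$ a path from $\beta$ contradicts the fact that $\eta_{3-j'}$ separates $\rho$ from $\beta$, so no Menger paths from $\beta$ are needed.

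This matters, because your justification for using Menger paths from $\beta$ does not hold. The way Menger paths pair $\eta_1$-edges with $\eta_2$-edges is not canonical. For instance, if both edges of $\eta_1$ enter a node $z$ and $\eta_2$ consists of two parallel edges from $z$ to $\rho$, either matching occurs. So equality of sizes cannot show that paths from $\beta$ reproduce the same $\eta$. The set $\eta$ is in fact intrinsic: an edge of $\eta_1\setminus\eta_2$ lies in $\eta$ iff $\eta_2$ separates $\rho$ from it, and symmetrically. But that needs its own proof.

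Your fallback of Menger paths from $\eta_1$ to $\rho$ does not work either. Take two disjoint paths $(a_t,x_t,y_t)$ from $a_t\in\alpha$ to $\rho$, $t=1,2$, with $\eta_1=\{x_1,y_2\}$ and $\eta_2=\{y_1,x_2\}$. Then $\eta=\{x_1,x_2\}$, and $x_2$ lies on no path from $\eta_1$.
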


From Proposition \ref{prop:bottom-top-exist}, each join-semilattice $(Cl, \leq^{\rho})$ has a unique top element $\top$, which is a minimum cut separating $\rho$ from every $C \in Cl$. By Lemma \ref{lem:cut_rank_relation}, we have
\[\frac{|\top|}{\operatorname{Rank}(T_{I_{\top}^{W}})}\leq \frac{|C|}{\operatorname{Rank}(T_{I_{C}^{W}})} \ \text{for all } C \in Cl.\]
Thus, we only need to consider these top elements when computing $\Omega(W)$:

\begin{theorem}\label{thm:omega form}
For a wiretap set $W\in \mathcal{W}_{r}$, let $\widehat{\mathcal{C}}_{W}^{*}$ denote the set of top elements of all join-semilattices $(Cl,\leq^{\rho})$ corresponding to the equivalence classes $Cl$ of $\mathcal{C}_{W}^*$ under ``$\sim^{\rho}$''. Then
\[\Omega(W)=\min_{C\in \widehat{\mathcal{C}}_{W}^{*}} \frac{|C|}{\operatorname{Rank}(T_{I^{W}_{C}})}.\]
\end{theorem}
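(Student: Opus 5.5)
The plan is to prove the two inequalities separately, starting from the representation $\Omega(W)=\min_{C\in \mathcal{C}_{W}^{*}} |C|/\operatorname{Rank}(T_{I^{W}_{C}})$ established just before the statement.

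The direction ``$\leq$'' is immediate once we check that $\widehat{\mathcal{C}}_{W}^{*}\subseteq \mathcal{C}_{W}^{*}$. Each top element $\top$ of a join-semilattice $(Cl,\leq^{\rho})$ belongs to $Cl\subseteq \mathcal{C}_{W}^{*}$ by construction. Its existence is guaranteed by the cited result \cite[Theorem 8]{GCY2025} together with Proposition~\ref{prop:bottom-top-exist}, which applies because $\mathcal{C}_{W}^{*}$ is finite (it is a family of subsets of the finite edge set). Hence a minimum over the subfamily $\widehat{\mathcal{C}}_{W}^{*}$ can be no smaller than the minimum over $\mathcal{C}_{W}^{*}$.

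For the direction ``$\geq$'', fix $C\in \mathcal{C}_{W}^{*}$ attaining $\Omega(W)$. Since ``$\sim^{\rho}$'' partitions $\mathcal{C}_{W}^{*}$, let $Cl$ be the class containing $C$ and let $\top$ be its top element. By definition of $\leq^{\rho}$, the relation $C\leq^{\rho}\top$ says exactly that $\top$ is a minimum cut separating $\rho$ from $C$ in $\mathcal{G}_{W}$. We can then apply Lemma~\ref{lem:cut_rank_relation} with $C'=\top$. Its hypotheses hold because $C\in\Lambda(\mathcal{N}_W)$ and $D_W\subseteq I^W_C$. The lemma yields
\[
\frac{|\top|}{\operatorname{Rank}(T_{I^{W}_{\top}})}\leq \frac{|C|}{\operatorname{Rank}(T_{I^{W}_{C}})}=\Omega(W).
\]
Since $\top\in \widehat{\mathcal{C}}_{W}^{*}$, the minimum over $\widehat{\mathcal{C}}_{W}^{*}$ is at most $\Omega(W)$.

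The only point needing care is that $\leq^{\rho}$ is taken within $\mathcal{C}_{W}^{*}$. We must therefore confirm that ``top of $(Cl,\leq^{\rho})$'' really gives $C\leq^{\rho}\top$, meaning $\top$ is a minimum cut from $\rho$ for $C$, and not merely some upper bound in an abstract order. This follows directly from the definition of a top element together with the definition of $\leq^{\rho}$. A second check is that $\operatorname{Rank}(T_{I^W_\top})\geq 1$, so the ratio is well defined. This holds because $\top\in\Lambda(\mathcal{N}_W)$, so $I^W_\top\neq\emptyset$, and every row of $T$ is nonzero. The statement is phrased for $W\in\mathcal{W}_r$, but the argument does not use $|W|\leq r$. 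I expect no serious obstacle: the essential content is already in Lemma~\ref{lem:cut_rank_relation} and the cited semilattice theorem.
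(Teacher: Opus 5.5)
Your proposal is correct and follows the paper's argument. The paper likewise notes that the top of each class $(Cl,\leq^{\rho})$ is a minimum cut separating $\rho$ from every member of $Cl$, so Lemma~\ref{lem:cut_rank_relation} gives it a ratio no larger than that of any member; the reverse inequality, which the paper leaves implicit, is just the inclusion $\widehat{\mathcal{C}}_{W}^{*}\subseteq\mathcal{C}_{W}^{*}$, as you observe.
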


Combining Theorem \ref{thm:omega form} and \eqref{eq:bound form}, we obtain the simplest form of the upper bound:

\begin{theorem}\label{cor:bound}
The upper bound on the secure computing capacity $\hat{\mathcal{C}}(\mathcal{N},T,r)$ is
\[\hat{\mathcal{C}}(\mathcal{N},T,r)\leq \min_{W\in \widehat{\mathcal{W}}_{r}^{*}}\min_{C\in\widehat{\mathcal{C}}_{W}^{*}} \frac{|C|}{\operatorname{Rank}(T_{I^{W}_{C}})}.\]
\end{theorem}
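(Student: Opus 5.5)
The result follows by chaining the two reductions already established in this section. The argument runs at two levels: first reduce the outer minimum over wiretap sets, then the inner minimum over cut sets for each fixed wiretap set.

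\textbf{Outer reduction.} By Theorem \ref{cor:the upper bound} and the definition \eqref{eq:Omega}, we have $\hat{\mathcal{C}}(\mathcal{N},T,r)\leq\min_{W\in\mathcal{W}_r}\Omega(W)$.

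\begin{itemize}
\item Lemma \ref{lem:relation of omega} lets us replace any $W\in\mathcal{W}_r$ by a minimum cut $W'$ separating $W$ from $D_W$. Then $|W'|\leq|W|\leq r$ and $|W'|=\operatorname{mincut}(D_{W'},W')$, so $W'\in\mathcal{W}_r^*$. Hence the minimum may be restricted to $\mathcal{W}_r^*$ without increasing it.
\item $\mathcal{W}_r^*$ is partitioned into the $\sim^{s}$-classes $Cl_r$. By Theorem \ref{thm:meet} and Lemma \ref{lem:exist bottom}, each class has a bottom $\perp$ that is a minimum cut separating every member $W$ from $D_W$, with $D_\perp=D_W$.
\item Applying Lemma \ref{lem:relation of omega} once more gives $\Omega(\perp)\leq\Omega(W)$. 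This yields \eqref{eq:bound form}: the outer minimum is attained on $\widehat{\mathcal{W}}_r^*$.
\end{itemize}

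\textbf{Inner reduction.} For fixed $W\in\widehat{\mathcal{W}}_r^*$, we apply Theorem \ref{thm:omega form}. The ingredients are as follows.

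\begin{itemize}
\item The residual-graph reformulation \eqref{eq:omega} expresses $\Omega(W)$ as a minimum over $C\in\Lambda(\mathcal{N}_W)$ with $D_W\subseteq I_C^W$.
\item Lemma \ref{lem:cut_rank_relation} shows that replacing $C$ by a minimum cut separating $\rho$ from $C$ in $\mathcal{G}_W$ preserves feasibility and does not increase the ratio. So we may restrict to $\mathcal{C}_W^*$.
\item Finally, the join-semilattice structure under $\leq^\rho$ (the result cited from \cite{GCY2025}), together with Proposition \ref{prop:bottom-top-exist}, provides a top element $\top$ in each $\sim^\rho$-class. Since $\top$ is a minimum cut separating $\rho$ from each member, a further application of Lemma \ref{lem:cut_rank_relation} gives $|\top|/\operatorname{Rank}(T_{I^W_\top})\leq |C|/\operatorname{Rank}(T_{I^W_C})$.
\end{itemize}

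\textbf{Combining.} Substituting the formula of Theorem \ref{thm:omega form} for $\Omega(W)$ into \eqref{eq:bound form} gives the stated double minimum directly.

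\textbf{Points needing care.} Most of the work is bookkeeping, but two points need attention.

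\begin{itemize}
\item \emph{Membership of the top element.} We must confirm that $\top$ itself lies in $\mathcal{C}_W^*$, in particular that $D_W\subseteq I^W_\top$. This holds because $\top$ belongs to the class $Cl\subseteq\mathcal{C}_W^*$ (the semilattice is taken inside $\mathcal{C}_W^*$). Otherwise it follows again from Lemma \ref{lem:cut_rank_relation}, which guarantees $D_W\subseteq I^W_{C'}$.
\item \emph{Range of $W$ in Theorem \ref{thm:omega form}.} That theorem is stated for $W\in\mathcal{W}_r$, and $\widehat{\mathcal{W}}_r^*\subseteq\mathcal{W}_r^*\subseteq\mathcal{W}_r$, so it applies to each bottom element.
\end{itemize}

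I expect the main obstacle to be the second bullet of the outer reduction: verifying that the reduction to bottom elements respects the constraint $|W|\leq r$. I would check that a minimum cut separating $W$ from $D_W$ has size at most $|W|$, since $W$ itself is such a cut. This keeps all elements of each class within $\mathcal{W}_r^*$.
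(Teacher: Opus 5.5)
Your proposal is correct and is essentially the paper's own argument, which proves the theorem in one line by substituting Theorem~\ref{thm:omega form} into \eqref{eq:bound form}, exactly as in your ``Combining'' step. As a side remark, the inequality by itself needs only that $\widehat{\mathcal{W}}_r^*\subseteq\mathcal{W}_r$ and that every $C\in\widehat{\mathcal{C}}_W^*$ is feasible in \eqref{eq:omega} (restricting a minimum can only raise it). The semilattice reductions you trace therefore serve to show that the simplified bound is no weaker than Theorem~\ref{cor:the upper bound}, not to make it valid.
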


Note that when $k=1$, the target function $f$ reduces to a scalar linear function over $\mathbb{F}_q$. In \cite{GBY2024}, Guang et al. proposed an upper bound for this setting and simplified it using primary minimum cuts. We recall their results below.

The primary minimum cut of an edge subset $W$ is a minimum cut that not only separates $W$ from $D_{W}$ but also separates all other minimum cuts separating $W$ from $D_{W}$ from $D_{W}$. Let $\hat{W}$ denote the primary minimum cut of $W$, then $W$ is primary if $\hat{W}=W$. Let $\hat{\mathcal{W}}_r = \{W\in \mathcal{W}_{r}: \hat{W}=W\}$ denote the set of primary wiretap sets. Furthermore, let $C_{W}$ denote the primary minimum cut separating $\rho$ from $D_{W}$ in $\mathcal{G}_W$, i.e., $C_{W}$ is a minimum cut separating $\rho$ from $D_{W}$ that also separates $\rho$ from all other minimum cuts separating $\rho$ from $D_{W}$ in $\mathcal{G}_W$. As shown in \cite[Theorem 9]{GBY2024}, the upper bound simplifies to:
\begin{align}\label{eq:exist bound}
\hat{\mathcal{C}}(\mathcal{N},f,r)\leq \min_{W\in \hat{\mathcal{W}}_{r}} |C_{W}|,
\end{align}
where $f$ is the algebraic sum over $\mathbb{F}_q$.

Since scalar linear functions are a special case of linear functions, we now verify that the bound in \eqref{eq:exist bound} can be directly derived from our upper bound in Theorem \ref{cor:bound}. First, Guang et al. investigated the relationship between primary minimum cuts and the bottom (top) elements of meet-semilattices (join-semilattices) in \cite{GCY2025}. Through a proof analogous to {\cite[Theorem 13]{GCY2025}}, we can show that if $W\in \mathcal{W}_{r}$ is primary, then $W\in \widehat{\mathcal{W}}_{r}^{*}$. On the other hand, for any bottom element $\perp$ in $\mathcal{W}_r^{*}$, we claim that $\perp$ is primary. Otherwise, there exists a minimum cut $\perp'$ separating $\perp$ from $D_{\perp}$ such that $\perp$ cannot separate $\perp'$ from $D_{\perp'}$. Since $\perp'$ separates $\perp$ from $D_{\perp}$, it follows that $\perp'$ lies in the same equivalence class as $\perp$. Moreover, since $\perp$ is a bottom element, $\perp$ separates $\perp'$ from $D_{\perp'}$, which yields a contradiction. This shows that for a wiretap set $W\in \mathcal{W}_{r}$, $W$ is primary if and only if $W$ is the bottom element of the meet-semilattice $(Cl_{r}, \leq)$, where $Cl_{r}$ is some equivalence class of $\mathcal{W}_{r}^{*}$ under ``$\sim^{s}$''. This implies that $\widehat{\mathcal{W}}_r^{*} = \hat{\mathcal{W}}_r$.

For the algebraic sum over \(\mathbb{F}_{q}\), \(T\) is a column vector with all nonzero entries, so $\operatorname{Rank}(T_{I^{W}_{C}})=1$ for any non-empty $I^{W}_{C}$. Additionally, all minimum cuts separating $\rho$ from $D_W$ in $\mathcal{G}_W$ form a single equivalence class in $\mathcal{C}_W^*$ under ``$\sim^{\rho}$''. Let $\top_{min}$ be the top element of this equivalence class under the partial order $\leq^{\rho}$, then $\top_{min}\in \widehat{\mathcal{C}}_{W}^{*}$. Furthermore, we can prove that $\top_{min}=C_{W}$. Since the size of any cut separating $\rho$ from $D_W$ in $\mathcal{G}_W$ is no less than the size of the minimum cut separating $\rho$ from $D_W$, we have
\begin{align*}
\hat{\mathcal{C}}(\mathcal{N},f,r) & \leq  \min_{W\in \widehat{\mathcal{W}}_{r}^{*}}\min_{C\in\widehat{\mathcal{C}}_{W}^{*}}  \frac{|C|}{\operatorname{Rank}(T_{I^{W}_{C}})} \\
&= \min_{W\in \hat{\mathcal{W}}_{r}}\min_{C\in \widehat{\mathcal{C}}_{W}^{*}} |C|\\
&= \min_{W\in \hat{\mathcal{W}}_{r}} |\top_{min}|\\
&=\min_{W\in \hat{\mathcal{W}}_{r}} |C_{W}|.
\end{align*}

This shows that our upper bound reduces exactly to \cite[Theorem 9]{GBY2024} for scalar linear functions, demonstrating that Theorem \ref{cor:bound} generalizes existing results to vector linear target functions and provides a more universal tool for computing the upper bound on the secure computing capacity.

Theorem \ref{cor:bound} provides an efficient theoretical form for the upper bound on the secure computing capacity, but it is not directly computable. We thus develop an efficient algorithm to compute this bound. First, for any wiretap set $W\in \mathcal{W}_{r}$, we develop an efficient algorithm to find the collection $\widehat{\mathcal{C}}_{W}^{*}$, which consists of all primary cut sets separating $\rho$ from $D_{W}$ in $\mathcal{G}_{W}$. An implementation of this algorithm is given in Algorithm \ref{alg:algorithm1}. In Step 5, we use the path augmentation algorithm to find $\operatorname{mincut}(N, \rho)$ edge-disjoint paths in $\mathcal{O}(\operatorname{mincut}(N, \rho)\cdot |\mathcal{E}|)$ time. In Step 6, we modify the linear-time algorithm from \cite[Algorithm 4]{GCY2025} to find the primary minimum cut separating a node from a subset of nodes. In Step 8, we modify the algorithm from \cite[Algorithm 2]{GCY2025} to find $V_{\top}^{c}$ in $\mathcal{O}(|V_{\top}^{c}|)$ time. Combining these steps, we can obtain the collection $\widehat{\mathcal{C}}_{W}^{*}$ using Algorithm \ref{alg:algorithm1}, whose time complexity is linear in the number of edges in the graph. 

Finally, we develop an efficient algorithm to compute the upper bound in Theorem \ref{cor:bound}, which is presented in Algorithm \ref{alg:algorithm2}. In Step 3, we adapt \cite[Algorithm 1]{GCY2025} to find $\widehat{\mathcal{W}}_{r}^{*}$ in $\mathcal{O}(|\widehat{\mathcal{W}}_{r}^{*}|\cdot|\mathcal{E}|)$ time. In Step 5, we use Algorithm \ref{alg:algorithm1} to find $\widehat{\mathcal{C}}_{W}^{*}$. Thus, Algorithm \ref{alg:algorithm2} runs in linear time in the number of edges in the graph.

\begin{algorithm}[htbp]
\caption{Algorithm for Computing $\widehat{\mathcal{C}}_{W}^{*}$}\label{alg:algorithm1}
  \begin{algorithmic}[1]
\REQUIRE The network $\mathcal{N}_{W}=(\mathcal{G}_{W},D_{W},\rho)$, where $\mathcal{G}_{W}=(V,E)$.
\ENSURE $\widehat{\mathcal{C}}_{W}^{*}$, the set of all primary global cut sets in $\Lambda(\mathcal{N}_{W})$ for $\rho$.
\STATE Set $\widehat{\mathcal{C}}_{W}^{*} = \emptyset$;
\STATE Set $\mathscr{A} = \{N \subseteq V : D_{W}\subseteq N\subseteq (V_{\rho}\backslash \{\rho\})\}$, where $V_{\rho}$ is the set of nodes from which $\rho$ is reachable; \COMMENT{$\rho$ is reachable from $v$ if there exists a directed path from $v$ to $\rho$}
\WHILE{$\mathscr{A}\neq\emptyset$}
\STATE Choose a subset $N \in \mathscr{A}$;
\STATE Find $\operatorname{mincut}(N, \rho)$ edge-disjoint paths from $N$ to $\rho$;
\STATE Find the primary minimum cut $\top$ separating $\rho$ from $N$;
\STATE Add $\top$ to $\widehat{\mathcal{C}}_{W}^{*}$;
\STATE Partition $V_{\rho}\backslash\{\rho\}$ into $V_{\top}$ (the set of the nodes from which $\rho$ is reachable after deleting the edges in $\top$) and $V_{\top}^{c}=V_{\rho}\backslash(\{\rho\}\cup V_{\top})$;
\FOR{each $\mu\in\mathscr{A}$}
\IF{$\mu\subseteq V_{\top}^{c}$}
\STATE Remove $\mu$ from $\mathscr{A}$;
\ENDIF
\ENDFOR
\ENDWHILE
\RETURN $\widehat{\mathcal{C}}_{W}^{*}$.
  \end{algorithmic}
\end{algorithm}

\begin{algorithm}[htbp]
\caption{Algorithm for Computing \(\min_{W\in \widehat{\mathcal{W}}_{r}^{*}}\min_{C\in\widehat{\mathcal{C}}_{W}^{*}} \frac{|C|}{\operatorname{Rank}(T_{I^{W}_{C}})}\)}\label{alg:algorithm2}
  \begin{algorithmic}[1]
\REQUIRE The network $\mathcal{N}=(\mathcal{G},S,\rho)$ ($\mathcal{G}=(V,E)$, $S=\{\sigma_{1},\sigma_{2},\dots,\sigma_{s}\}$), a non-negative integer $r$, and a coefficient matrix $T$.
\ENSURE $\min_{W\in \widehat{\mathcal{W}}_{r}^{*}}\min_{C\in\widehat{\mathcal{C}}_{W}^{*}} \frac{|C|}{\operatorname{Rank}(T_{I^{W}_{C}})}$.
\STATE Set $U=\infty$;
\STATE Set $\Omega=\infty$;
\STATE Find $\widehat{\mathcal{W}}_{r}^{*}$, the set of all primary wiretap sets in $\mathcal{W}_{r}$;
\FOR{each $W\in \widehat{\mathcal{W}}_{r}^{*}$}
\STATE Find $\widehat{\mathcal{C}}_{W}^{*}$ in the network $\mathcal{N}_{W}=(\mathcal{G}_{W},D_{W},\rho)$;
\FOR{each $C\in \widehat{\mathcal{C}}_{W}^{*}$}
\STATE Find $I_{C}^{W}$, the set of source nodes from which $\rho$ is unreachable after deleting the edges in $C$ in $\mathcal{G}_{W}$;
\STATE Compute $\frac{|C|}{\operatorname{Rank}(T_{I_{C}^{W}})}$;
\STATE $\Omega=\min(\Omega,\frac{|C|}{\operatorname{Rank}(T_{I_{C}^{W}})})$;
\ENDFOR
\STATE $U=\min(U,\Omega)$;
\ENDFOR
\RETURN $U$.
  \end{algorithmic}
\end{algorithm}

\section{Linear Secure Network Codes for Linear Target Functions under Linear Security Functions}\label{sec:linear_codes}

In this section, we study linear secure network coding for linear target functions and linear security functions. We first formalize the code model, then propose two constructions of admissible linear secure network codes for the
secure model $(N, f, \mathcal{W}_{r},\zeta)$, which yield a lower bound on the secure computing capacity.
\subsection{Linear Function-Computing Secure Network Coding}

Let \( f, \zeta \) be linear functions over a finite field \( \mathbb{F}_q \), that is, \( f(\vec{m}_{S}) = \vec{m}_{S} \cdot T \) and \( \zeta(\vec{m}_{S}) = \vec{m}_{S} \cdot \Upsilon \), where \( T \in \mathbb{F}_q^{s \times k} \), \( \Upsilon \in \mathbb{F}_q^{s \times r_{\Upsilon}} \). Without loss of generality, assume that both $T$ and $\Upsilon$ are of full column rank. A secure network code is called \textit{linear} if the local encoding function associated with each edge is linear. Let $\ell$ and $n$ be two positive integers. We now define an $(\ell, n)$ linear secure network code over the finite field $\mathbb{F}_{q}$ for the model $(N, f, \mathcal{W}_{r},\zeta)$, following the framework in \cite{GBY2024}, \cite{arxiv2025secure}. Each source node $\sigma_{i}$ generates
$\mathbf{M}_{i}=(M_{i, 1}, M_{i, 2}, \cdots, M_{i, \ell})$, which is uniformly distributed over $\mathbb{F}_q^{\ell}$.
We denote $\mathbf{M}_{S}=(\mathbf{M}_{1}, \mathbf{M}_{2}, \cdots, \mathbf{M}_{s})$.
Additionally, $\mathbf{K}_{i}$ represents the random key available at $\sigma_{i}$, which is uniformly distributed over $\mathbb{F}_{q}^{z_{i}}$ (where $z_{i}$ is a nonnegative integer), and $\mathbf{K}_{S}=(\mathbf{K}_{1}, \mathbf{K}_{2}, \cdots, \mathbf{K}_{s})$. Let $\mathbf{m}_{i} \in \mathbb{F}_{q}^{\ell}$ and $\mathbf{k}_{i} \in \mathbb{F}_{q}^{z_{i}}$ be arbitrary realizations of $\mathbf{M}_{i}$ and $\mathbf{K}_{i}$, respectively. Correspondingly, $\mathbf{m}_{S}=(\mathbf{m}_{1}, \mathbf{m}_{2}, \cdots, \mathbf{m}_{s})$ and $\mathbf{k}_{S}=(\mathbf{k}_{1}, \mathbf{k}_{2}, \cdots, \mathbf{k}_{s})$ denote arbitrary realizations of $\mathbf{M}_{S}$ and $\mathbf{K}_{S}$. We further define $\mathbf{x}_{i}=(\mathbf{m}_{i} \ \mathbf{k}_{i})$ and $\mathbf{x}_{S}=(\mathbf{x}_{1} \ \mathbf{x}_{2} \ \cdots \ \mathbf{x}_{s})$.

An $(\ell, n)$ linear secure network code $\hat{C}=\{\hat{\theta}_{e}: e \in \mathcal{E}; \hat{\varphi}\}$ comprises a linear local encoding function $\hat{\theta}_{e}$ for each edge $e$ and a decoding function $\hat{\varphi}$ at the sink node $\rho$, defined as follows:
\begin{align}\label{eq:linear coding}
\left\{\begin{array}{ll} \hat{\theta}_{e}\left(\mathbf{x}_{i}\right)=\mathbf{x}_{i} \cdot A_{i, e}, & \text{if } \operatorname{tail}(e)=\sigma_{i} \text{ for some } i, \\ \hat{\theta}_{e}\left(y_{d}: d \in \operatorname{In}(v)\right)=\sum_{d \in \operatorname{In}(v)} y_{d} \cdot A_{d, e}, & \text{otherwise;} \end{array}\right.
\end{align}
where $v \triangleq \operatorname{tail}(e)$, $A_{i, e}\in \mathbb{F}_q^{(\ell+z_{i}) \times n}$, $A_{d, e}\in \mathbb{F}_q^{n\times n}$.
Moreover, the decoding function $\hat{\varphi}$ at $\rho$ maps $\prod_{\operatorname{In}(\rho)} \mathbb{F}_{q}^{n}$ to $\mathbb{F}_{q}^{\ell k}$.

Observe that every global encoding function $\hat{g}_{e}$, which is induced by the local encoding functions $\{\hat{\theta}_{e}: e \in \mathcal{E}\}$, is linear. Thus, there exists a matrix $\hat{\mathbf{g}}_{e}\in \mathbb{F}_q^{(\ell s+\sum_{i=1}^{s} z_{i}) \times n}$ such that
\[y_{e}=\hat{g}_{e}\left(\mathbf{x}_{S}\right)=\mathbf{x}_{S} \cdot \hat{\mathbf{g}}_{e},\]
where $\hat{\mathbf{g}}_{e}$ is referred to as the global encoding matrix for edge $e$.

For such an $(\ell, n)$ linear secure network code $\hat{C}$ for the model $(\mathcal{N}, f, \mathcal{W}_{r}, \zeta)$, define a matrix $\mathbf{S}(\hat{C})$ as follows:
\[
\mathbf{S}(\hat{C}) \triangleq 
\begin{bmatrix}
\mathbf{S}^{(\sigma_1)} \\
\mathbf{S}^{(\sigma_2)} \\
\vdots \\
\mathbf{S}^{(\sigma_s)}
\end{bmatrix},
\quad \text{with }
\mathbf{S}^{(\sigma_i)} = 
\begin{bmatrix}
\mathbf{\Upsilon}_i \otimes I_\ell \\
\mathbf{0}_{z_i \times \ell r_{\Upsilon}}
\end{bmatrix}, \ \forall i \in [s],
\]
where $I_\ell$ is the $\ell \times \ell$ identity matrix, $\Upsilon_i$ is the $i$-th row of $\Upsilon$, and $\otimes$ is the Kronecker product. In the rest of this paper, we write $\mathbf{S}(\hat{C})$ as $\mathbf{S}$ for notational simplicity when there is no ambiguity regarding the code $\hat{C}$. Following the same line of argument as in \cite[Theorem 7]{xu2026network}, we derive the following lemma, which presents an equivalent characterization of the security condition.

\begin{lemma}\label{lem:security wen}
Consider the secure model $(\mathcal{N}, f, \mathcal{W}_{r}, \zeta)$ with $f(\vec{m}_{S})=\vec{m}_{S}\cdot T$ and $\zeta(\vec{m}_{S})=\vec{m}_{S}\cdot \Upsilon$. Let $\hat{C}$ be an $(\ell, n)$ linear secure network code with the global encoding matrices $\{\hat{\mathbf{g}}_e, e \in \mathcal{E}\}$. Then, the security condition \eqref{eq:security condition} is satisfied for the code $\hat{C}$ if and only if
\[
\langle G_W \rangle \cap \langle \mathbf{S} \rangle = \{ \mathbf{0} \}, \quad \forall W \in \mathcal{W}_r,
\]
where $G_{W}\triangleq [\hat{\mathbf{g}}_e: e \in W]$.
\end{lemma}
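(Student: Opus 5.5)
The plan is to reduce the security condition to a statement about a uniform linear map. Write $\mathbf{X}_S=(\mathbf{X}_1\ \cdots\ \mathbf{X}_s)$ with $\mathbf{X}_i=(\mathbf{M}_i\ \mathbf{K}_i)$. Since the messages and keys are independent and uniform, $\mathbf{X}_S$ is uniformly distributed over $\mathbb{F}_q^{N}$ with $N=\ell s+\sum_i z_i$.

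The first step is to check that, under the paper's indexing of $\mathbf{x}_S$, the $\ell$ values of $\zeta$ are exactly $\zeta(\mathbf{M}_S)=\mathbf{X}_S\cdot\mathbf{S}$, up to a fixed permutation of coordinates. The block $\Upsilon_i\otimes I_\ell$ maps $\mathbf{m}_i$ to $(\Upsilon_{i,1}\mathbf{m}_i,\dots,\Upsilon_{i,r_\Upsilon}\mathbf{m}_i)$, and the zero rows annihilate the keys. Summing over $i$ gives, for each $t$, the vector $(\sum_i \Upsilon_{i,t} M_{i,j})_{j}$. This is $\zeta$ evaluated at each of the $\ell$ positions, arranged column by column. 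Similarly, $Y_W=\mathbf{X}_S G_W$.

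The core is a general fact. If $\mathbf{X}$ is uniform on $\mathbb{F}_q^N$ and $A,B$ are matrices with $N$ rows, then
\[
I(\mathbf{X}A;\mathbf{X}B)=\bigl(\operatorname{Rank}A+\operatorname{Rank}B-\operatorname{Rank}[A\ B]\bigr)\log q=\dim(\langle A\rangle\cap\langle B\rangle)\log q.
\]
Here $\langle\cdot\rangle$ is the column space, and the second equality is the dimension formula. To prove the first equality, I will use the same computation as in the proof of Theorem \ref{cor:the upper bound}: for any matrix $A$, $\mathbf{X}A$ is uniform on a space of size $q^{\operatorname{Rank}A}$, because each fiber of the linear map has size $q^{N-\operatorname{Rank}A}$. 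Hence $H(\mathbf{X}A)=\operatorname{Rank}(A)\log q$. Applying this to $A$, $B$ and $[A\ B]$ gives the mutual information through $I=H(\mathbf{X}A)+H(\mathbf{X}B)-H(\mathbf{X}[A\ B])$.

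With $A=G_W$ and $B=\mathbf{S}$, this yields $I(Y_W;\zeta(\mathbf{M}_S))=\dim(\langle G_W\rangle\cap\langle\mathbf{S}\rangle)\log q$. This is zero for every $W\in\mathcal{W}_r$ if and only if each intersection is $\{\mathbf{0}\}$, which is both directions of the lemma. The case $W=\emptyset$ is trivial, since $G_\emptyset$ is empty.

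I expect the only real obstacle to be bookkeeping rather than mathematics. The work is in verifying that the Kronecker ordering in $\mathbf{S}$ matches the ordering of $\mathbf{x}_S$, so that $\mathbf{X}_S\mathbf{S}$ is exactly $\zeta(\mathbf{M}_S)$ up to a coordinate permutation. A permutation does not affect the mutual information, so it does not affect the claim. I will also confirm that every variable has been expressed as a linear image of the single uniform vector $\mathbf{X}_S$.
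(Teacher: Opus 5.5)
Your proof is correct and follows the standard argument the paper relies on; the paper gives no proof of its own and defers to \cite[Theorem 7]{xu2026network}. You write $Y_W=\mathbf{X}_S G_W$ and $\zeta(\mathbf{M}_S)=\mathbf{X}_S\mathbf{S}$ (up to a coordinate permutation) as linear images of the uniform vector $\mathbf{X}_S$, use $H(\mathbf{X}_S A)=\operatorname{Rank}(A)\log q$ to obtain $I(Y_W;\zeta(\mathbf{M}_S))=\dim(\langle G_W\rangle\cap\langle\mathbf{S}\rangle)\log q$, and your checks of the Kronecker ordering and the fiber counting supply exactly the details that citation leaves implicit.
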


\subsection{Code Construction}

In \cite{GBY2024}, \cite{arxiv2025secure}, and \cite{xu2026network}, a linear secure network code was constructed by transforming a non-secure code into a secure one. We extend this idea to the case where the target function $f$ is a vector linear function, rather than the algebraic sum.

Denote
\[
C_{\min} = \min \left\{ |C| : C \in \Lambda(\mathcal{N}) \right\}.
\]
Consider the network function computation model $(\mathcal{N}, f)$ without any security consideration. For any nonnegative integer $R$ satisfying $R \leq C_{\text{min}}$, we can construct an admissible $(R,k)$ linear network code $\mathcal{C}$ with the global encoding matrices $\{\mathbf{g}_{e},e\in\mathcal{E}\}$ for $(\mathcal{N},f)$. First, 
let $\mathcal{G}^{T}$ denote the reversed graph of $\mathcal{G}$, obtained by reversing the direction of every edge in $\mathcal{G}$ and treating $\rho$ as the single source node and $S$ as the set of sink nodes. Denote the reversed network as $\mathcal{N}^{T} \triangleq (\mathcal{G}^{T}, \rho, S)$. By reversing a rate-$R$ linear network code for the single-source multicast problem
on $\mathcal{N}^{T}$, we obtain an admissible $(R, 1)$ linear network code $C$ for the model $(\mathcal{N}, f)$ where $f$ is the algebraic sum. Let $T_{i}$ denote the $i$-th column of $T$. Through a linear transformation of the code $C$, we derive admissible $(R, 1)$ linear network codes $C_{i}$ for the model $(\mathcal{N}, f_{i})$ where $f_{i}$ is a scalar linear function with the efficient vector $T_{i}$. Combining these $k$ codes $C_{i}$ to form $\mathcal{C}$, which utilizes the network $k$ times. Then, $\mathcal{C}$ is an admissible $(R,k)$ linear network code for the model $(\mathcal{N}, f)$, where $f$ is a vector linear function with the coefficient matrix $T$.

We further decompose $\mathbf{g}_{e}$ as
\begin{align}\label{eq:glabal single}\mathbf{g}_{e}=\left[\begin{array}{c}\mathbf{g}_{e}^{(\sigma_{1})} \\ \mathbf{g}_{e}^{(\sigma_{2})} \\ \vdots \\ \mathbf{g}_{e}^{(\sigma_{s})}\end{array}\right],\end{align}
where each $\mathbf{g}_{e}^{(\sigma_{i})}$ is an $\mathbb{F}_{q}$-valued $R\times k$ matrix. For an edge subset $W \subseteq \mathcal{E}$, we define $G_{W} \triangleq[\mathbf{g}_{e}: e \in W]$ and decompose it similarly:
\[
G_{W}=\left[\begin{array}{c}G_{W}^{(\sigma_{1})} \\ G_{W}^{(\sigma_{2})} \\ \vdots \\ G_{W}^{(\sigma_{s})}\end{array}\right]
\]
where each $G_{W}^{(\sigma_{i})} \triangleq[\mathbf{g}_{e}^{(\sigma_{i})}: e \in W]$ is an $\mathbb{F}_{q}$-valued $R\times k |W|$ matrix.

Let $B_{i}$ be an $R \times R$ invertible matrix over $\mathbb{F}_{q}$ for $i=1,2,\cdots,s$, referred to as the transformation matrix, and define $\widehat{B}$ as an $s\times s$ block matrix with $B_{i}$ on the diagonal and $R \times R$ zero matrices elsewhere, i.e.,
\begin{align}\label{eq:B}\widehat{B}\triangleq \left[ B_1 \quad B_2 \quad \cdots \quad B_s \right]_{s}^{\text{diagonal}}=\left[\begin{array}{cccc}B_1 & 0 & \cdots & 0 \\ 0 & B_2 & \cdots & 0 \\ \cdots & \cdots & \cdots & \cdots \\ 0 & 0 & \cdots & B_s\end{array}\right]_{Rs \times Rs}.\end{align}
We denote the transformed code as $\widehat{\mathcal{C}} =\widehat{B} \cdot \mathcal{C}$, whose global encoding matrix $\widehat{\mathbf{g}}_{e}$ for each edge $e$ is $\widehat{B} \cdot \mathbf{g}_{e}$, i.e.,
\[
\widehat{\mathbf{g}}_{e}=\left[\begin{array}{c}\widehat{\mathbf{g}}_{e}^{(\sigma_{1})} \\ \widehat{\mathbf{g}}_{e}^{(\sigma_{2})} \\ \vdots \\ \widehat{\mathbf{g}}_{e}^{(\sigma_{s})}\end{array}\right]=\left[\begin{array}{c}B_1 \cdot \mathbf{g}_{e}^{(\sigma_{1})} \\ B_2 \cdot \mathbf{g}_{e}^{(\sigma_{2})} \\ \vdots \\ B_s \cdot \mathbf{g}_{e}^{(\sigma_{s})}\end{array}\right].
\]

Let $R_1$ be an arbitrary positive integer satisfying $R_1 \leq R$. Depending on the matrix $T$, we define a matrix $T_{R_{1}}$ as
\[
T_{R_{1}}=\left[\begin{array}{cccc} A^{11} & A^{12} & \cdots & A^{1k} \\ A^{21} & A^{22} & \cdots & A^{2k} \\ \vdots & \vdots & \vdots & \vdots \\ A^{s1} & A^{s2} & \cdots & A^{sk} \end{array}\right]
\]
where $A^{ij}=\left[\begin{array}{c} T_{i,j}\cdot I_{R_1} \\ \mathbf{0}_{(R-R_{1}) \times R_{1}}\end{array}\right]$ for all $1 \leq i \leq s$ and $1 \leq j \leq k$, and $T_{i,j}$ denotes the element in the $i$-th row and $j$-th column of $T$. 

\begin{theorem}\label{thm:condition}
Consider the secure model $(\mathcal{N}, f, \mathcal{W}_{r},\zeta)$, where $f(\vec{m}_{S})=\vec{m}_{S}\cdot T$ and $\zeta(\vec{m}_{S})=\vec{m}_{S}\cdot \Upsilon$. Let $R$ be any non-negative integer satisfying $rk \leq R \leq C_{\text{min}}$. The code $\mathcal{C}$ is constructed as above, and the matrix $\widehat{B}$ is defined as above. If $\widehat{B}$ satisfies the following two conditions:
\begin{itemize}
    \item  $\langle \widehat{B}^{-1} \cdot T_{R-rk} \rangle \subseteq \langle G_{\text{In}(\rho)} \rangle $,
    \item $\langle \widehat{B}^{-1} \cdot \mathbf{S} \rangle \cap \langle G_W \rangle = \{ \vec{0} \} \quad \text{with} \quad \mathbf{S}^{(\sigma_i)} = 
    \begin{bmatrix}
    \mathbf{\Upsilon}_i \otimes I_{R-rk} \\
    \mathbf{0}_{rk \times (R-rk) r_{\Upsilon}}
    \end{bmatrix}, \quad \forall\ W \in \mathcal{W}_r$,
\end{itemize}
the code $\widehat{\mathcal{C}}=\widehat{B} \cdot \mathcal{C}$ is an admissible $(R-rk, k)$ linear secure network code for the model $(\mathcal{N}, f, \mathcal{W}_{r},\zeta)$.
\end{theorem}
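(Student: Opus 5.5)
Read the transformed code $\widehat{\mathcal{C}}=\widehat{B}\cdot\mathcal{C}$ as a linear secure code in which, at each source $\sigma_i$, the $R$-dimensional input $\mathbf{x}_i\in\mathbb{F}_q^{R}$ consists of the first $R_1=R-rk$ coordinates carrying the message $\mathbf{m}_i\in\mathbb{F}_q^{R-rk}$ and the last $rk$ coordinates carrying the key $\mathbf{k}_i\in\mathbb{F}_q^{rk}$ (so $z_i=rk$ and $\ell=R-rk$). The code uses the network $k$ times, so $n=k$, and the edge message is $y_e=\mathbf{x}_S\widehat{B}\mathbf{g}_e$. This is clearly a linear code of the form \eqref{eq:linear coding}: the local encoding matrices at intermediate nodes are those of $\mathcal{C}$, and at source $\sigma_i$ they are $B_i$ composed with those of $\mathcal{C}$. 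Both admissibility conditions are then verified separately.

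\emph{Computability.} The target values are $f(\mathbf{m}_S)$, i.e.\ the $\ell k$ coordinates $\sum_i T_{i,j}\mathbf{m}_i$ for $j=1,\dots,k$. Written in terms of $\mathbf{x}_S$, this is exactly $\mathbf{x}_S\cdot T_{R-rk}$, since the block $A^{ij}$ selects the message part of $\mathbf{x}_i$ and scales it by $T_{i,j}$. The sink observes $\mathbf{x}_S\widehat{B}G_{\operatorname{In}(\rho)}$. The first hypothesis gives a matrix $D$ with $\widehat{B}^{-1}T_{R-rk}=G_{\operatorname{In}(\rho)}D$, hence $T_{R-rk}=\widehat{B}G_{\operatorname{In}(\rho)}D$. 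The decoder $\hat\varphi(y)=y\cdot D$ is therefore linear and outputs $\mathbf{x}_S T_{R-rk}=f(\mathbf{m}_S)$ for every $\mathbf{m}_S,\mathbf{k}_S$. This establishes the computability condition \eqref{eq:computability}.

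\emph{Security.} Apply Lemma~\ref{lem:security wen} to $\widehat{\mathcal{C}}$, whose wiretap matrices are $\widehat{B}G_W$. With $\ell=R-rk$ and $z_i=rk$, the matrix $\mathbf{S}$ defined before that lemma coincides with the $\mathbf{S}$ in the second hypothesis. By the lemma it suffices to show $\langle\widehat{B}G_W\rangle\cap\langle\mathbf{S}\rangle=\{\mathbf{0}\}$ for all $W\in\mathcal{W}_r$. Since $\widehat{B}$ is invertible, the map $v\mapsto\widehat{B}^{-1}v$ is a bijection on column vectors. It sends $\langle\widehat{B}G_W\rangle$ onto $\langle G_W\rangle$ and $\langle\mathbf{S}\rangle$ onto $\langle\widehat{B}^{-1}\mathbf{S}\rangle$, and it preserves intersections. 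The condition is therefore equivalent to $\langle G_W\rangle\cap\langle\widehat{B}^{-1}\mathbf{S}\rangle=\{\mathbf{0}\}$, which is exactly the second hypothesis.

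The step most likely to cause trouble is the bookkeeping rather than any deep argument. One must check that the coordinate ordering of $\mathbf{x}_S=(\mathbf{m}_1\,\mathbf{k}_1\,\cdots)$ used in Lemma~\ref{lem:security wen} and in $T_{R-rk}$ matches the row blocks of $\mathbf{g}_e$. One must also check that the rate is $\ell/n=(R-rk)/k$, as asserted by the $(R-rk,k)$ parameters. The constraint $R\ge rk$ is needed only so that $R_1=R-rk\ge0$ and these blocks are well defined.
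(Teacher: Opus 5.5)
Your proof is correct and follows essentially the same route as the paper. Computability comes from the matrix $D$ with $\widehat{B}^{-1}T_{R-rk}=G_{\operatorname{In}(\rho)}D$, which gives the linear decoder $y\mapsto yD$. Security comes from invertibility of $\widehat{B}$, which turns the hypothesis into $\langle \mathbf{S}\rangle\cap\langle \widehat{B}G_W\rangle=\{\mathbf{0}\}$, after which Lemma~\ref{lem:security wen} applies. Your explicit bookkeeping ($\ell=R-rk$, $z_i=rk$, $n=k$), and applying the lemma to $\widehat{\mathcal{C}}$ rather than $\mathcal{C}$ as the paper's last line literally says, only make precise what the paper leaves implicit.
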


\begin{proof}
We first verify the computability condition in \eqref{eq:computability}. For $\widehat{\mathcal{C}}$ to satisfy the computability condition, $\rho$ must be able to compute 
\[f(\mathbf{x}_{S})=\mathbf{x}_{S}\cdot T_{R-rk}.\]
Using the code $\widehat{\mathcal{C}}$, the vector of messages received at $\rho$ is
    \begin{align}\label{eq:sink node}
    y_{\operatorname{In}(\rho)}=\left(\mathbf{x}_{S} \cdot \widehat{\mathbf{g}}_{e}: e \in \operatorname{In}(\rho)\right)=\left(\mathbf{x}_{S} \cdot\widehat{B} \mathbf{g}_{e}: e \in \operatorname{In}(\rho)\right)=\mathbf{x}_{S}\widehat{B}\cdot G_{\operatorname{In}(\rho) }.
    \end{align}
Since $\langle \widehat{B}^{-1} \cdot T_{R-rk} \rangle \subseteq \langle G_{\text{In}(\rho)} \rangle $, there exists a matrix $D$ such that $\widehat{B}^{-1} \cdot T_{R-rk}  =G_{\text{In}(\rho)} \cdot D$. Thus,
\[
    y_{\operatorname{In}(\rho)} \cdot D = \mathbf{x}_{S}\widehat{B}\cdot G_{\operatorname{In}(\rho)}\cdot D = \mathbf{x}_S \cdot T_{R-rk}.
\]
Hence, we have verified the computability condition of the code $\widehat{\mathcal{C}}$.

Since $\widehat{B}$ is invertible, $\langle \widehat{B}^{-1} \cdot \mathbf{S} \rangle \cap \langle G_W \rangle = \{ \vec{0} \}$ is equivalent to $\langle  \mathbf{S} \rangle \cap \langle \widehat{B} \cdot G_W \rangle = \{ \vec{0} \}$. From Lemma \ref{lem:security wen}, it follows that the code $\mathcal{C}$ satisfies the security condition \eqref{eq:security condition}.
\end{proof}

Note that the conditions in Theorem \ref{thm:condition} are abstract, so we provide a specific method to construct $R \times R$ invertible matrices $B_{1},B_{2},\cdots, B_{s}$ that satisfy the conditions in Theorem \ref{thm:condition}. We sequentially choose $R$ linearly independent column vectors $\vec{b}_{1}, \vec{b}_{2}, \cdots, \vec{b}_{R}\in\mathbb{F}_{q}^{R}$ as follows:
\begin{itemize}
    \item  For $1 \leq j \leq R-rk$, choose
     \begin{align}\label{eq:choice1}
        \vec{b}_{j} \in \mathbb{F}_{q}^{R} \backslash \bigcup_{W \in \mathcal{W}_{r}} \bigcup_{i=1}^{s}\left(\left\langle G_{W}^{(\sigma_{i})}\right\rangle+\left\langle \vec{b}_{1}, \vec{b}_{2}, \cdots, \vec{b}_{j-1}\right\rangle\right),
        \end{align}
    \item For $R-rk+1 \leq j \leq R$, choose
    \begin{align}\label{eq:choice2}
            \vec{b}_{j} \in \mathbb{F}_{q}^{R} \backslash \left\langle \vec{b}_{1}, \vec{b}_{2}, \cdots, \vec{b}_{j-1}\right\rangle.
        \end{align}
\end{itemize}
Let $B=[\begin{array}{llll}\vec{b}_{1} & \vec{b}_{2} & \cdots & \vec{b}_{R}\end{array}]^{-1}$ and $\widehat{B}=[\begin{array}{llll}B & B & \cdots & B\end{array}]_{s}^{\text{diagonal }}$. Under the above choice, we have 
\[
\widehat{B}^{-1}\mathbf{S}=\begin{bmatrix}
B^{-1}\mathbf{S}^{(\sigma_1)} \\
B^{-1}\mathbf{S}^{(\sigma_2)} \\
\vdots \\
B^{-1}\mathbf{S}^{(\sigma_s)}
\end{bmatrix}=\Upsilon \otimes [\begin{array}{cccc} \vec{b}_{1}& \vec{b}_{2}& \cdots& \vec{b}_{R-rk}\end{array}].
\]
Equation \eqref{eq:choice1} not only ensures that
\[
    \langle \widehat{B}^{-1} \cdot \mathbf{S} \rangle \cap \langle G_W \rangle = \{ \vec{0} \}  \quad \forall\ W \in \mathcal{W}_r,
\]
but also guarantees the linear independence of the vectors $\vec{b}_{1}, \vec{b}_{2}, \cdots, \vec{b}_{R-rk}$. Equation \eqref{eq:choice2} ensures the invertibility of the matrix $B$. Since the linear network code $\mathcal{C}$ is admissible, there exists a matrix $D\in \mathbb{F}_{q}^{|In(\rho)|k\times Rk}$ such that
\[
    G_{In(\rho)}\cdot D=T\otimes I_{R}.
\]
This implies that $\langle T\otimes I_{R} \rangle\subseteq \langle G_{In(\rho)} \rangle$. Observe that
 \begin{align*}
    \widehat{B}^{-1}T_{R-rk} =\left[\begin{array}{cccc}B^{-1}A^{11} & B^{-1}A^{12} & \cdots & B^{-1}A^{1k} \\ B^{-1}A^{21} & B^{-1}A^{22} & \cdots & B^{-1}A^{2k} \\ \vdots & \vdots & \vdots & \vdots \\ B^{-1}A^{s1} & B^{-1}A^{s2} & \cdots & B^{-1}A^{sk}\end{array}\right]= T \otimes  [\begin{array}{cccc} \vec{b}_{1}& \vec{b}_{2}& \cdots& \vec{b}_{R-rk}\end{array}].
    \end{align*}
This implies that each column of $\widehat{B}^{-1}T_{R-rk}$ can be linearly expressed using the columns of $T\otimes I_{R}$. Thus, $\langle \widehat{B}^{-1}T_{R-rk}\rangle\subseteq \langle T\otimes I_{R}\rangle$. Together with $\langle T\otimes I_{R} \rangle\subseteq \langle G_{In(\rho)} \rangle$, we have proved that $$\langle \widehat{B}^{-1}T_{R-rk} \rangle\subseteq \langle G_{In(\rho)} \rangle.$$
Hence, by Theorem \ref{thm:condition}, we obtain that the constructed code $\widehat{\mathcal{C}}=\widehat{B} \cdot \mathcal{C}$ is an admissible $(R-rk, k)$ linear secure network code for the model $(\mathcal{N}, f,  \mathcal{W}_{r},\zeta)$. 

Remark that the constructions in \eqref{eq:choice1} and \eqref{eq:choice2} not only ensure the security of the security function $\zeta$, but also guarantee the security of all source messages, which is unnecessary. Therefore, for the condition in Theorem \ref{thm:condition}, the constructions in \eqref{eq:choice1} and \eqref{eq:choice2} are only sufficient, not necessary. This point will be illustrated by an example in the last subsection.

\subsection{The Required Field Size of the Code Construction}

In this subsection, we establish upper bounds on the minimum required field size for the existence of an admissible $(R-rk, k)$ linear secure network code in the secure model $(\mathcal{N}, f, \mathcal{W}_{r},\zeta)$, where $f(\vec{m}_{S})=\vec{m}_{S}\cdot T$, $\zeta(\vec{m}_{S})=\vec{m}_{S}\cdot \Upsilon$ and the security level $r$ satisfies $1\leq rk\leq R$.

We define a set:
    \[\widehat{\mathcal{W}}_{r}'=\{W\in \widehat{\mathcal{W}}_{r}^{*} :|W|=r\}.\]
The following lemma demonstrates that to satisfy the security condition \eqref{eq:security condition}, it is sufficient to consider the significantly reduced collection of wiretap sets $\widehat{\mathcal{W}}_{r}'$ instead of the full set $\mathcal{W}_{r}$. This result can be obtained by a proof similar to that in \cite[Theorem 9]{arxiv2025secure}, and thus the proof is omitted here.
\begin{lemma}
Consider the secure network function computation model $(\mathcal{N}, f, \mathcal{W}_{r}, \zeta)$. A secure network code $\mathcal{C}$ for $(\mathcal{N}, f, \mathcal{W}_{r}, \zeta)$ satisfies the security condition \eqref{eq:security condition} if and only if
    \begin{align}\label{eq:security condition1}
    I(Y_{W}; \zeta(\mathbf{M}_{S}))=0, \ \ \forall\  W \in \widehat{\mathcal{W}}_{r}'.
    \end{align}
\end{lemma}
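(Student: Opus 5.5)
The "only if" direction is immediate, since $\widehat{\mathcal{W}}_{r}'\subseteq \mathcal{W}_{r}$. For the "if" direction we must show that $I(Y_{W};\zeta(\mathbf{M}_{S}))=0$ for every $W\in\widehat{\mathcal{W}}_{r}'$ forces $I(Y_{W};\zeta(\mathbf{M}_{S}))=0$ for every $W\in\mathcal{W}_{r}$. The plan is to reduce an arbitrary wiretap set in three monotone steps. Each step replaces $W$ by a set $W^{\dagger}$ such that $Y_{W}$ is a function of $Y_{W^{\dagger}}$ together with randomness independent of $\zeta(\mathbf{M}_{S})$, or simply a function of $Y_{W^{\dagger}}$. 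In either case $I(Y_W;\zeta(\mathbf{M}_{S}))\le I(Y_{W^{\dagger}};\zeta(\mathbf{M}_{S}))$ by the data-processing inequality.

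\emph{Step 1 (minimum cuts).} Take $W\in\mathcal{W}_r$ and let $W'$ be a minimum cut separating $W$ from $D_W$. Every path into $W$ from a source node in $D_W$ passes through $W'$. Source nodes outside $D_W$ have no path to $W$ at all. By acyclicity, $Y_W$ is therefore a deterministic function of $Y_{W'}$. Hence $I(Y_W;\zeta)\le I(Y_{W'};\zeta)$. Moreover $|W'|\le |W|\le r$ and $|W'|=\operatorname{mincut}(D_{W'},W')$, because $D_{W'}=D_W$ by the lemma cited from \cite[Lemma 5]{GBY2024}. So $W'\in\mathcal{W}_r^{*}$.

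\emph{Step 2 (bottom elements).} Let $Cl_r$ be the $\sim^{s}$-class of $W'$, and let $\perp$ be its bottom element, which exists by Lemma~\ref{lem:exist bottom}. Since $\perp\le W'$, the set $\perp$ is a minimum cut separating $W'$ from $D_{W'}$. The same argument as in Step 1 gives that $Y_{W'}$ is a function of $Y_{\perp}$, with $\perp\in\widehat{\mathcal{W}}_r^{*}$.

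\emph{Step 3 (padding to size $r$).} If $|\perp|=r$ we are done. Otherwise we must exhibit some $W''\in\widehat{\mathcal{W}}_r'$ such that $Y_{\perp}$ is a function of $Y_{W''}$. A natural first attempt is to enlarge $\perp$ by adjoining suitable edges to reach size $r$, then take the bottom element of the class of the enlarged set, and argue that this bottom still determines $Y_{\perp}$. This is the main obstacle. Adjoining an edge may enlarge $D$ and lower the min-cut, so the resulting set may fail to lie in $\mathcal{W}_r^{*}$. To get around this, I would adjoin edges lying on the $\operatorname{mincut}(D_{\perp},\perp)$ edge-disjoint paths or out-edges of source nodes, for instance edges in $\operatorname{Out}(\sigma)$ for sources $\sigma$, chosen one at a time so that the min-cut value increases by exactly one each time. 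The construction of Theorem~\ref{thm:meet} (maxord/minord along Menger paths) would then show that the primary minimum cut of the padded set still separates $\perp$ from its sources.

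Following \cite[Theorem 9]{arxiv2025secure}, if padding to exactly size $r$ is impossible because the network has too few independent edges, the degenerate case would be handled separately. Either $\widehat{\mathcal{W}}_r'$ then covers all maximal primary sets, or the claim is read with $r$ replaced by the largest attainable size.

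Combining the three steps gives
\[
I(Y_W;\zeta(\mathbf{M}_S))\le I(Y_{W''};\zeta(\mathbf{M}_S))=0 ,
\]
which establishes the "if" direction.
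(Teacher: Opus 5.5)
\textbf{Gap in Step 3.} Your Steps 1 and 2 are correct, and the ``only if'' direction is trivial. But Step 3, which is the real content of the lemma, is only announced. Your idea of adjoining source out-edges is the right one, but you never show that edges raising the min-cut by exactly one actually exist. Incidentally, adjoining an edge can never lower $\operatorname{mincut}(D_W,W)$; the only danger is that it fails to increase. The missing argument is short and needs none of the maxord/minord machinery.
\begin{itemize}
\item Let $w=|\perp|<r$ and take $w$ edge-disjoint paths from $D_\perp$ to $\perp$. Because $|\perp|=\operatorname{mincut}(D_\perp,\perp)$, these paths end at distinct edges, one at each edge of $\perp$.
\item Source nodes have no input edges, so each path uses exactly one edge of $E_S\triangleq\bigcup_{\sigma\in S}\operatorname{Out}(\sigma)$, namely its first edge.
\item $\operatorname{Out}(\sigma_1)\in\Lambda(\mathcal{N})$, and the lemma sits under the standing hypothesis $r\le rk\le R\le C_{\min}$. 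Hence $r\le C_{\min}\le|E_S|$, so at least $r-w$ edges of $E_S$ are unused by these paths.
\item None of these unused edges lies in $\perp$, since every edge of $\perp$ is the last edge of some path. Each of them is a one-edge path from its tail.
\item Adjoining $r-w$ of them gives $W^+\supseteq\perp$ with $r$ edge-disjoint paths from $D_{W^+}$ ending at its $r$ edges. Hence $|W^+|=\operatorname{mincut}(D_{W^+},W^+)=r$ and $W^+\in\mathcal{W}_r^*$.
\item The bottom element $W''$ of the $\sim^{s}$-class of $W^+$ is a minimum cut separating $W^+$ from $D_{W^+}$. So $|W''|=r$, i.e.\ $W''\in\widehat{\mathcal{W}}_r'$.
\item $Y_\perp$ is a function of $Y_{W^+}$ simply because $\perp\subseteq W^+$, and $Y_{W^+}$ is a function of $Y_{W''}$ by your Step~1 argument. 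No maxord/minord argument is needed here. This is just independence augmentation in the gammoid of edge sets linkable from $S$.
\end{itemize}

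\textbf{The degenerate case.} This case must be excluded by hypothesis, not handled by reinterpreting the statement. If $r>|E_S|$, no edge set $W$ of size $r$ satisfies $|W|=\operatorname{mincut}(D_W,W)$. Then $\widehat{\mathcal{W}}_r'=\emptyset$ and the right-hand condition is vacuous. The security condition is not vacuous: for example, a key-free code in which each $\sigma_i$ sends $\mathbf{M}_i$ on its out-edges leaks $\zeta(\mathbf{M}_S)$ on $E_S\in\mathcal{W}_r$. So the lemma is only true under a hypothesis such as $r\le C_{\min}$, and your proof must invoke it exactly at the padding step. Replacing $r$ by ``the largest attainable size'' proves a different statement.

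With that repair, your outline (reduce to a primary set, pad to size $r$, take the primary cut again) is presumably the argument the paper has in mind when it defers to \cite[Theorem 9]{arxiv2025secure}; the paper itself gives no proof.
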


By Lemma \ref{lem:security wen}, one of the conditions in Theorem \ref{thm:condition} can be replaced by
\[
 \langle \widehat{B}^{-1} \cdot \mathbf{S} \rangle \cap \langle G_W \rangle = \{ \vec{0} \} \quad \forall\ W \in \widehat{\mathcal{W}}_{r}'.
\]
Consequently, the selection condition for $\vec{b}_{1}, \vec{b}_{2},\cdots, \vec{b}_{R-rk}$ can be revised as follows for $1 \leq j \leq R-rk$:
\begin{align}\label{eq:choose3}\vec{b}_{j} \in \mathbb{F}_{q}^{R} \backslash \bigcup_{W \in \widehat{\mathcal{W}}_{r}'} \bigcup_{i=1}^{s}\left(\left\langle G_{W}^{(\sigma_{i})}\right\rangle+\left\langle \vec{b}_{1}, \vec{b}_{2}, \cdots, \vec{b}_{j-1}\right\rangle\right).\end{align}

\begin{theorem}\label{thm:q}
Consider the secure network function computation model $(\mathcal{N}, f, \mathcal{W}_{r}, \zeta)$, where $f(\vec{m}_{S})=\vec{m}_{S}\cdot T$, $\zeta(\vec{m}_{S})=\vec{m}_{S}\cdot \Upsilon$, and the security level $r$ satisfies $0 \leq rk \leq C_{\text{min}}$. Then, for any nonnegative integer $R$ with $rk \leq R \leq  C_{\text{min}}$, there exists an $\mathbb{F}_{q}$-valued admissible $(R-rk, k)$ linear secure network code for $(\mathcal{N}, f, \mathcal{W}_{r}, \zeta)$ if the field size $q$ satisfies
    \[
    q> s\cdot |\widehat{\mathcal{W}}_{r}'|.
    \]
\end{theorem}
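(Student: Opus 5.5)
The plan is a counting argument showing that the greedy choices \eqref{eq:choose3} and \eqref{eq:choice2} never get stuck when $q> s\cdot|\widehat{\mathcal{W}}_{r}'|$. Once all $R$ vectors $\vec{b}_1,\dots,\vec{b}_R$ have been chosen, the argument following Theorem \ref{thm:condition} applies verbatim. Define $B=[\vec{b}_1\ \cdots\ \vec{b}_R]^{-1}$ and $\widehat{B}=[B\ \cdots\ B]_s^{\text{diagonal}}$. The computability condition $\langle\widehat{B}^{-1}T_{R-rk}\rangle\subseteq\langle G_{\operatorname{In}(\rho)}\rangle$ holds for any invertible $B$, because $\langle T\otimes I_R\rangle\subseteq\langle G_{\operatorname{In}(\rho)}\rangle$. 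The security condition holds by \eqref{eq:choose3}, the preceding lemma (which reduces $\mathcal{W}_r$ to $\widehat{\mathcal{W}}_{r}'$), and Lemma \ref{lem:security wen}. By Theorem \ref{thm:condition}, $\widehat{B}\cdot\mathcal{C}$ is then an admissible $(R-rk,k)$ code. The case $R=rk$ is trivial, since no vectors must avoid anything. Existence of the underlying non-secure code $\mathcal{C}$ for $R\le C_{\min}$ needs a multicast code on $\mathcal{N}^T$. I would take it from the cited construction, noting that $q>s|\widehat{\mathcal{W}}_r'|\ge$ the number of sinks $s$ whenever $\widehat{\mathcal{W}}_r'\neq\emptyset$, which suffices for linear multicast.

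The key step is to bound the dimensions of the forbidden subspaces. For each $W\in\widehat{\mathcal{W}}_r'$ and each $i$, the space $\langle G_W^{(\sigma_i)}\rangle\subseteq\mathbb{F}_q^R$ is spanned by the $k|W|=rk$ columns of $G_W^{(\sigma_i)}$. Its dimension is therefore at most $rk$. At step $j\le R-rk$, the set $\langle G_W^{(\sigma_i)}\rangle+\langle\vec{b}_1,\dots,\vec{b}_{j-1}\rangle$ is then a subspace of dimension at most $rk+j-1\le R-1$, so it is a proper subspace with at most $q^{R-1}$ elements. The union over the $s|\widehat{\mathcal{W}}_r'|$ pairs $(W,i)$ has at most $s|\widehat{\mathcal{W}}_r'|\,q^{R-1}$ elements. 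This is strictly less than $q^R$ exactly when $q>s|\widehat{\mathcal{W}}_r'|$, so a valid $\vec{b}_j$ exists.

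Note that every forbidden set contains $\langle\vec{b}_1,\dots,\vec{b}_{j-1}\rangle$. A valid choice therefore automatically keeps the $\vec{b}$'s linearly independent, and the steps $j>R-rk$ in \eqref{eq:choice2} only need $\langle\vec{b}_1,\dots,\vec{b}_{j-1}\rangle\neq\mathbb{F}_q^R$, which always holds for $j\le R$.

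I expect the main obstacle to be justifying that \eqref{eq:choose3} really yields $\langle\widehat{B}^{-1}\mathbf{S}\rangle\cap\langle G_W\rangle=\{\vec{0}\}$. Here $\widehat{B}^{-1}\mathbf{S}=\Upsilon\otimes[\vec{b}_1\cdots\vec{b}_{R-rk}]$ mixes blocks across sources. I would argue blockwise: a vector in the intersection restricted to some block $\sigma_i$ with $\Upsilon_i\neq0$ gives a combination of $\vec{b}$'s lying in $\langle G_W^{(\sigma_i)}\rangle$. Taking the largest index with a nonzero coefficient then contradicts the choice of that $\vec{b}_j$. This is exactly the argument the paper asserts after \eqref{eq:choice1}, so I would invoke it rather than redo it. I would also point out that the counting uses only the crude bound $q^{R-1}$ per subspace, so the dimension bound $rk+j-1\le R-1$ is the only place where $R-rk$ is used.
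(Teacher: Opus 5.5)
Your proposal is correct and follows the paper's proof. Both use the same union bound $\bigl|\bigcup_{W,i}(\langle G_W^{(\sigma_i)}\rangle+\mathcal{B}_{j-1})\bigr|\le s|\widehat{\mathcal{W}}_r'|\,q^{R-1}$, based on $\dim\le rk+j-1\le R-1$ for $j\le R-rk$, and the trivial count $q^R-q^{j-1}>0$ for the remaining steps. Your extra remarks go beyond what the paper writes but are consistent with it: you check the field size for the underlying multicast code, and you give the blockwise security argument. In that argument, pick a block where the intersecting vector is nonzero, so that its coefficient vector over $\vec b_1,\dots,\vec b_{R-rk}$ is nonzero; a block with merely $\Upsilon_i\neq 0$ is not enough.
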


\begin{proof}
For notational simplicity, define
\[\mathcal{B}_{j-1}=\left\langle \vec{b}_{1}, \vec{b}_{2}, \cdots, \vec{b}_{j-1}\right\rangle, \forall 1 \leq j \leq R.\]
For $1 \leq j \leq R-rk$, we have
    \begin{align}
    & \left|\mathbb{F}_{q}^{R} \backslash \bigcup_{W \in \widehat{\mathcal{W}}_{r}'} \bigcup_{i=1}^{s}\left(\left\langle G_{W}^{(\sigma_{i})}\right\rangle+\mathcal{B}_{j-1}\right)\right| \notag\\
    & =\left|\mathbb{F}_{q}^{R}\right|-\left|\bigcup_{W \in \widehat{\mathcal{W}}_{r}'} \bigcup_{i=1}^{s}\left(\left\langle G_{W}^{(\sigma_{i})}\right\rangle+\mathcal{B}_{j-1}\right)\right| \notag\\
    & \geq q^{R}-\sum_{W \in \widehat{\mathcal{W}}_{r}'} \sum_{i=1}^{s}\left|\left\langle G_{W}^{(\sigma_{i})}\right\rangle+\mathcal{B}_{j-1}\right|\notag\\
    & \geq q^{R}-\sum_{W \in \widehat{\mathcal{W}}_{r}'} \sum_{i=1}^{s} q^{R-1} \label{eq:at most}\\
    &=q^{R-1} \left(q-s \cdot\left|\widehat{\mathcal{W}}_{r}'\right|\right) >0,\notag
    \end{align}
where the inequality \eqref{eq:at most} holds because for any $1 \leq i \leq s$, $1 \leq j \leq R-rk$, and $W \in \widehat{\mathcal{W}}_{r}'$,
    \[\dim\left(\left\langle G_{W}^{(\sigma_{i})}\right\rangle\right) \leq rk \quad \text{and} \quad \dim\left(\mathcal{B}_{j-1}\right) \leq R-rk-1.\]
For $R-rk+1 \leq j \leq R$, we have
    \[|\mathbb{F}_{q}^{R} \backslash \mathcal{B}_{j-1} |=q^{R}-q^{j-1} \geq q^{R}-q^{R-1}>0 .\]
This completes the proof of the theorem.
\end{proof}

Remark that if the size of the field $\mathbb{F}_{q}$, on which both the linear target function and the security function are defined, is too small to support the proposed code design, we can adopt an extension field $\mathbb{F}_{q^{L}}$ of $\mathbb{F}_{q}$ such that the size $q^{L}$ is sufficient for constructing an $\mathbb{F}_{q^{L}}$-valued admissible $(R-rk, k)$ linear secure network code by our construction method. The extension field $\mathbb{F}_{q^{L}}$ can be regarded as an $L$-dimensional vector space over $\mathbb{F}_{q}$ with a basis $\{1, \alpha, \alpha^{2}, \cdots, \alpha^{L-1}\}$, where $\alpha$ denotes a primitive element of $\mathbb{F}_{q^{L}}$. Hence, the $\mathbb{F}_{q^{L}}$-valued $(R-rk, k)$ linear secure network code for computing $f(\vec{m}_{S})=\vec{m}_{S}\cdot T$ on $\mathbb{F}_{q^{L}}$ can be equivalently interpreted as an $((R-rk) L, kL)$ linear secure network code for computing the same function on $\mathbb{F}_{q}$, while achieving the same secure computing rate $\frac{R}{k}-r$. As a consequence, for computing $f(\vec{m}_{S})=\vec{m}_{S}\cdot T$ defined over an arbitrary finite field, the proposed code construction always yields an $\mathbb{F}_{q}$-valued admissible linear secure network code of rate up to $\frac{C_{\text{min }}}{k}-r$ for the model $(\mathcal{N}, f, \mathcal{W}_{r},\zeta)$ with the security level $r$ satisfying $0 \leq rk \leq C_{\text{min}}$. This implies a lower bound on the secure computing capacity $\hat{\mathcal{C}}(\mathcal{N}, f, \mathcal{W}_{r},\zeta)$, which is formally stated below. 

\begin{theorem}
For the secure network function computation model $(\mathcal{N}, f, \mathcal{W}_{r}, \zeta)$, where $f(\vec{m}_{S})=\vec{m}_{S}\cdot T$, $\zeta(\vec{m}_{S})=\vec{m}_{S}\cdot \Upsilon$, and the security level $r$ satisfies $0 \leq rk \leq C_{\text{min}}$, the secure computing capacity $\widehat{\mathcal{C}}(\mathcal{N}, f, \mathcal{W}_{r},\zeta)$ satisfies 
 \[\widehat{\mathcal{C}}(\mathcal{N}, f, \mathcal{W}_{r},\zeta) \geq \frac{C_{\min }}{k}-r.\]
\end{theorem}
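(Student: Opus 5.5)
The plan is to prove the lower bound by achievability: I will exhibit, for every $\epsilon>0$, an admissible code of rate exceeding $\frac{C_{\min}}{k}-r-\epsilon$. The code will be the construction of Theorem \ref{thm:condition}, made concrete by choosing the transformation matrix via \eqref{eq:choice1}--\eqref{eq:choice2}, equivalently \eqref{eq:choose3}.

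\textbf{Trivial case.} If $rk=C_{\min}$, the right-hand side equals $0$. The claim then holds because the capacity is nonnegative; indeed $0$ is trivially achievable. So I assume $rk<C_{\min}$ and set $R=C_{\min}$.

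\textbf{Step 1 (field extension).} Let $q_0$ be the field of $f$ and $\zeta$. I pick an extension degree $L$ such that $q_0^{L}>s\cdot|\widehat{\mathcal{W}}_{r}'|$. This set depends only on the network and $r$, not on the field, so such an $L$ exists. I regard $T$ and $\Upsilon$ as matrices over $\mathbb{F}_{q_0^L}$; their column ranks and full-column-rank property are unchanged.

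\textbf{Step 2 (base non-secure code).} Over $\mathbb{F}_{q_0^L}$, I build the admissible $(C_{\min},k)$ linear network code $\mathcal{C}$ for $(\mathcal{N},f)$ as described before Theorem \ref{thm:condition}. This means reversing a rate-$C_{\min}$ multicast code on $\mathcal{N}^{T}$ and stacking $k$ column-wise transformed copies, one per column of $T$. Here I need the multicast code to exist over $\mathbb{F}_{q_0^L}$, which holds for sufficiently large fields, so I enlarge $L$ if necessary.

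\textbf{Step 3 (secure transformation).} Theorem \ref{thm:q} guarantees that vectors $\vec b_1,\dots,\vec b_R$ satisfying \eqref{eq:choose3} and \eqref{eq:choice2} exist. The argument following Theorem \ref{thm:condition} then shows that $\widehat{B}$ satisfies both hypotheses of Theorem \ref{thm:condition}. Consequently, $\widehat{\mathcal{C}}=\widehat{B}\cdot\mathcal{C}$ is an admissible $(C_{\min}-rk,k)$ secure code over $\mathbb{F}_{q_0^L}$.

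\textbf{Step 4 (reinterpret over the base field).} Using a basis of $\mathbb{F}_{q_0^L}$ over $\mathbb{F}_{q_0}$, I reinterpret this code as an $\mathbb{F}_{q_0}$-linear $((C_{\min}-rk)L,kL)$ code. Its rate is $\frac{C_{\min}}{k}-r$, so this rate is achievable and the bound follows.

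\textbf{Main obstacle: verifying Step 4.} Computability transfers directly, because $f$ over the extension restricts coordinatewise to $f$ over the base field on the $L$ subsymbols. Security also needs care. A uniform source over $\mathbb{F}_{q_0^L}$ corresponds to independent uniform source vectors over $\mathbb{F}_{q_0}^{L}$. The map $\zeta$ over the extension is an invertible linear image of the $L$ copies of $\zeta$ over the base field, since multiplication by coefficients in $\mathbb{F}_{q_0}$ commutes with the basis expansion. Therefore $I(Y_W;\zeta)=0$ is preserved. I would state this explicitly rather than rely only on the remark preceding the theorem.
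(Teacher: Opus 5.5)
Your proposal is correct and follows the paper's own argument: the construction of Theorem~\ref{thm:condition} with the $\vec b_j$ supplied by Theorem~\ref{thm:q}, carried out over an extension field $\mathbb{F}_{q_0^L}$ and reread as an $((C_{\min}-rk)L,kL)$ code over the base field. Your explicit checks (that the base multicast code exists over $\mathbb{F}_{q_0^L}$, and that computability and $I(Y_W;\zeta)=0$ survive the basis expansion because $T,\Upsilon$ have entries in $\mathbb{F}_{q_0}$) fill in what the paper only asserts. One small correction for the case $rk=C_{\min}$: $0$ need not be achievable in the strict sense of the definition. For example, take a single source joined to $\rho$ by one edge, with $r=k=1$ and $\Upsilon=T=[1]$; no admissible code exists at all, so there the bound rests, as in the paper, only on the convention that the capacity is nonnegative.
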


\subsection{An Example}
In this subsection, we present an example to demonstrate our code construction. This example illustrates that the codes derived from the construction in Theorem \ref{thm:condition},
which achieve security-function security, are generally not secure with respect to the source messages. Furthermore, it shows that the bound $s\cdot |\widehat{\mathcal{W}}_{r}'|$ on the field size given in Theorem \ref{thm:q} is merely a sufficient condition and is generally not necessary.

\begin{figure}[htbp]
    \centering
    \begin{tikzpicture}[
        main node/.style={circle, draw, minimum size=12pt, inner sep=1pt},
        every edge/.style={draw, -Stealth, thick}
    ]
    \node[main node] (s1) at (-4.5, 4) {$\sigma_1$};
    \node[main node] (s2) at (0, 4)    {$\sigma_2$};
    \node[main node] (s3) at (4.5, 4)  {$\sigma_3$};
    
    \node[main node] (m1) at (-2, 2.5) {};
    \node[main node] (m2) at (0, 2.5)  {};
    \node[main node] (m3) at (2, 2.5)  {};

    \node[main node] (p1) at (-2, 0.5) {};
    \node[main node] (p2) at (0, 0.5)  {};
    \node[main node] (p3) at (2, 0.5)  {};

    \node[main node] (n1) at (-3, -1.5) {};
    \node[main node] (n2) at (0, -1.5)  {};
    \node[main node] (n3) at (3, -1.5)  {};
    
    \node[main node] (sink) at (0, -3) {$\rho$};

    \node[left=0.2cm of s1] {$x_1, x_2, x_3$};
    \node[above=0.2cm of s2] {$y_1, y_2, y_3$};
    \node[right=0.2cm of s3] {$z_1, z_2, z_3$};

    \path (s1) edge node[above left=2pt] {$e_1$} node[above left=12pt] {$2x_1+2x_2+x_3$} (n1);
    \path (s1) edge node[above left=2pt] {$e_2$} node[below right=2pt] {$x_2$} (m1);
    \path (s1) edge node[above=10pt] {$e_3$} node[above=0.1pt] {$x_1$} (m2);
    
    \path (s2) edge node[above right =2pt] {$e_4$} node[right =2pt] {$y_3$} (m1);
    \path (s2) edge[bend left=15] node[above right=4pt] {$e_5$} node[right=-3pt] {$2y_1+y_2+2y_3$} (n2);
    \path (s2) edge node[above left=1pt] {$e_6$} node[ left=1pt] {$y_1$} (m3);
    
    \path (s3) edge node[above right=2pt] {$e_7$} node[above =0.1pt] {$2z_3$} (m2);
    \path (s3) edge node[above right=1pt] {$e_8$} node[below =2pt] {$2z_2$} (m3);
    \path (s3) edge node[right=2pt] {$e_9$} node[above right=12pt] {$2z_1+z_2+z_3$} (n3);
    
    \path (m1) edge node[left=2pt] {$e_{10}$} node[above left=2pt] {$x_2+y_3$} (p1);
    \path (m2) edge node[left=2pt] {$e_{11}$} node[above left=2pt] {$x_1+2z_3$} (p2);
    \path (m3) edge node[above left=2pt] {$e_{12}$} node[above right=2pt] {$y_1+2z_2$} (p3);
    
    \path (p1) edge node[above left=2pt] {$e_{13}$} node[left=2pt] {$$} (n1);
    \path (p1) edge node[above left=4pt] {$e_{14}$} node[right=-5pt] {$$} (n2);
    \path (p2) edge node[above right=5pt] {$e_{15}$} node[below left=-3pt] {$$} (n1);
    
    \path (p2) edge node[above left=5pt] {$e_{16}$} node[below right=2pt] {$$} (n3);
    \path (p3) edge node[above right=5pt] {$e_{17}$} node[below left=2pt] {$$} (n2);
    \path (p3) edge node[above right=1pt] {$e_{18}$} node[left=2pt] {$$} (n3);
    
    \path (n1) edge node[left=2pt] {$e_{19}$} node[below left =5pt] {$x_3+y_3+2z_3$} (sink);
    \path (n2) edge node[left=2pt] {$e_{20}$} node[above right=1pt] {$x_2+y_2+2z_2$} (sink);
    \path (n3) edge node[right=2pt] {$e_{21}$} node[below right=5pt] {$x_1+y_1+2z_1$} (sink);

    \end{tikzpicture}
    \caption{An $\mathbb{F}_3$-valued $(3, 1)$ linear network code for the model $(\mathcal{N}, f)$, where $y_{e_{10}}=y_{e_{13}}=y_{e_{14}}$, $y_{e_{11}}=y_{e_{15}}=y_{e_{16}}$ and $y_{e_{12}}=y_{e_{17}}=y_{e_{18}}$.}
    \label{fig:base_code_f3}
\end{figure}

\begin{example}
We consider a secure network function computation model \((\mathcal{N}, f, \mathcal{W}_r, \zeta)\),
where \(\mathcal{N} = (\mathcal{G}, S, \rho)\) is a variant of the reverse butterfly network, as depicted in \autoref{fig:base_code_f3}, with the set of source nodes \(S = \{\sigma_1, \sigma_2, \sigma_3\}\). The target function \(f\) is a scalar linear function over \(\mathbb{F}_3\), specifically,
\[
f(\vec{m}_S) = m_1 + m_2 + 2m_3.
\]
The security function \(\zeta\) is a vector-linear function over \(\mathbb{F}_3\), given by \(\zeta(\vec{m}_S) = \vec{m}_S \cdot \Upsilon\), where
\[
\Upsilon = \begin{pmatrix}
1 & 1 \\
1 & 0 \\
2 & 0
\end{pmatrix}.
\]
Furthermore, the security level is \(r = 1 < C_{\min} = 3\). Below, we will construct an admissible $(2,1)$ linear secure network code \(\hat{\mathcal{C}}\) for the model \((\mathcal{N}, f, \mathcal{W}_r, \zeta)\) using our proposed construction. By Theorem \ref{thm:the upper bound}, we can obtain \(\hat{\mathcal{C}}(\mathcal{N}, f, \mathcal{W}_r, \zeta) \leq 2\). This implies that \(\hat{\mathcal{C}}(\mathcal{N}, f, \mathcal{W}_r, \zeta) = 2\). Thus, the constructed code \(\hat{\mathcal{C}}\) is optimal.

First, we have an \(\mathbb{F}_3\)-valued \((3, 1)\) linear network code \(\mathcal{C}\) for the model \((\mathcal{N}, f)\), as shown in \autoref{fig:base_code_f3}. Evidently, this code fails to ensure the secrecy of the security function. The global encoding vectors of the code \(\mathcal{C}\) are:
\[
\begin{aligned}
g_{e_1} &= \left(\begin{smallmatrix} 2 \\ 2 \\ 1 \\ 0 \\ 0 \\ 0 \\ 0 \\ 0 \\ 0 \end{smallmatrix}\right), 
g_{e_2} = \left(\begin{smallmatrix} 0 \\ 1 \\ 0 \\ 0 \\ 0 \\ 0 \\ 0 \\ 0 \\ 0 \end{smallmatrix}\right), 
g_{e_3} = \left(\begin{smallmatrix} 1 \\ 0 \\ 0 \\ 0 \\ 0 \\ 0 \\ 0 \\ 0 \\ 0 \end{smallmatrix}\right), 
g_{e_4} = \left(\begin{smallmatrix} 0 \\ 0 \\ 0 \\ 0 \\ 0 \\ 1 \\ 0 \\ 0 \\ 0 \end{smallmatrix}\right), 
g_{e_5} = \left(\begin{smallmatrix} 0 \\ 0 \\ 0 \\ 2 \\ 1 \\ 2 \\ 0 \\ 0 \\ 0 \end{smallmatrix}\right), 
g_{e_6} = \left(\begin{smallmatrix} 0 \\ 0 \\ 0 \\ 1 \\ 0 \\ 0 \\ 0 \\ 0 \\ 0 \end{smallmatrix}\right), 
g_{e_7} = \left(\begin{smallmatrix} 0 \\ 0 \\ 0 \\ 0 \\ 0 \\ 0 \\ 0 \\ 0 \\ 2 \end{smallmatrix}\right), \\
g_{e_8} & = \left(\begin{smallmatrix} 0 \\ 0 \\ 0 \\ 0 \\ 0 \\ 0 \\ 0 \\ 2 \\ 0 \end{smallmatrix}\right), 
g_{e_9} = \left(\begin{smallmatrix} 0 \\ 0 \\ 0 \\ 0 \\ 0 \\ 0 \\ 2 \\ 1 \\ 1 \end{smallmatrix}\right), 
g_{e_{10}} = g_{e_{13}} = g_{e_{14}} = \left(\begin{smallmatrix} 0 \\ 1 \\ 0 \\ 0 \\ 0 \\ 1 \\ 0 \\ 0 \\ 0 \end{smallmatrix}\right), 
g_{e_{11}} = g_{e_{15}} = g_{e_{16}} = \left(\begin{smallmatrix} 1 \\ 0 \\ 0 \\ 0 \\ 0 \\ 0 \\ 0 \\ 0 \\ 2 \end{smallmatrix}\right), \\
g_{e_{12}} &= g_{e_{17}} = g_{e_{18}} = \left(\begin{smallmatrix} 0 \\ 0 \\ 0 \\ 1 \\ 0 \\ 0 \\ 0 \\ 2 \\ 0 \end{smallmatrix}\right), 
g_{e_{19}} = \left(\begin{smallmatrix} 0 \\ 0 \\ 1 \\ 0 \\ 0 \\ 1 \\ 0 \\ 0 \\ 2 \end{smallmatrix}\right), 
g_{e_{20}} = \left(\begin{smallmatrix} 0 \\ 1 \\ 0 \\ 0 \\ 1 \\ 0 \\ 0 \\ 2 \\ 0 \end{smallmatrix}\right), 
g_{e_{21}} = \left(\begin{smallmatrix} 1 \\ 0 \\ 0 \\ 1 \\ 0 \\ 0 \\ 2 \\ 0 \\ 0 \end{smallmatrix}\right).
\end{aligned}
\]
We can see that the code \(\mathcal{C}\) can compute the target function \(f\) three times,
but it does not guarantee the security of the security function. For example, \(x_1 + y_1 + 2z_1\) can be obtained when the information on edge \(e_{21}\) is eavesdropped.

Based on the code \(\mathcal{C}\), we construct an \(\mathbb{F}_3\)-valued \((2, 1)\) linear secure network code \(\hat{\mathcal{C}}\) for the model \((\mathcal{N}, f, \mathcal{W}_r, \zeta)\)
by selecting an appropriate transformation matrix \(\widehat{B}\).
First, we consider the construction scheme in \eqref{eq:choice1} and \eqref{eq:choice2}.
We need to select three vectors \(\vec{b}_1, \vec{b}_2, \vec{b}_3\) according to the following rules
to form \(\widehat{B} =\left[ B \quad B \quad  B \right]_{3}^{\text{diagonal}}\), where \(B^{-1} = \left[ \vec{b}_1 \quad \vec{b}_2 \quad \vec{b}_3\right]\) is invertible:
\begin{enumerate}
    \item\label{cond:b1} \(\vec{b}_1 \in \mathbb{F}_3^3 \setminus \bigcup_{j=1}^{21} \bigcup_{i=1}^{3}\langle g_{e_j}^{\sigma_{i}} \rangle\),
    \item \(\vec{b}_2 \in \mathbb{F}_3^3 \setminus \bigcup_{j=1}^{21}\bigcup_{i=1}^{3} \left( \langle g_{e_j}^{\sigma_{i}} \rangle + \langle \vec{b}_1 \rangle \right)\),
    \item \(\vec{b}_3 \in \mathbb{F}_3^3 \setminus \langle \vec{b}_1, \vec{b}_2 \rangle\).
\end{enumerate}
Therefore, we can choose \(\vec{b}_1 =\left( \begin{smallmatrix} 1 \\ 1\\0 \end{smallmatrix}\right)\), which satisfies Condition \ref{cond:b1}.
When selecting such \(\vec{b}_1\), we find that
\[
\mathbb{F}_3^3 \setminus \bigcup_{j=1}^{21}\bigcup_{i=1}^{3} \left( \langle g_{e_j}^{\sigma_{i}}\rangle + \langle \left( \begin{smallmatrix} 1 \\ 1\\0 \end{smallmatrix}\right) \rangle \right) = \emptyset,
\]
thus the construction scheme in \eqref{eq:choice1} and \eqref{eq:choice2} is ineffective in this case.

Next, we consider the construction scheme in Theorem \ref{thm:condition}.
We need to select three vectors \(\vec{b}_1, \vec{b}_2, \vec{b}_3\) to form \(\widehat{B} =\left[ B \quad B \quad  B \right]_{3}^{\text{diagonal}}\),
with \(B^{-1} = \left[ \vec{b}_1 \quad \vec{b}_2 \quad \vec{b}_3\right]\) invertible, which must satisfy
\[
\langle \widehat{B}^{-1}\cdot \mathbf{S} \rangle \cap \langle g_{e_j} \rangle = \{\vec{0}\}, \quad \forall 1 \leq j \leq 21.
\]
This is equivalent to
\[
\langle \Upsilon \otimes \left[\vec{b}_1\quad \vec{b}_2\right] \rangle \cap \langle g_{e_j} \rangle = \{\vec{0}\}, \quad \forall 1 \leq j \leq 21.
\]
For instance, we can choose
\[
\vec{b}_1 = \begin{pmatrix} 1 \\ 1\\0 \end{pmatrix}, \quad \vec{b}_2 = \begin{pmatrix} 0\\1 \\ 1 \end{pmatrix}, \quad \vec{b}_3 = \begin{pmatrix} 0\\1 \\ 0 \end{pmatrix}.
\]
Then we have
\[
B^{-1} = \left[ \vec{b}_1 \quad \vec{b}_2 \quad \vec{b}_3\right] = \begin{pmatrix} 1 & 0 & 0 \\1 & 1 & 1 \\0&1&0\end{pmatrix},
\]
and thus
\[
\widehat{B} =\left[ B \quad B \quad  B \right]_{3}^{\text{diagonal}} \quad \text{with} \quad B = \begin{pmatrix} 1 & 0 & 0 \\0 & 0 & 1 \\2&1&2\end{pmatrix}.
\]

Based on the above construction, we obtain an \(\mathbb{F}_3\)-valued \((2, 1)\) linear secure network code \(\hat{\mathcal{C}}=\widehat{B}\cdot \mathcal{C}\) for \((\mathcal{N}, f, \mathcal{W}_r, \zeta)\), where the global encoding vector is \(\hat{g}_{e_j} = \widehat{B}\cdot g_{e_j}\) for \(1 \leq j \leq 21\), i.e.,
\[
\begin{aligned}
\hat{g}_{e_1} &= \left(\begin{smallmatrix} 2 \\ 1 \\ 2 \\ 0 \\ 0 \\ 0 \\ 0 \\ 0 \\ 0 \end{smallmatrix}\right), 
\hat{g}_{e_2} = \left(\begin{smallmatrix} 0 \\ 0 \\ 1 \\ 0 \\ 0 \\ 0 \\ 0 \\ 0 \\ 0 \end{smallmatrix}\right), 
\hat{g}_{e_3} = \left(\begin{smallmatrix} 1 \\ 0 \\ 2 \\ 0 \\ 0 \\ 0 \\ 0 \\ 0 \\ 0 \end{smallmatrix}\right), 
\hat{g}_{e_4} = \left(\begin{smallmatrix} 0 \\ 0 \\ 0 \\ 0 \\ 1 \\ 2 \\ 0 \\ 0 \\ 0 \end{smallmatrix}\right), 
\hat{g}_{e_5} = \left(\begin{smallmatrix} 0 \\ 0 \\ 0 \\ 2 \\ 2 \\ 0 \\ 0 \\ 0 \\ 0 \end{smallmatrix}\right), 
\hat{g}_{e_6} = \left(\begin{smallmatrix} 0 \\ 0 \\ 0 \\ 1 \\ 0 \\ 2 \\ 0 \\ 0 \\ 0 \end{smallmatrix}\right), 
\hat{g}_{e_7} = \left(\begin{smallmatrix} 0 \\ 0 \\ 0 \\ 0 \\ 0 \\ 0 \\ 0 \\ 2 \\ 1 \end{smallmatrix}\right), \\
\hat{g}_{e_8} & = \left(\begin{smallmatrix} 0 \\ 0 \\ 0 \\ 0 \\ 0 \\ 0 \\ 0 \\ 0 \\ 2 \end{smallmatrix}\right), 
\hat{g}_{e_9} = \left(\begin{smallmatrix} 0 \\ 0 \\ 0 \\ 0 \\ 0 \\ 0 \\ 2 \\ 1 \\ 1 \end{smallmatrix}\right), 
\hat{g}_{e_{10}} = g_{e_{13}} = g_{e_{14}} = \left(\begin{smallmatrix} 0 \\ 0 \\ 1 \\ 0 \\ 1 \\ 2 \\ 0 \\ 0 \\ 0 \end{smallmatrix}\right), 
\hat{g}_{e_{11}} = g_{e_{15}} = g_{e_{16}} = \left(\begin{smallmatrix} 1 \\ 0 \\ 2 \\ 0 \\ 0 \\ 0 \\ 0 \\ 2 \\ 1 \end{smallmatrix}\right), \\
\hat{g}_{e_{12}} &= g_{e_{17}} = g_{e_{18}} = \left(\begin{smallmatrix} 0 \\ 0 \\ 0 \\ 1 \\ 0 \\ 2 \\ 0 \\ 0 \\ 2 \end{smallmatrix}\right), 
\hat{g}_{e_{19}} = \left(\begin{smallmatrix} 0 \\ 1 \\ 2 \\ 0 \\ 1 \\ 2 \\ 0 \\ 2 \\ 1 \end{smallmatrix}\right), 
\hat{g}_{e_{20}} = \left(\begin{smallmatrix} 0 \\ 0 \\ 1 \\ 0 \\ 0 \\ 1 \\ 0 \\ 0 \\ 2 \end{smallmatrix}\right), 
\hat{g}_{e_{21}} = \left(\begin{smallmatrix} 1 \\ 0 \\ 2 \\ 1 \\ 0 \\ 2 \\ 2 \\ 0 \\ 1 \end{smallmatrix}\right).
\end{aligned}
\]
The encoding process of the code \(\hat{\mathcal{C}}\) is illustrated in \autoref{fig:base_code_f4},
where \(m_{ij}\) denotes the source information generated by the source node \(\sigma_i\) and \(k_i\) denotes the random key generated by \(\sigma_i\).
We can verify that the code \(\hat{\mathcal{C}}\) satisfies both the computability and security conditions.
Specifically, by adding the information on edges \(y_{e_{20}}, y_{e_{21}}\), we can obtain \(m_{11} + m_{21} + 2m_{31}\), and by adding \(y_{e_{19}}\) and \(y_{e_{20}}\), we can obtain \(m_{12} + m_{22} + 2m_{32}\). Furthermore, it is easy to see that the eavesdropper cannot obtain any information about the target function or the source information generated by $\sigma_{1}$
by eavesdropping on any single edge in the network.
\end{example}

\begin{figure}[htbp]
    \centering
    \begin{tikzpicture}[
        main node/.style={circle, draw, minimum size=12pt, inner sep=1pt},
        every edge/.style={draw, -Stealth, thick}
    ]
    \node[main node] (s1) at (-4.5, 4) {$\sigma_1$};
    \node[main node] (s2) at (0, 4)    {$\sigma_2$};
    \node[main node] (s3) at (4.5, 4)  {$\sigma_3$};
    
    \node[main node] (m1) at (-2, 2.5) {};
    \node[main node] (m2) at (0, 2.5)  {};
    \node[main node] (m3) at (2, 2.5)  {};

    \node[main node] (p1) at (-2, 0.5) {};
    \node[main node] (p2) at (0, 0.5)  {};
    \node[main node] (p3) at (2, 0.5)  {};

    \node[main node] (n1) at (-3, -1.5) {};
    \node[main node] (n2) at (0, -1.5)  {};
    \node[main node] (n3) at (3, -1.5)  {};
    
    \node[main node] (sink) at (0, -3) {$\rho$};

    \node[left=0.2cm of s1] {$(m_{11}, m_{12}, k_{1})$};
    \node[above=0.2cm of s2] {$(m_{21}, m_{22}, k_{2})$};
    \node[right=0.2cm of s3] {$(m_{31}, m_{32}, k_{3})$};

    \path (s1) edge node[above left=2pt] {$e_1$} node[below left=2pt,font=\footnotesize] {$2m_{11}+m_{12}+2k_{1}$} (n1);
    \path (s1) edge node[above left=2pt] {$e_2$} node[below right=2pt,font=\footnotesize] {$k_{1}$} (m1);
    \path (s1) edge node[above=1pt] {$e_3$} node[above =10pt,font=\footnotesize] {$m_{11}+2k_{1}$} (m2);
    
    \path (s2) edge node[above right =2pt] {$e_4$} node[right =-14pt,font=\footnotesize] {$m_{22}+2k_{2}$} (m1);
    \path (s2) edge[bend left=15] node[above right=4pt] {$e_5$} node[right=-3pt,font=\footnotesize] {$2m_{21}+2m_{22}$} (n2);
    \path (s2) edge node[above left=1pt] {$e_6$} node[ above=8pt,font=\footnotesize] {$m_{21}+2k_{2}$} (m3);
    
    \path (s3) edge node[above right=2pt] {$e_7$} node[below =-8pt,font=\footnotesize] {$2m_{32}+k_3$} (m2);
    \path (s3) edge node[above right=1pt] {$e_8$} node[below =2pt,font=\footnotesize] {$2k_3$} (m3);
    \path (s3) edge node[right=2pt] {$e_9$} node[below right=2pt,font=\footnotesize] {$2m_{31}+m_{32}+k_3$} (n3);
    
    \path (m1) edge node[left=2pt] {$e_{10}$} node[above left=2pt, font=\footnotesize] {$ k_1+m_{22}+2k_2$} (p1);
    \path (m2) edge node[left=2pt] {$e_{11}$} node[above =2pt, font=\footnotesize] {$m_{11}+2k_{1}+2m_{32}+k_{3}$} (p2);
    \path (m3) edge node[ right=2pt] {$e_{12}$} node[above right=2pt, font=\footnotesize] {$ m_{21}+2k_{2}+2k_{3}$} (p3);
    
    \path (p1) edge node[above left=2pt] {$e_{13}$} node[left=2pt] {$$} (n1);
    \path (p1) edge node[above left=4pt] {$e_{14}$} node[right=-5pt] {$$} (n2);
    \path (p2) edge node[above right=5pt] {$e_{15}$} node[below left=-3pt] {$$} (n1);
    
    \path (p2) edge node[above left=5pt] {$e_{16}$} node[below right=2pt] {$$} (n3);
    \path (p3) edge node[above right=5pt] {$e_{17}$} node[below left=2pt] {$$} (n2);
    \path (p3) edge node[above right=1pt] {$e_{18}$} node[left=2pt] {$$} (n3);
    
    \path (n1) edge node[left=2pt] {$e_{19}$} node[below left =5pt,font=\footnotesize] {$m_{12}+2k_{1}+m_{22}+2k_{2}+2m_{32}+k_{3}$} (sink);
    \path (n2) edge node[left=2pt] {$e_{20}$} node[above right=1pt,font=\footnotesize] {$k_{1}+k_{2}+2k_{3}$} (sink);
    \path (n3) edge node[right=2pt] {$e_{21}$} node[below right=5pt,font=\footnotesize] {$m_{11}+2k_{1}+m_{21}+2k_{2}+2m_{31}+k_{3}$} (sink);

    \end{tikzpicture}
    \caption{An $\mathbb{F}_3$-valued $(2, 1)$ linear secure network code for the secure model $(\mathcal{N}, f, \mathcal{W}_{r},\zeta)$, where $y_{e_{10}}=y_{e_{13}}=y_{e_{14}}$, $y_{e_{11}}=y_{e_{15}}=y_{e_{16}}$ and $y_{e_{12}}=y_{e_{17}}=y_{e_{18}}$.}
    \label{fig:base_code_f4}
\end{figure}
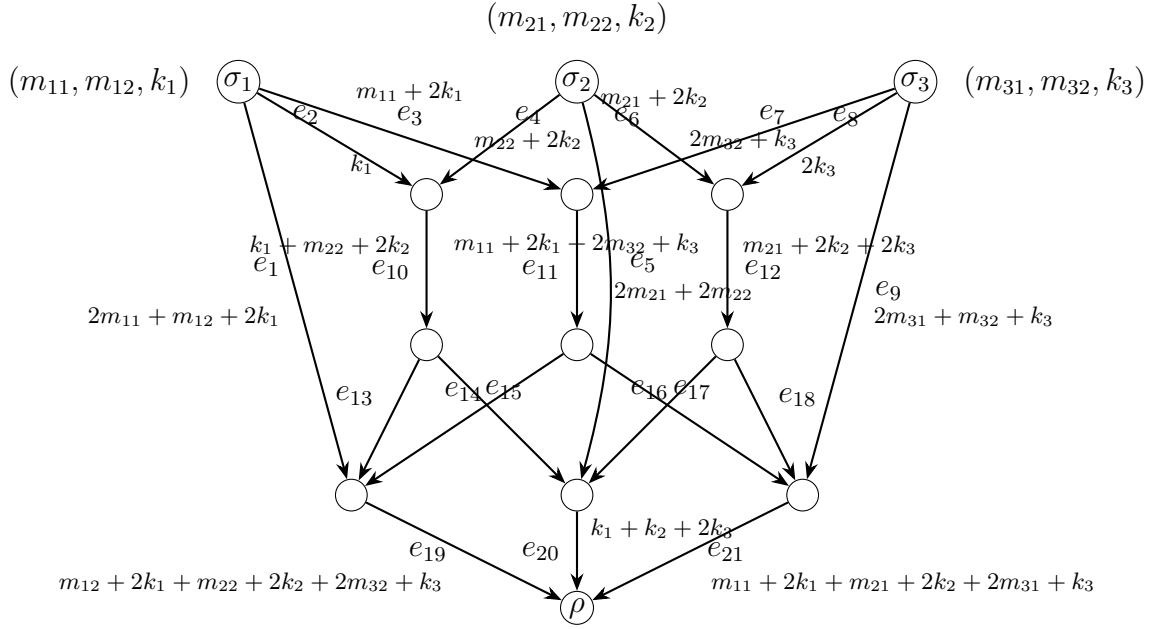

Remark that the code $\hat{\mathcal{C}}$ is not source secure.
For instance, the symbol transmitted on edge $e_5$ is given by $y_{e_5} = 2m_{21} + 2m_{22}$.
In addition, according to Theorem \ref{thm:q},
an $\mathbb{F}_q$-valued admissible $(2,1)$-linear secure network code for $(\mathcal{N}, f, \mathcal{W}_r, \zeta)$
can be constructed whenever $q > |s \cdot \widehat{W}'_r| = 45$.
Nevertheless, our explicit construction shows that $\mathbb{F}_3$ already suffices, i.e., $q=3$.
This illustrates that the bound on the field size presented in Theorem \ref{thm:q}
is only a sufficient condition and is far from necessary.

\section{Conclusion}\label{sec:conclusion}

In this paper, we investigate a general model for secure network function computation, in which both the target function $f$ and the security function $\zeta$ are arbitrary, and an eavesdropper can access any edge subset of size up to a given security level without obtaining any information about the security function.
Using information-theoretic tools, we establish a general upper bound on the secure computing capacity that is valid for arbitrary networks, arbitrary target and security functions, and arbitrary security levels.
In particular, when the target function is the algebraic sum and the security function corresponds to protecting all source messages, our general bound reduces to that in \cite{GBY2024}; when both functions are algebraic sums, it recovers the upper bound given in \cite{arxiv2025secure}.

Furthermore, we study the secure network function computation model where the target function is vector-linear and the security function requires secrecy of all source messages.
Since the corresponding upper bound on the secure computing capacity is graph-theoretic and inherently complicated, we derive a simplified form using order-theoretic methods.
Based on existing algorithms, we further propose an efficient algorithm to compute this simplified upper bound, with time complexity linear in the number of edges of the network.

Finally, we consider the setting where both the target function and the security function are vector-linear.
By characterizing the equivalent conditions for the computability and security of linear secure network codes, we develop two explicit code constructions and derive an upper bound on the minimum finite field size required for our constructions.
Accordingly, we obtain a nontrivial lower bound on the secure computing capacity for this general vector-linear setting.
We note that under the same setting, a tighter upper bound on the secure computing capacity has been established in the very recent work \cite{xu2026network}.

Several interesting and promising problems remain for future research.
First, we intend to design improved linear secure network code constructions that achieve smaller required field sizes or higher secure computing rates, which is of practical importance in secure network function computation.
Second, we plan to investigate the fundamental limits on random key resources for secure computation, including the minimum key size and minimum entropy required at each source node, so as to provide more practical design guidelines.
In the general framework of secure network function computation, we also aim to explore alternative security criteria to accommodate diverse application scenarios.

\ifCLASSOPTIONcaptionsoff
  \newpage
\fi

\bibliographystyle{IEEEtran}  
\bibliography{references}

\end{document}